\def\isanonymous{0}
\def\isdraft{0} 
\def\isfull{1}
\def\iswide{1} 
\def\isproofsinappendix{1} 
\def\issavingspace{1} 
\def\iscameraready{1} 

  \documentclass[11pt,letterpaper]{article}
  \pagestyle{plain}

\usepackage{ifthen}

\ifthenelse{\equal{\iswide}{1}}{\usepackage{fullpage}}{}

\usepackage[margin=1in]{geometry}
\usepackage[english]{babel}
\usepackage{inputenc}
\usepackage{amsmath,amssymb}
\usepackage{stmaryrd}
\usepackage{graphicx}
\usepackage{xcolor}
\usepackage[hyphens,spaces]{url}
\usepackage[inline]{enumitem}

\usepackage[pageanchor=false,pdfstartview=FitH,colorlinks,linkcolor=blue,filecolor=blue,citecolor=blue,urlcolor=blue]{hyperref}
\usepackage{xspace}
\usepackage{xparse}
\usepackage{xargs}
\usepackage{aliascnt}
\usepackage{breakcites} 
\usepackage{mathtools}
\usepackage[framemethod=tikz]{mdframed}
\usepackage{tikz}
\usetikzlibrary{shapes.geometric}
\usetikzlibrary{positioning,shapes.multipart,calc,arrows.meta}
\usepackage{tcolorbox}
\usepackage{booktabs}
\usepackage{subcaption}
\usepackage{bm}
\usepackage{threeparttable}
\usepackage{multirow}
\usepackage{cleveref}
\newsavebox\MBox

\newcommand{\ifanon}[2]{
\ifthenelse{\equal{\isanonymous}{1}}
{{#1}}
{{#2}}
}
\newcommand{\ifdraft}[2]{
\ifthenelse{\equal{\isdraft}{1}}
{#1}
{#2}
}
\newcommand{\iffull}[2]{\ifthenelse{\equal{\isfull}{1}}{{#1}}{{#2}}}

\newcommand{\ifproofsinappendix}[2]{\ifthenelse{\equal{\isproofsinappendix}{1}}{{#1}}{{#2}}}
\newcommand{\ifsavingspace}[2]{\ifthenelse{\equal{\issavingspace}{1}}{{#1}}{{#2}}}
\newcommand{\ifcameraready}[2]{\ifthenelse{\equal{\iscameraready}{1}}{{#1}}{{#2}}}

\ifdraft{
\usepackage[]{todonotes}
disable
}{

}
\usepackage[colorinlistoftodos]{todonotes}
\usepackage{comment}

\let\proof\relax\let\endproof\relax 
\usepackage{amsthm}



\definecolor{mycolor}{rgb}{48, 186, 134}


\mathchardef\mhyphen="2D





\newcommand{\ek}{\mathsf{ek}}

\newcommand{\td}{\mathsf{td}}

\newcommand{\LFE}{\mathsf{LFE}}
\newcommand{\MOLE}{\mathsf{MOLE}}

\newcommand{\getsr}{\overset{\raisebox{-1pt}{$\scriptscriptstyle \$$}}{\gets}}
\newcommand{\sample}{\getsr}

\newcommand{\Pre}{\mathsf{Pre}}
\newcommand{\Post}{\mathsf{Post}}
\newcommand{\BOTE}{\mathsf{hOTE}}

\newcommand{\OTE}{\mathsf{OTE}}
\newcommand{\GSW}{\mathsf{GSW}}

\newcommand{\N}{\mathbb{N}}
\newcommand{\Z}{\mathbb{Z}}


\newcommand{\poly}{\mathsf{poly}}
\newcommand{\polylog}{\mathsf{polylog}}
\newcommand{\negl}{\mathsf{negl}}

\renewcommand{\sec}{\ensuremath{\lambda}}
\newcommand{\secpar}{\sec}






\newcommand{\ct}{\mathsf{ct}}

\newcommand{\aux}{\mathsf{aux}}







\newcommand{\Sim}{\mathsf{Sim}}
\newcommand{\Adv}{\mathcal{A}}

\newcommand{\RTDH}{\mathsf{RTDH}}
\newcommand{\BV}{\mathsf{GSW}}
\newcommand{\Compress}{\mathsf{Compress}}
\newcommand{\ck}{\mathsf{ck}}


\newcommand{\algofont}[1]{\ensuremath{\mathsf{#1}}}

\newcommand{\tpk}{{\algofont{tpk}}}
\newcommand{\Enc}{{\algofont{Enc}}}
\newcommand{\Dec}{{\algofont{Dec}}}

\newcommand{\Setup}{{\algofont{Setup}}}
\newcommand{\Gen}{{\algofont{Gen}}}

\newcommand{\Eval}{\hyperref[alg:eval]{\algofont{Eval}}\xspace}

\newcommand{\Hash}{{\algofont{Hash}}}



\newcommand{\Expand}{\algofont{Expand}}













\newcommand{\id}{\ensuremath{\mathbf{I}}}
\newcommand{\mpk}{\mathsf{mpk}}



\newcommand{\pk}{{\mathsf{pk}}}
\newcommand{\sk}{{\mathsf{sk}}}




\newcommand{\hk}{\ensuremath{\mathsf{hk}}}















\newcommand{\seclab}[1]{\label{sec:#1}}
\newcommand{\secref}[1]{Section~\ref{sec:#1}}

\newcommand{\thmlab}[1]{\label{thm:#1}}
\newcommand{\thmref}[1]{Theorem~\ref{thm:#1}}

\newcommand{\defref}[1]{Definition~\ref{def:#1}}


\ifdefined\isllncs\else
\newcommand\bbbone{1}
\fi




\ifdefined\isllncs
    \newtheoremstyle{ihatelncs}
    {\topsep}                
    {}                
    {\itshape}        
    {}                
    {}       
    {}               
    { }               
    {{\bfseries \thmname{#1}\thmnumber{ #2}.}\thmnote{ (#3){\bfseries.}}~}                
    \theoremstyle{ihatelncs}

    \newtheorem{theoreml}{Theorem}[section]
    \renewenvironment{theorem}{\begin{theoreml}}{\end{theoreml}}

    \newtheorem{lemmal}[theoreml]{Lemma}
    \renewenvironment{lemma}{\begin{lemmal}}{\end{lemmal}}

    \newtheorem{claiml}[theoreml]{Claim}
    \renewenvironment{claim}{\begin{claiml}}{\end{claiml}}

    \newtheorem{corollaryl}[theoreml]{Corollary}
    \renewenvironment{corollary}{\begin{corollaryl}}{\end{corollaryl}}

    \newtheorem{propositionl}[theoreml]{Proposition}
    \renewenvironment{proposition}{\begin{propositionl}}{\end{propositionl}}

    \newtheorem{conjecturel}[theoreml]{Conjecture}
    \renewenvironment{conjecture}{\begin{conjecturel}}{\end{conjecturel}}

    \newtheorem{definitionl}[theoreml]{Definition}
    \renewenvironment{definition}{\begin{definitionl}}{\end{definitionl}}

    \newtheorem{factl}[theoreml]{Fact}
    \newenvironment{fact}{\begin{factl}}{\end{factl}}

    \newtheorem{remarkl}[theoreml]{Remark}
    \renewenvironment{remark}{\begin{remarkl}}{\end{remarkl}}

    \newtheorem{examplel}[theoreml]{Example}
    \renewenvironment{example}{\begin{examplel}}{\end{examplel}}

    \newaliascnt{boxctr}{theoreml}
    
    \makeatletter
    \let\c@table\c@theoreml
    \makeatother
    
\else
    \newtheorem{theorem}{Theorem}[section]
    \newtheorem{lemma}[theorem]{Lemma}

    \newtheorem{definition}[theorem]{Definition}

    \newaliascnt{boxctr}{theorem}
    \makeatletter
    \let\c@table\c@theorem
    \makeatother
    
\fi

\makeatletter
\newtheorem*{rep@theorem}{\rep@title}
\newcommand{\newreptheorem}[2]{%
\newenvironment{rep#1}[1]{%
 \def\rep@title{#2 \ref{##1}}%
 \begin{rep@theorem}}%
 {\end{rep@theorem}}}
\makeatother

\newreptheorem{theorem}{Theorem}
\newreptheorem{lemma}{Lemma}
\newreptheorem{corollary}{Corollary}


\makeatletter
\newcommand*\ifcounter[1]{%
  \ifcsname c@#1\endcsname
    \expandafter\@firstoftwo
  \else
    \expandafter\@secondoftwo
  \fi
}
\makeatother

\newcounter{realfirsthybrid}
\newcounter{realhybrids}
\setcounter{realhybrids}{1}
\newcounter{realprevhybrid}
\setcounter{realprevhybrid}{0}
\newcounter{hybrids}
\newcounter{prevhybrid}

\newcommandx{\linkthishybrid}[1][1={}]{\phantomsection\stepcounter{realprevhybrid}\refstepcounter{realhybrids}\setcounter{realfirsthybrid}{\arabic{realhybrids}}\label{hybauto:\arabic{realhybrids}}\thishybrid[#1]}
\DeclareMathAlphabet\mathbfcal{OMS}{cmsy}{b}{n}

\newtheoremstyle{hybrid}
  {1em}
  {1em}
  {}
  {0pt}
  {\bfseries}
  {}
  {0pt}
  {\stepcounter{realprevhybrid}\stepcounter{prevhybrid}\stepcounter{hybrids}\refstepcounter{realhybrids}\thmname{#1} \thishybrid[#3]. \label{hybauto:\arabic{realhybrids}}}
\theoremstyle{hybrid}

\newcommandx{\thishybrid}[1][1={}]{\ensuremath{\hyperref[hybauto:\arabic{realhybrids}]{\ifthenelse{\equal{#1}{}}{\hyb_{\arabic{hybrids}}}{#1}}}\xspace}
\newcommandx{\thathybrid}[2][2={}]{\ensuremath{\hyperref[hybauto:\getrefnumber{real#1}]{\ifthenelse{\equal{#2}{}}{\hyb_{\getrefnumber{#1}}}{#2}}}\xspace}
\newcommandx{\firsthybrid}[1][1={}]{\ensuremath{\hyperref[hybauto:\arabic{realfirsthybrid}]{\ifthenelse{\equal{#1}{}}{\hyb_0}}{#1}}\xspace}
\newcommandx{\prevhybrid}[1][1={}]{\ensuremath{\hyperref[hybauto:\arabic{realprevhybrid}]{\ifthenelse{\equal{#1}{}}{\hyb_{\arabic{prevhybrid}}}{#1}}}\xspace}


\newtheoremstyle{protosec}
  {0pt}
  {0pt}
  {}
  {0pt}
  {\bfseries}
  {:}
  { }
  {#3}
\theoremstyle{protosec}

\newtheoremstyle{algorithm}
  {\topsep}
  {0pt}
  {}
  {0pt}
  {\bfseries}
  { }
  { }
  {\strut\iffalse\hspace*{-0.25em}\rlap{\colorbox{white}{\thmname{#1}\thmnumber{ #2.}\thmnote{ #3}~}}\fi}

\theoremstyle{algorithm}

\newaliascnt{intalg}{boxctr}

\aliascntresetthe{intalg}

\newaliascnt{intprog}{boxctr}

\aliascntresetthe{intprog}

\newaliascnt{intexpt}{boxctr}

\aliascntresetthe{intexpt}

\newaliascnt{intfunc}{boxctr}

\aliascntresetthe{intfunc}

\newaliascnt{intprot}{boxctr}

\aliascntresetthe{intprot}

\newaliascnt{intsimu}{boxctr}

\aliascntresetthe{intsimu}

\newaliascnt{intgame}{boxctr}

\aliascntresetthe{intgame}

\makeatletter

\makeatother

\mdfdefinestyle{barstyle}{
  hidealllines=false,
  innerleftmargin=0.5em,
  innerrightmargin=1em,
  innertopmargin=0.5em,
  innerbottommargin=0.5em,
  skipabove=0.2em,
  skipbelow=0.35em,
}

\mdfdefinestyle{barstylewtitle}{
  rightline=false,
  bottomline=false,
  topline=false,
  innerleftmargin=1em,
  innerrightmargin=0em,
  innertopmargin=0.8em,
  innerbottommargin=0em,
  skipabove=0pt,
  skipbelow=0.5em,
  frametitleaboveskip=0pt,
  frametitlebackgroundcolor=white,
  frametitlealignment=\raggedright
}

\mdfdefinestyle{boxstylewtitle}{
  hidealllines=false,
  innerleftmargin=1em,
  innerrightmargin=1em,
  innertopmargin=0.8em,
  innerbottommargin=1em,
  skipabove=0pt,
  skipbelow=0.5em,
  frametitleaboveskip=0pt,
  frametitlebackgroundcolor=white,
  frametitlealignment=\raggedright
}

\newenvironment{protocol}[1][]{\Needspace{4\baselineskip}\par\intprot[#1]~\vspace{-1em}\begin{mdframed}[style=boxstylewtitle,frametitle={\hspace*{-1.2em}\smash{\raisebox{-0.25em}{\rlap{\colorbox{white}{Construction \thesection.\arabic{intprot}. #1\strut~}}}}}]}{\end{mdframed}\endintprot}

\ifdefined\isllncs{
    
    \makeatletter
}\fi

\makeatletter
\newcommand{\subalign}[1]{%
  \vcenter{%
    \Let@ \restore@math@cr \default@tag
    \baselineskip\fontdimen10 \scriptfont\tw@
    \advance\baselineskip\fontdimen12 \scriptfont\tw@
    \lineskip\thr@@\fontdimen8 \scriptfont\thr@@
    \lineskiplimit\lineskip
    \ialign{\hfil$\m@th\scriptstyle##$&$\m@th\scriptstyle{}##$\hfil\crcr
      #1\crcr
    }%
  }%
}
\makeatother

\title{Succinct Oblivious Tensor Evaluation and Applications:\\ Adaptively-Secure Laconic Function Evaluation and Trapdoor Hashing for All Circuits}

\ifanon{\author{}}{
  \author{
        Damiano Abram\\\texttt{abram.damiano@protonmail.com}\\University of Edinburgh
        \and
        Giulio Malavolta\\\texttt{giulio.malavolta@hotmail.it}\\ Bocconi University
        \and
        Lawrence Roy\\\texttt{ldr709@gmail.com}\\IBM Research Zürich
      }
}
\date{}

\begin{document}
\sloppy
\maketitle
\begin{abstract}
    We propose the notion of succinct oblivious tensor evaluation (OTE), where two parties compute an additive secret sharing of a tensor product of two vectors $\mathbf{x} \otimes \mathbf{y}$, exchanging two simultaneous messages. Crucially, the size of both messages and of the CRS is independent of the dimension of $\mathbf{x}$.
    We present a construction of OTE with optimal complexity from the standard learning with errors (LWE) problem. Then we show how this new technical tool enables a host of cryptographic primitives, all with security reducible to LWE, such as:
    \begin{itemize}
        \item Adaptively secure laconic function evaluation for depth-$D$ functions $f:\{0, 1\}^m\rightarrow\{0, 1\}^\ell$ with communication $m+\ell+ D\cdot\poly(\sec)$.
        \item A trapdoor hash function for all functions.
        \item An (optimally) succinct homomorphic secret sharing for all functions.
        \item A rate-$1/2$ laconic oblivious transfer for batch messages, which is best possible.
    \end{itemize}
    In particular, we obtain the first laconic function evaluation scheme that is adaptively secure from the standard LWE assumption, improving upon Quach, Wee, and Wichs (FOCS 2018). 
    As a key technical ingredient, we introduce a new notion of \emph{adaptive lattice encodings}, which may be of independent interest.
\end{abstract}

\newpage
\tableofcontents

\section{Introduction}

Consider the scenario where Alice holds a \emph{long} vector $\mathbf{x}$, Bob holds a \emph{smaller} secret vector $\mathbf{y}$ and, after a single round of simultaneous messages, they should be able to locally compute an additive secret share of the tensor product $\mathbf{x} \otimes \mathbf{y}$ while preserving the privacy of $\mathbf{y}$. That is, after one simultaneous round of messages, Alice computes $\alpha$ and Bob computes $\beta$ such that:
\[
\alpha + \beta = \mathbf{x}\otimes \mathbf{y}.
\]
We refer to this problem as \emph{non-interactive oblivious tensor evaluation} (NI-OTE). In this work, we are interested in the communication complexity of secure NI-OTE, i.e., the minimum size of the messages needed in order to compute a correct additive secret sharing, while preserving the privacy of $\mathbf{y}$. While one may intuitively expect that Alice and Bob's messages should be long enough to fully specify both the vectors, this is in fact not so. Counterintuitively, we show that it is possible to complete the above protocol with communication complexity \emph{logarithmic} in the dimensions of the input $\mathbf{x}$.

The objective of this work is to construct explicit protocols for NI-OTE with minimal communication, and to explore the cryptographic consequences of this primitive.

\subsection{Our Results}

Our main technical contribution is a protocol for NI-OTE with minimal communication complexity, where the security is proven against the standard learning with errors (LWE) assumption \cite{STOC:Regev05}.
We prove this result in two steps: First, we construct an elementary (half-succinct) NI-OTE protocol where only the message of one party is short, whereas the message of the other party can depend arbitrarily on $|\mathbf{x}|$. Then we show a generic \emph{bootstrapping} procedure that makes the scheme fully succinct, i.e., the messages of both parties are short.
Overall, our main result is captured by the following informal theorem statement (treating the security parameter as constant).
\begin{theorem}[Informal]
    If the LWE problem is hard, then there exists an NI-OTE protocol for $\mathbf{x}\in\Z_q^m$ and $\mathbf{y}\in\Z_q^\ell$ with communication complexity $\ell\cdot \poly(\sec) + \poly(\sec, \log m)$ and CRS of size $\poly(\sec, \log m)$.
\end{theorem}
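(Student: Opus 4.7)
The plan is to follow the two-phase blueprint announced in the introduction. First I would construct a \emph{half-succinct} NI-OTE in which Bob's message has size $\ell\cdot\poly(\sec)$ but Alice's may grow polynomially with $m$, and then apply a generic compression step that shrinks Alice's message to $\poly(\sec,\log m)$ without blowing up Bob's.

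For the half-succinct construction I would rely on LWE-based linearly homomorphic encryption. Bob samples a short LWE secret, encrypts his vector $\mathbf{y}$ entry-wise into a ciphertext $\ct_\mathbf{y}$ of size $\ell\cdot\poly(\sec)$, and publishes it. Alice samples her share $\boldsymbol{\alpha}\in\Z_q^{m\ell}$ pseudorandomly from a short PRG seed and homomorphically evaluates the linear map $\mathbf{y}\mapsto \mathbf{x}\otimes\mathbf{y}-\boldsymbol{\alpha}$ on $\ct_\mathbf{y}$. The resulting ciphertext encrypts $\boldsymbol{\beta}\coloneqq\mathbf{x}\otimes\mathbf{y}-\boldsymbol{\alpha}$; it is Alice's message, and Bob decrypts it to recover $\boldsymbol{\beta}$. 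The sharing relation $\boldsymbol{\alpha}+\boldsymbol{\beta}=\mathbf{x}\otimes\mathbf{y}$ follows from correctness of the homomorphic evaluation, and privacy of $\mathbf{y}$ reduces to LWE. In this scheme, Alice's message has size roughly $m\ell\cdot\poly(\sec)$, so the protocol is succinct only on Bob's side.

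The compression step replaces Alice's long message with a short digest using the \emph{adaptive lattice encodings} advertised in the introduction as the central gadget. Intuitively, Alice's long message is the output of a \emph{linear} function $L_{\mathbf{x}}$ that depends only on the public vector $\mathbf{x}$, so Alice can commit to $\mathbf{x}$ via a succinct hash $h(\mathbf{x})$ and publish a tiny ``opening hint'' under the adaptive encoding. Bob then uses $h(\mathbf{x})$ together with the hint to evaluate $L_\mathbf{x}$ \emph{inside} the encoding and to extract $\boldsymbol{\beta}$ directly, never materialising the long plaintext. Plugging in the parameters of the adaptive lattice encoding gives Alice-side communication and CRS of size $\poly(\sec,\log m)$, while Bob's message remains $\ell\cdot\poly(\sec)$, matching the target complexity.

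The main technical hurdles I anticipate are: \textbf{(i)} controlling noise growth across the composition of the adaptive lattice encoding with the homomorphic evaluation, so that Bob's final decryption recovers $\boldsymbol{\beta}$ as an \emph{exact} element of $\Z_q^{m\ell}$ (since the output is an additive sharing, a merely ``close'' answer would destroy correctness), which I would handle by choosing a superpolynomial modulus-to-noise ratio and exploiting the asymmetric noise growth of GSW-style evaluation; and \textbf{(ii)} proving security when $\mathbf{x}$ is chosen adaptively, which forces the reduction to simulate the hash and opening hint before seeing the LWE challenge --- this is precisely where the \emph{adaptive} property of the new lattice encoding is essential. With these two pieces in hand, security is a short hybrid argument that replaces each use of the adaptive encoding by its simulator and finally invokes LWE to hide $\mathbf{y}$, and correctness follows by a level-by-level accounting of the noise.
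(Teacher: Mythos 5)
Your two-phase plan (half-succinct first, then compress) matches the paper's outline, but the compression step you propose is circular with respect to the paper's architecture, and this is a fatal flaw. You want to invoke ``adaptive lattice encodings'' as a black-box compressor that lets Alice replace her long message by a short hash of $\mathbf{x}$ plus a tiny opening hint. However, in the paper, adaptive lattice encodings are the \emph{same size} as ordinary $\mathsf{BGG}^{+}$ encodings --- proportional to the bit-length of the encoded vector --- and the thing that makes them compressible is precisely a succinct NI-OTE used as a subroutine inside the compression algorithm (Section~\ref{sec:comp_enc}, Construction~\ref{prot:R1BGG}). The dependency in the paper runs Half-Succinct OTE $\rightarrow$ Succinct OTE $\rightarrow$ Compressed Lattice Encodings, so you cannot use the third box to build the second without already having it. Without that subroutine, adaptive lattice encodings offer no compression at all, and the opening hint you describe would still have to carry $\Omega(m)$ bits of information for correctness (since Bob, who does not know $\mathbf{x}$, must reconstruct $L_\mathbf{x}(\ct_\mathbf{y})$ exactly).

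The fix you are missing is that the paper does not compress Alice's message by a general-purpose encoding trick; it bootstraps the half-succinct scheme \emph{against itself} by exploiting a structural feature your half-succinct candidate lacks. In Construction~\ref{prot:HCVOLE}, the hasher's message is already the short one (an SIS hash $\mathbf{d}=\mathbf{A}\mathbf{x}\in\Z_q^n$), the encoder's message is the long one, and crucially the encoder's share is a \emph{publicly derivable bilinear function} of the digest $\mathbf{d}$ and the short encoder secret $\phi=\mathbf{s}$, namely $\Enc\Eval(\pk,d,\phi)=\mathbf{P}\cdot(\mathbf{d}\otimes\phi\otimes\mathbf{g}_q^\intercal)$. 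This lets the parties split $\mathbf{x}$ into blocks, replace the block digests by a secret sharing of (block digests)$\otimes\phi$, and \emph{recursively run a fresh half-succinct OTE on that tensor}: the concatenated digest shrinks by a factor $t$ each level while the encoder secret stays the same size (because $\mathbf{S}$ is regenerated pseudorandomly from a short $\mathbf{s}$ using LWE against a binary matrix $\mathbf{B}$). After $O(\log m)$ rounds Alice's total communication is $\poly(\sec,\log m)$ and Bob sends one encoding per round, of size independent of $m$. Your homomorphic-encryption-based half-succinct scheme puts the long message on Alice's side and gives no bilinear structure in the encoder evaluation, so the recursion has no handle to grab onto. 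Separately, your concern (ii) about adaptive choice of $\mathbf{x}$ is moot at this stage: the NI-OTE definition requires only encoder privacy (hiding $\mathbf{y}$), and no adaptive-input attack arises; adaptivity only becomes relevant much later in the LFE construction.
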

{Notice that we consider the problem in its strongest form, where Alice and Bob send each other a single, simultaneous message. However, outside the realm of obfuscation, it was unknown how to construct OTE with the same parameters, even if interaction is allowed.}
Besides being a primitive of independent interest, we show that the existence of our succinct NI-OTE protocol has surprising applications in cryptography. 

\paragraph{Application I: Trapdoor Hash Functions.}
For starters, we show how succinct NI-OTE, combined with recent results in laconic function evaluation \cite{FOCS:QuaWeeWic18,FOCS:HsiLinLuo23,C:DHMWW24}, enables a construction of a trapdoor hash (TDH) function \cite{C:DGIMMO19} for all functions, or even RAM programs, from Ring-LWE. 

{In a TDH, Alice holds a function $f$ and Bob holds an input $x$. They simultaneously exchange a message depending only on their own input and, in a second stage, they can locally compute an additive secret sharing of $f(x)$. Importantly, one wants Alice's message to be short. In our TDH, the} size of the hash is constant, and the size of the encoding key depends only on the {Turing machine} description of the function $f$, which is optimal. {To the best of our knowledge, prior constructions of TDHs \cite{C:DGIMMO19,C:RoySin21} only supported linear functions.}
This result is summarized by the following (informal) theorem statement.
\begin{theorem}[Informal]
    If the LWE problem is hard, then there exists a TDH for functions with depth $D$, where the size of the encoding is bounded by {$|f|\cdot \poly(\sec, D)$}. Additionally, assuming the hardness of \emph{circular} LWE, we obtain a bound on the size of the encodings of $|f|\cdot \poly(\sec)$.
\end{theorem}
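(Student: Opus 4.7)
The plan is to combine succinct NI-OTE with adaptively-secure laconic function evaluation (LFE), the latter being obtained separately as an application of NI-OTE earlier in this paper. Recall that LFE enables Bob to compress a circuit $f$ into a short digest $d$ so that, given $d$, any party can encrypt $x$ into a ciphertext $c$ from which the holder of $f$ recovers $f(x)$; for the adaptively-secure LFE constructed here, $|c|$ grows as $(|x|+D)\cdot\poly(\sec)$. The goal in building a TDH is to further compress this ciphertext, on Alice's side, into a hash $h$ of size $\poly(\sec)$, independent of $|x|$, and this is precisely what succinct NI-OTE enables.

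The key technical observation is that in recent lattice-based LFE constructions \cite{FOCS:QuaWeeWic18,FOCS:HsiLinLuo23,C:DHMWW24}, the encryption of $x$ has the form $A_d\cdot x + \mathsf{mask}$, where $A_d$ is a public matrix depending only on the digest $d$ and the CRS, and $\mathsf{mask}$ is a pseudorandom term that Bob can remove using his LFE trapdoor. The product $A_d\cdot x$ is bilinear in the long vector $x$ (Alice's input) and in the columns of $A_d$ (which Bob can tabulate on his side), which exactly matches the tensor-product template of NI-OTE. Hence Alice's NI-OTE message on input $x$, of size $\poly(\sec)$, can serve as the TDH hash, while Bob's NI-OTE message, which depends on the columns of $A_d$, is baked into the encoding key alongside the LFE digest and trapdoor. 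Decoding reconstructs an additive secret sharing of $A_d\cdot x$ from the two NI-OTE messages, subtracts the mask using the LFE trapdoor, and evaluates $f$ on the recovered LFE ciphertext; by construction the encoding key has size $(|f|+D)\cdot\poly(\sec)$, dominated by the representation of $A_d$.

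Security would follow via a straightforward hybrid argument: NI-OTE input-hiding lets us replace Alice's message by a simulated one independent of $x$, at which point the adaptive semantic security of LFE hides $x$. The main obstacle will be verifying that the LFE ciphertext can indeed be cast as a publicly-known offset of a tensor product $\mathbf{x}\otimes\mathbf{y}$, with $\mathbf{y}$ depending only on the digest and CRS, and that the additive shares produced by NI-OTE recombine cleanly with the mask-subtraction step; the modular structure of recent LFE schemes is tailored to this form, so this reduces to a careful but essentially mechanical check. Finally, the improved bound of $|f|\cdot\poly(\sec)$ under circular LWE follows by instantiating the underlying LFE with a circular-secure variant that absorbs the depth-$D$ overhead into the CRS via the standard bootstrapping-style trick, exactly as done for our matching adaptively-secure LFE corollary.
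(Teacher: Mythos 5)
Your plan has a real gap at the step you flag as ``a careful but essentially mechanical check.'' The observation that the LFE encoding of $x$ has the form $A_d\cdot x + \mathsf{mask}$ with $A_d$ a \emph{public} matrix derived only from the digest and CRS is not true for the lattice-based LFE schemes you cite. In those schemes the encoding of $x$ is a $\mathsf{BGG^+}$-style encoding, $\mathbf{s}^\intercal(\mathbf{A}_i - x_i\mathbf{G}) + \mathbf{e}_i^\intercal$: the coefficient multiplying $x_i$ is $-\mathbf{s}^\intercal\mathbf{G}$, which is determined by the encoder's fresh LWE secret $\mathbf{s}$, not by the digest or CRS. So the bilinear structure that would let Alice hash $\mathbf{x}$ and Bob encode ``columns of $A_d$'' simply is not there.

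Even if you patch this and let Bob encode $\mathbf{y} := -\mathbf{s}^\intercal\mathbf{G}$ via NI-OTE (which he can, since he picks $\mathbf{s}$) and send the mask $\{\mathbf{s}^\intercal\mathbf{A}_i + \mathbf{e}_i^\intercal\}_i$ in the clear, you end up with the two parties holding \emph{additive secret shares} of the LFE pre-encoding of $\mathbf{x}$, not the pre-encoding itself. The step that produces $f(\mathbf{x})$ from that pre-encoding is the non-linear $\mathsf{BGG^+}$/GSW homomorphic evaluation $\LFE.\Eval(E,\psi)$, which cannot be performed on additive shares. So the decoding step ``evaluates $f$ on the recovered LFE ciphertext'' has no party who can actually run it.

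The paper routes around both issues by not trying to compress the input-dependent pre-encoding at all. It first builds a \emph{reverse} TDH (Section~\ref{sec:rtdh}): the LFE is re-cast so that the ciphertext splits into a pre-encoding $E$ (input-dependent, sent in the clear) and a \emph{post-encoding}, which is a genuine bilinear product $\mathbf{s}^\intercal\mathbf{A}_f\mathbf{t}$ between the digest matrix $\mathbf{A}_f$ (long side, depends on the function) and the LFE secret $\mathbf{s}$ (short side, chosen by the pre-encoder). The succinct NI-MOLE is applied only to that post-encoding, yielding additive shares that recombine linearly with the rest of decoding after a rounding. The regular TDH for functions then follows by the universal-circuit trick: Alice hashes $U_\mathbf{x}$ (the universal circuit with $\mathbf{x}$ hardwired) via the reverse TDH, and Bob encodes $f$ as the reverse-TDH \emph{input}, so that $\ek$ has size $|f|\cdot\poly(\sec)$ from the in-the-clear pre-encoding plus $D\cdot\poly(\sec)$ from depth (and the $D$ term disappears with a circular-LWE-based LFE exactly as you suggest at the end). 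You need this reverse-TDH detour precisely because the pre-encoding is not linear in $\mathbf{x}$ in any way that NI-OTE can exploit.
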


\paragraph{Application II: Succint Homomorphic Secret Sharing and More.}
As a direct consequence of the above result, we obtain a new protocol homomorphic secret sharing (HSS) scheme \cite{C:BoyGilIsh16}. {In this work we consider the (stronger) public-key variant of the primitive: Suppose that Alice and Bob respectively hold inputs $\mathbf{x}$ and $\mathbf{y}$ and they want to evaluate the function $f$. Alice and Bob send each other (simultaneously) a single message, dependent on their input. Based on some local computation, Alice and Bob obtain an additive secret sharing of $f(\mathbf{x}, \mathbf{y})$.}

{In these settings, we obtain a \emph{succinct} \cite{EC:AbrRoySch24} protocol for all functions, where the communication complexity is logarithmic in Alice's input} $\mathbf{x}$, which is optimal. The only prior work on succinct HSS \cite{EC:AbrRoySch24} only supported $\mathsf{NC}_1$ circuits and had communication complexity proportional to $\lvert \mathbf{x}\rvert^\varepsilon$, for some $\varepsilon\in O(1)$. In addition, we obtain a new \emph{batched} laconic oblivious transfer protocol \cite{C:CDGGMP17}, with constant-size receiver's message and with rate $1/2$, which is best possible. 

We refer the reader to Section \ref{sec:apps} for a more detailed and precise discussion on these primitives, along with additional applications of succinct NI-OTE such as spooky encryption, and pseudorandom correlation generators.

\paragraph{Application III: Laconic Function Evaluation.}
Finally, we show how to leverage succinct NI-OTE to construct a new laconic function evaluation (LFE) \cite{FOCS:QuaWeeWic18} protocol. {An LFE is a two-party protocol where Alice holds a function $f$ and Bob holds an input $x$. Alice sends a \emph{hash} of her function $f$ to Bob, who computes an encoding of his input, depending on the hash sent by Alice. From Bob's message, Alice can recover $f(x)$ but nothing more. We propose the first LFE protocol that is} simultaneously:
\begin{itemize}
    \item \emph{Adaptively secure}: The attacker can choose the input adaptively, possibly depending on the public parameters.
    \item \emph{Rate-1}: The size of the encoding equals the size of the input, plus the size of the output, plus an additive factor.
\end{itemize}
Prior to our work, even constructing LFE with either of the two properties was considered an open problem. 
The question of adaptively-secure LFE from LWE was raised in \cite{FOCS:QuaWeeWic18}, where they proposed a construction provable against the \emph{adaptive} LWE assumption, whereas our construction is adaptively secure against the \emph{standard} LWE assumption. {This is not a cosmetic improvement: We also show a counterexample against the adaptive LWE assumption. This translates into an adaptive attack against \cite{FOCS:QuaWeeWic18}, underscoring the need for constructions proven adaptively secure against standard assumptions.}

The question of rate-1 LFE was considered in \cite{C:Wee24}, where they proposed a construction from $\ell$-succinct LWE, a recently-introduced variant of the LWE assumption. We improve upon this work by relying only on the standard LWE problem. Overall, our results can be summarized as follows:
\begin{theorem}[Informal]
     If the LWE problem is hard, there exists an adaptively secure LFE for depth-$D$ functions $f:\{0, 1\}^m\rightarrow\{0, 1\}^\ell$ with communication $m+\ell+ D\cdot \poly(\sec)$.
\end{theorem}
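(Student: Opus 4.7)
The plan is to combine our fully succinct \OTE\ protocol with a GSW-style homomorphic hashing of $f$, and with a new notion of \emph{adaptive lattice encodings}, to obtain an LFE that is simultaneously rate-$1$ and adaptively secure from plain \LWE. Concretely, I would follow the hash-then-encode paradigm of Quach--Wee--Wichs: the function $f$ is compressed layer by layer into a short digest $\hk$; to encode $x\in\bit^m$ the sender commits to $x$ via a \GSW-style encoding and the receiver homomorphically evaluates $f$ gate by gate against $\hk$. In the vanilla template this yields a ciphertext of size $\ell\cdot\poly(\sec,D)$.

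To collapse the output to rate $1$, I would view the $\ell$ masked \GSW\ output slots as an instance in which the sender holds a short masking vector $\mathbf{y}$ while the receiver holds the long tensor of homomorphically evaluated ciphertexts; invoking fully succinct \OTE\ produces additive shares of $\mathbf{x}\otimes\mathbf{y}$, from which the masked output bits are recovered within $\ell+\poly(\sec)$ bits. Adding the $m$-bit commitment to $x$ and the per-layer $\poly(\sec)$ noise-growth overhead of \GSW\ evaluation over the $D$ levels yields total encoding size $m+\ell+D\cdot\poly(\sec)$, matching the bound in the statement.

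For adaptive security, I would run a hybrid argument that gradually replaces the real encoding by a simulation produced from $f(x)$ alone. The technical engine is adaptive lattice encodings: the simulator samples the CRS (in particular the matrices underlying the \OTE\ CRS and the \GSW\ hash tree) together with a trapdoor that, after the adversary reveals its adaptively chosen $x$, lets it program the encoding so that the homomorphic evaluation of $f$ lands exactly on the target output $f(x)$. Indistinguishability from the real encoding is then established layer by layer; at each step the reduction uses \OTE\ privacy to swap a single layer, then applies an \LWE-based lattice-switching lemma to push the trapdoor one level down. Succinctness of \OTE\ is essential here, since the CRS and messages do not depend on $|\mathbf{x}|=m$ and the reduction can embed a $\poly(\sec)$-dimensional \LWE\ challenge into the CRS \emph{before} the adversary chooses $x$.

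The main obstacle is precisely this adaptive simulation. Prior LFE schemes relied on the adaptive-\LWE\ assumption (shown to be false by our counterexample) exactly to avoid this step. Replacing it with plain \LWE\ requires not only the succinct \OTE\ primitive but also a careful arrangement of the \GSW\ hash tree so that the adaptive-lattice-encoding trapdoor can be propagated one layer at a time without blowing up the CRS, and so that each swap in the hybrid sequence reduces cleanly to \LWE. Once this template is in place, both correctness and the communication bound follow from the succinctness of \OTE\ and the standard analysis of \GSW\ homomorphic evaluation.
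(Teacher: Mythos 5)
There is a genuine gap, and it is in the part your proposal identifies as the main obstacle: the adaptive simulation. You propose that the simulator sample the CRS (the OTE CRS and the GSW hash tree) together with a \emph{trapdoor} that, once the adversary has revealed $x$, lets it program the encoding so that homomorphic evaluation lands on $f(x)$. That is exactly the style of reduction that does \emph{not} survive adaptivity with $\mathsf{BGG^+}$-type encodings: CRS-programming based on the adversary's input is precisely what forces a selective choice, and trying to retrofit it with a trapdoor does not obviously reduce to plain LWE. The paper's actual mechanism is different and avoids any trapdoor or programming. The new encoding $\lenc_{\mathbf{A}}(x;\mathbf{s},\mathbf{r},\mathbf{e}) = \mathbf{s}^\intercal\mathbf{A} + x\cdot\mathbf{r}^\intercal\mathbf{G} + \mathbf{e}^\intercal$ decouples the encryption key $\mathbf{s}$ from the authentication key $\mathbf{r}$; $\mathbf{s}^\intercal\mathbf{A}+\mathbf{e}^\intercal$ is already a fresh LWE sample independent of $x$ and $\mathbf{r}$, so the encoding is pseudorandom under plain LWE for \emph{any} adaptively chosen $x$, with no simulator-side trapdoor at all. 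Adaptive security of the whole LFE is then a straight-line hybrid: replace each MOLE encoding by its encoder-privacy simulator, replace each compressed lattice encoding by a fresh random string (Theorem~\ref{thm:expansion}), replace the GSW ciphertext by a uniformly random matrix via LWE plus the leftover hash lemma, and finally randomize $\mathbf{z}$ via PRG security. Homomorphic multiplication is made compatible with this by chaining authentication keys: a $(\mathbf{s}_0,\mathbf{s}_1)$-encoding times a $(\mathbf{s}_1,\mathbf{s}_2)$-encoding yields a $(\mathbf{s}_0,\mathbf{s}_2)$-encoding, which is why the construction uses a key chain rather than a single BGG$^+$ secret. None of that appears in your sketch.

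A second, smaller mismatch: your $D\cdot\poly(\sec)$ budget is attributed to ``per-layer GSW noise growth,'' and your rate-$1$ step uses OTE to share $\mathbf{x}\otimes\mathbf{y}$ directly against the $\ell$ output slots. In the paper, $f$ is actually decomposed into $L = D/\log d$ constant-depth RMS blocks, each evaluated with a weak reverse TDH built from compressed encodings; the $D\cdot\poly(\sec)$ cost is $L$ times a per-block encoding key of fixed polynomial size, and rate-$1$ input comes from sending $\mathbf{z} = \mathbf{x} \oplus \mathsf{PRG}(K)$ with a short GSW ciphertext of $K$, while rate-$1$ output comes from the final authentication key being (the bit-decomposition of) the GSW secret key so that decryption is local. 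The MOLE is used inside the weak reverse TDH to share $\mathbf{A}_f^\intercal\mathbf{s}_0$, not to mask the output slots. Your proposal gestures at the right building blocks (OTE, GSW, some new adaptive encoding), but both the security mechanism and the recursive block structure that actually deliver the $m+\ell+D\cdot\poly(\sec)$ bound are missing.
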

{We shall mention here that, allowing additional interaction beyond two rounds, such a result can also be achieved using rate-1 fully-homomorphic encryption \cite{TCC:BDGM19}, so the difficulty of the problem is to construct such primitive in two rounds, which matches the standard LFE syntax. On the other hand, for all other applications, the results are new even if we allow arbitrary interaction.}

As a key technical ingredient, we present a new variant of \emph{homomorphic lattice encodings} \cite{EC:BGGHNS14} that naturally supports adaptive security. This is the first construction of homomorphic lattice encodings that departs from the framework of \cite{EC:BGGHNS14}, and we expect it to find other applications in the future.

A diagram summarizing our results is given in Figure~\ref{fig:diagram}.

\begin{figure*}
    \centering
\begin{tikzpicture}[
  node distance=2cm and 3cm,
  primitive/.style={draw, rounded corners, align=center, font=\small},
  ref/.style={font=\scriptsize, midway, sloped, above},
  ref-nonsloped/.style={font=\scriptsize},
  scale = 0.8
]

\node[primitive] (half) at (-5,0) {Half-Succinct NI-OTE
};
\node[primitive] (niote) at (0,0) {Succinct NI-OTE
};
\node[primitive] (mole) at (0,2) {Succinct NI-MOLE
};
\node[primitive] (lenc) at (-5,-2) {Adaptive Lattice Encodings
};
\node[primitive] (comp) at (2,-2) {Compressed Encodings
};
\node[primitive] (LFE) at (2,4) {Laconic Function Evaluation\\ \cite{FOCS:QuaWeeWic18,FOCS:HsiLinLuo23,C:DHMWW24}};
\node[primitive] (RTDH) at (5,2) {Reverse TDH
};
\node[primitive] (TDH) at (10,3) {Regular TDH};
\node[primitive] (HSS) at (10,2) {Succinct HSS};
\node[primitive] (LOT) at (10,1) {Rate-$1/2$ Laconic OT};
\node[primitive] (wRTDH) at (5.5,0) {Input-Succinct Reverse TDH
};
\node[primitive] (mainLFE) at (9,-2) {Rate-1 Adaptively Secure LFE
};

\draw[->] (half) -- (niote);
\draw[->] (niote) -- (mole);
\draw[->] (mole) -- (RTDH);
\draw[->] (LFE) to [bend left=10] node{} (RTDH);
\draw[->] (RTDH) to [bend left=10] node{} (TDH);
\draw[->] (RTDH) -- (HSS);
\draw[->] (RTDH) to [bend right=10] node{} (LOT);
\draw[->] (niote) to [bend right=20] node{} (comp);
\draw[->] (lenc) -- (comp);
\draw[->] (comp) to [bend right=20] node{} (wRTDH);
\draw[->] (mole) to [bend left=20] node{} (wRTDH);
\draw[->] (wRTDH) to [bend left=20] node{} (mainLFE);

\end{tikzpicture}
\caption{A schematic representation of our results, and a summary of the implications to different cryptographic primitives.}\label{fig:diagram}
\end{figure*}
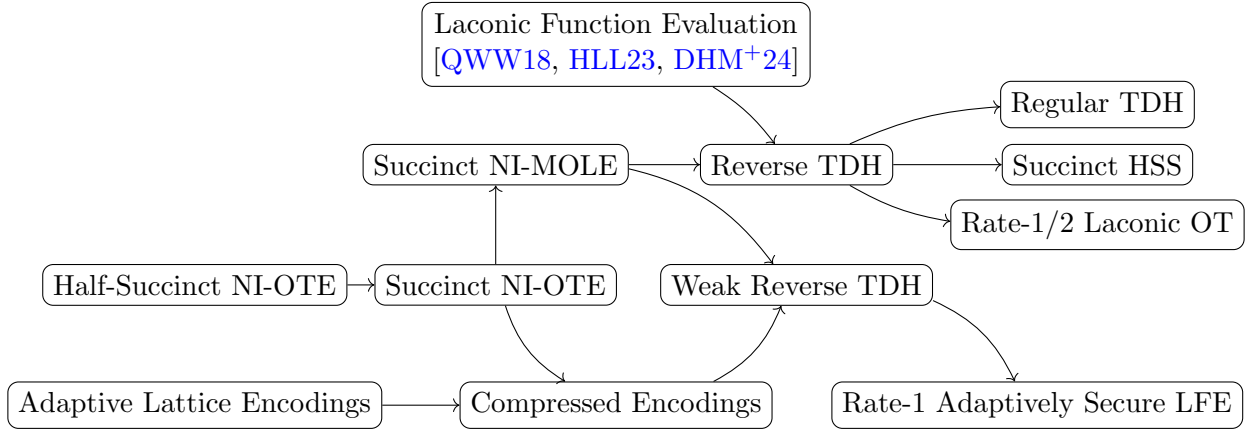

\subsection{Technical Outline}
From now on, we call Alice \emph{the hasher} and Bob \emph{the encoder}.
We start by presenting a half-succinct OTE, i.e., an OTE protocol where only the hasher's message is succinct in its input size. This is inspired by the work of \cite{EC:AbrRoySch24} 
and we extend their ideas in the context of tensor products. Suppose that we work over $\Z_q$, the hasher's input is a vector over $\Z_2^m$, while the encoder's input lies in $\Z_q^\ell$.

The construction relies on a setup that outputs a random matrix $\mathbf{A}\in\Z_q^{n\times m}$ where $m>n$. To hash $\mathbf{x}$, we simply compute an SIS-based hash $\mathbf{d}\gets \mathbf{A}\cdot \mathbf{x}$. To encode $\mathbf{y}$, on the other hand, we compute
\[\mathbf{C}\gets \mathbf{A}^\intercal \cdot \mathbf{S}+\mathbf{E}+\id_m\otimes \mathbf{y}^\intercal\]
where $\mathbf{S}\sample\Z_q^{n\times (m\cdot \ell)}$ and $\mathbf{E}\sample\chi(\bbbone^\sec)$. Above, we use $\id_m$ to denote the $m\times m$ identity matrix and $\chi$ to denote a low-norm distribution. Notice that, under LWE, $\mathbf{C}$ leaks no information about $\mathbf{y}$. Suppose that Alice and Bob exchanged $\mathbf{C}$ and $\mathbf{d}$. Alice can compute a ``noisy'' share of $\mathbf{x}\otimes \mathbf{y}$ by computing
\begin{align*}
\mathbf{v}:=\mathbf{C}^\intercal\cdot \mathbf{x}&=\mathbf{S}^\intercal\cdot \mathbf{A} \cdot \mathbf{x}+\mathbf{E}^\intercal\cdot\mathbf{x}+(\id_m\otimes \mathbf{y})\cdot\mathbf{x}\\
&=\mathbf{S}^\intercal\cdot\mathbf{A} \cdot \mathbf{x}+\mathbf{E}^\intercal\cdot\mathbf{x}+(\mathbf{x}\otimes \mathbf{y}).
\end{align*}
Bob can derive instead his own ``noisy'' share by computing
\[\mathbf{w}:=-\mathbf{S}^\intercal \cdot \mathbf{d}=-\mathbf{S}^\intercal\cdot\mathbf{A}\cdot \mathbf{x}.\]
It is easy to see that $\mathbf{v}+\mathbf{w}=\mathbf{x}\otimes \mathbf{y}+\mathbf{E}^\intercal\cdot\mathbf{x}$. Moreover, since $\mathbf{x}\in\Z_2^m$, the magnitude of $\mathbf{E}^\intercal\cdot\mathbf{x}$ is small. From this we can derive a fully correct, half succinct OTE over $\Z_p$ where $p$ is a sufficiently small divisor of $q$: Instead of hashing $\mathbf{x}$, hash its bit decomposition, instead of encoding $\mathbf{y}$, encode $q/p\cdot \mathbf{y}\otimes \mathbf{g}_q$ where $\mathbf{g}_q^\intercal$ is the gadget (row) vector $(1, 2, \dots, 2^{\log q})$. To reconstruct an exact secret-sharing of $\mathbf{x}\otimes \mathbf{y} \bmod p$, it is sufficient to apply a linear operation on the shares and round the result over $\Z_p$ following the ideas of \cite{C:DHRW16} (see \ref{def:alphacorrect}).

{\paragraph{An Attempt at Bootstrapping.}
We now describe a first attempt to obtain \emph{full} succinctness. In the half-succinct construction above, the encoder's message has size $m\ell$, whereas the hash size is independent of both $m$ and $\ell$. Thus the remaining bottleneck is Bob's message.
Following \cite{EC:AbrRoySch24}, we exploit two simple features of the protocol: It is non-interactive, and the functionality is linear. Suppose that Alice's input is much longer, say $\mathbf{x}\in\mathbb{Z}_q^M$ with $M \gg m,\ell$. Write
\[
\mathbf{x} = \begin{pmatrix}\mathbf{x}_0\\\vdots\\\mathbf{x}_{N-1}\end{pmatrix},
\qquad
\mathbf{x}_i \in \mathbb{Z}_q^m,
\]
where $N=M/m$. Alice hashes each block separately, obtaining digests
\[
\mathbf{d}_0,\ldots,\mathbf{d}_{N-1}.
\]
Bob, on the other hand, sends a \emph{single} encoding of $\mathbf{y}\in\mathbb{Z}_q^\ell$. Since the same encoding can be evaluated against every digest, the parties obtain an additive secret-sharing of $\mathbf{x}_i\otimes \mathbf{y}$ for every $i\in[N]$. Concatenating these shares yields an additive secret-sharing of $\mathbf{x}\otimes \mathbf{y}$.}

{This already makes Bob's communication sublinear in $M$: His message depends only on $m$ and $\ell$, not on the full length of $\mathbf{x}$. The drawback is that Alice must now send all $N$ digests, so her communication grows by a factor of $M/m$. In other words, we have traded a large encoding for many digests.
The next natural question is whether these $N$ digests can themselves be compressed. Towards this, let us take a closer look at Bob's shares in the noisy half-succinct scheme: For each block $i$, Bob's share can be obtained from
$\mathbf{d}_i \otimes \mathsf{vec}(\mathbf{S})$\footnote{$\mathsf{vec}(\mathbf{S})$ denotes the vectorisation of $\mathbf{S}$.}
by applying a public linear map.
Therefore, if Alice and Bob could obtain an additive secret-sharing of
\[
\mathbf{d}_i \otimes \mathsf{vec}(\mathbf{S})
\qquad\text{for every } i\in[N],
\]
then they could recover Bob's noisy shares for all blocks using only local linear operations. Combining these with Alice's local shares would give a noisy secret-sharing of $\mathbf{x}\otimes \mathbf{y}$, which could then be rounded exactly as above.}

{This suggests a recursive strategy. Let \[\mathbf{d}' := \begin{pmatrix}\mathbf{d}_0\\\vdots\\\mathbf{d}_{N-1})\end{pmatrix}\]
be the concatenation of all block digests, we apply our half-succinct OTE \emph{again}, this time to obtain a secret-sharing of
$\mathbf{d}' \otimes \mathsf{vec}(\mathbf{S})$.
If this worked, then Alice would no longer need to send the entire vector $\mathbf{d}'$, but rather its hash.
In fact, one could imagine iterating the same idea and, pushing this recursion to its limit, one might hope for a protocol in which Alice sends only a single short commitment to $\mathbf{x}$, while Bob sends one encoding per recursion level.}

{Unfortunately, this naive recursion fails because the second input grows too quickly. After the first step, Bob is no longer encoding $\mathbf{y}$, but rather $\mathsf{vec}(\mathbf{S})$. At the next level, he must encode the randomness used to encode $\mathsf{vec}(\mathbf{S})$, and so on. Since each recursion step increases the dimension of the randomness by a factor of about $n\cdot m$, Bob's messages grow by the same factor at every level. Thus the recursion indeed compresses Alice's side, but it causes Bob's side to blow up. To make the approach work, we need a way to keep the randomness small while preserving the linear structure that lets Bob's share be recovered from $\mathbf{d}_i \otimes \mathsf{vec}(\mathbf{S})$.}

\paragraph{Decreasing the size of $\mathbf{S}$.} Instead of sampling a random $\mathbf{S}$, we generate a pseudorandom one using LWE. Specifically, include $m\cdot \ell$ random $n\times n$ matrices $\mathbf{B}_0, \dots, \mathbf{B}_{\ell\cdot m-1}$ as part of the setup. Then, at encoding time, we sample a random vector $\mathbf{s}\sample\Z_q^n$ and we set the $i$-th row of $\mathbf{S}$ to be $\mathbf{B}_i\cdot \mathbf{s}+\mathbf{e}_i$ where $\mathbf{e}_i\sample\chi(\bbbone^\sec)$. In matrix notation, we obtain that $\mathbf{S}=\mathbf{B}\cdot (\id_{m\cdot \ell}\otimes \mathbf{s})+\hat{\mathbf{E}}$, i.e.,
\[\mathbf{S}=\underbrace{\begin{pmatrix}
\mathbf{B}_0 & \dots & \mathbf{B}_{\ell\cdot m-1}
    \end{pmatrix}}_{\mathbf{B}}\cdot \underbrace{\begin{pmatrix}
\mathbf{s} &  &\\
&\ddots &\\
&&\mathbf{s}
    \end{pmatrix}}_{m\cdot \ell \text{ times}}+\underbrace{\begin{pmatrix}
\mathbf{e}_0 & \dots & \mathbf{e}_{\ell\cdot m-1}
    \end{pmatrix}}_{\hat{\mathbf{E}}}.\]
The encoding of $\mathbf{y}$ becomes
\[\mathbf{C}=\mathbf{A}^\intercal\cdot \mathbf{B}\cdot (\id_{m\cdot \ell}\otimes \mathbf{s})+\mathbf{A}^\intercal \cdot \hat{\mathbf{E}}+\mathbf{E}+\id_m\otimes \mathbf{y}^\intercal.\]
We also introduce another modification: Instead of sampling the entries of $\mathbf{A}$ and $\mathbf{B}$ uniformly at random over $\Z_q$, we sample them uniformly over $\Z_2$. This trick ensures that the magnitude of $\mathbf{A}^\intercal \cdot \hat{\mathbf{E}}$ remains small, while, at the same time, it does not compromise security: LWE with respect to random \emph{binary} matrices is known to be as hard as standard LWE \cite{C:BLMR13}.
With these modifications to our half-succinct OTE, Alice's share becomes
\begin{align*}
\mathbf{v}:&=\mathbf{C}^\intercal\cdot \mathbf{x}\\
&=(\id_{m\cdot \ell}\otimes \mathbf{s}^\intercal)\cdot \mathbf{B}^\intercal\cdot \mathbf{A}\cdot \mathbf{x}+\hat{\mathbf{E}}^\intercal\cdot\mathbf{A}\cdot\mathbf{x}+\mathbf{E}^\intercal\cdot\mathbf{x}+(\id_m\otimes \mathbf{y})\cdot \mathbf{x}\\
&=(\id_{m\cdot \ell}\otimes \mathbf{s}^\intercal)\cdot \mathbf{B}^\intercal\cdot \mathbf{A}\cdot \mathbf{x}+\hat{\mathbf{E}}^\intercal\cdot\mathbf{A}\cdot\mathbf{x}+\mathbf{E}^\intercal\cdot\mathbf{x}+\mathbf{x}\otimes \mathbf{y}.
\end{align*}
Bob's share becomes instead
\[\mathbf{w}:=-(\id_{m\cdot \ell}\otimes \mathbf{s}^\intercal) \cdot \mathbf{B}^\intercal\cdot \mathbf{d}=-(\id_{m\cdot \ell}\otimes \mathbf{s}^\intercal) \cdot \mathbf{B}^\intercal\cdot\mathbf{A}\cdot \mathbf{x}.\]
This allows us to apply recursion without blowing up the size of $\mathbf{s}$: Since $\mathbf{w}$ is a bilinear function of $\mathbf{d}$ and $\mathbf{s}$, if the parties hold a ``noisy'' secret-sharing of $\mathbf{d}'\otimes \mathbf{s}$, they can easily convert it into a ``noisy'' secret-sharing of Bob's share by computing a local linear operation depending on $\mathbf{B}$. Moreover, since $\mathbf{B}$ is a binary matrix, this linear computation will not significantly increase the ``noisiness'' of the secret-sharing. At each recursion step, the size of $\mathbf{d}'$ decreases by a factor of $t(\sec):=m/n$; the size of $\mathbf{s}$, on the other hand, remains always the same. The final result is an LWE-based succinct OTE where the digest and the CRS dimensions are $\poly(\sec)$ and the encoding dimension is $O(\log M)\cdot \ell\cdot \poly(\sec)$. 

\paragraph{Succinct MOLEs and VOLEs.} No, this paragraph is not about tiny mammals: It's about two cryptographic primitives called \emph{matrix oblivious linear evaluation} and (non-interactive) \emph{vector oblivious linear evaluation}. In a MOLE, Alice holds a matrix $\mathbf{M}\in\Z_q^{m\times \ell}$ where $m\gg \ell$, whereas Bob holds a secret vector $\mathbf{x}\in\Z_q^\ell$. Their goal is to derive a secret-sharing of $\mathbf{M}\cdot \mathbf{x}$ in one round and without revealing any information about $\mathbf{x}$. A VOLE corresponds to a MOLE in the special case $\ell=1$. 

We would like to minimize the communication complexity of these protocols, especially in relation to $m$. It is easy to see that the succinct OTE protocol we just presented gives immediately a succinct MOLE where communication scales logarithmically in $m$: First, we use the succinct OTE protocol to compute a secret-sharing of $\mathsf{vec}(\mathbf{M})\otimes \mathbf{x}$, then, we apply local linear operations on the shares to obtain a secret-sharing of $\mathbf{M}\cdot \mathbf{x}$. Along the way, this solves a question left open in \cite{EC:AbrRoySch24}: We have just built the first non-interactive VOLE with logarithmic communication in $m$ from LWE.

\paragraph{Reverse Trapdoor Hashing.} We observe that many laconic function evaluation schemes have a particular structure \cite{FOCS:QuaWeeWic18,FOCS:HsiLinLuo23,C:DHMWW24,C:Wee24}. First of all, their digests consist of matrices $\mathbf{A}_f$ with a constant number of columns and a number of rows proportional to the output size of $f$. Moreover, the encoding can be split into two parts: A function-independent pre-encoding $E$ and an input-independent post-encoding $c$ consisting of an LWE-like sample where the matrix is (essentially) the digest $\mathbf{A}_f$ and the secret $\mathbf{s}$ is a random vector generated by the pre-encoding procedure. Finally, the output is computed by rounding the sum $c+\Eval(E, f)$. 

We observe that, if we ignore the noise, LWE samples are essentially matrix-vector multiplications. Therefore, by relying on the succinct MOLE we just built, the parties can derive a noisy secret-sharing of the post-encoding $c$ in a single round of simultaneous interaction and with logarithmic communication in the output size of $f$. Moreover, the encoder can send the pre-encoding $E$ along with its MOLE message. In this way, the parties can derive a secret-sharing of the output without the need for further interaction.

This yields the first rate-1 \emph{reverse} trapdoor hashing scheme. Usually, in rate-1 trapdoor hashing, we obtain a secret-sharing of $f(x)$ by sending a digest of $x$ and generating an encoding key for $f$. In reverse trapdoor hashing, we do the opposite: We send a digest for $f$ and an encoding key for $x$, {so rate in this context is defined as a function of $|f|$}. In our construction, the former corresponds to the MOLE hash of $\mathbf{A}_f$, whereas the latter corresponds to the pre-encoding $E$ and the MOLE encoding of $\mathbf{s}$.\footnote{We mention that we can recover the usual notion of trapdoor hashing via universal circuits. On the other hand, the reverse implication does not seem to trivially hold, since the universal circuit would introduce an efficiency penalty in the size of the encoding.}

\paragraph{Adaptive Lattice Encodings.} Much of the recent advancement in lattice-based homomorphic cryptography can be traced back to a single technique introduced in 2014 by Boneh et al. \cite{EC:BGGHNS14}: $\mathsf{BGG^+}$ encodings. Suppose that we are provided with a CRS consisting of a matrix $\mathbf{A}\in\Z_q^{k\times (\ell\cdot k\cdot \log q)}$. A $\mathsf{BGG^+}$ encoding of a bit string $\mathbf{x}\in\{0, 1\}^\ell$ consists of the vector
\[\mathbf{c}^\intercal=\mathbf{s}^\intercal \cdot(\mathbf{A}-\mathbf{x}^\intercal\otimes \mathbf{G})+\mathbf{e}^\intercal\]
where $\mathbf{G}:=\id_k\otimes \mathbf{g}_q^\intercal$, $\mathbf{s}\sample\Z_q^k$ and $\mathbf{e}\sample\chi(\bbbone^\sec)$. These encodings have amazing homomorphic properties: For any function $f:\{0, 1\}^\ell\rightarrow\{0, 1\}$, there exist efficiently computable, low-norm matrices $\mathbf{H}_f$ and $\mathbf{H}_{f, x}$ (independent of the random $\mathbf{s}$ and the noise $\mathbf{e}$ of the encoding) such that
\[\mathbf{c}^\intercal\cdot \mathbf{H}_{f, x}\approx \mathbf{s}^\intercal \cdot (\mathbf{A}_f-f(x)\cdot \mathbf{G})\]
where $\mathbf{A}_f:=\mathbf{A}\cdot \mathbf{H}_f$. Alas, $\mathsf{BGG^+}$ encodings are secure only in the selective setting: If $\mathbf{x}$ is independent of $\mathbf{A}$, the encoding $\mathbf{c}$ looks like a random vector, if, however, $\mathbf{x}$ is adaptively chosen after seeing $\mathbf{A}$, there exists an attack that allows us to recover $\mathbf{s}$. To see why, observe that there exists a binary matrix $\mathbf{H}'$ such that, for any matrix $\mathbf{M}\in\Z_q^{k\times k}$, if $\mathbf{x}=\mathsf{Bits}(\mathbf{M})$, we have that
\[\mathbf{c}^\intercal\cdot \mathbf{H}'\approx \mathbf{s}^\intercal \cdot (\mathbf{A}\cdot \mathbf{H}'-\mathbf{M}).\]
Suppose for convenience that $q$ is a power of 2. To recover the $i$-th most significant bit of $\mathbf{s}$, it is sufficient to set $\mathbf{M}:=\mathbf{A}\cdot \mathbf{H}'- 2^i\cdot \id_k$ and compute the most significant bit of $\mathbf{c}^\intercal\cdot \mathbf{H}'$. This proves that the Adaptive LWE assumption of \cite{FOCS:QuaWeeWic18} does not hold in the optimistic parameter setting in which $\ell$ can be arbitrarily bigger than $k$. Notice in the provable parameter setting (where the encodings are secure under the \emph{subexponential} hardness of LWE), our attack fails as the bound on $\ell$ is too small to encode $\mathbf{M}:=\mathbf{A}\cdot \mathbf{H}'- 2^i\cdot \id_k$.

To circumvent the attack without relying on complexity leveraging (and therefore obtain better asymptotic parameters), we introduce a new version of lattice encodings: The encoding of $\mathbf{x}$ is now
\[\mathbf{c}^\intercal=\mathbf{s}^\intercal\cdot \mathbf{A}+\mathbf{r}^\intercal\cdot (\mathbf{x}^\intercal\otimes \mathbf{G})+\mathbf{e}^\intercal\]
where $\mathbf{s}\sample\Z_q^k$, $\mathbf{r}\sample\Z_q^k$ and $\mathbf{e}\sample\chi(\bbbone^\sec)$. We call $\mathbf{s}$ \emph{the encryption key} of the encoding, whereas we refer to $\mathbf{r}$ as \emph{the authentication key}. We say that $\mathbf{c}$ is a $(\mathbf{s}, \mathbf{r})$-encoding. Observe that $\mathsf{BGG^+}$ encodings correspond to the special case in which $\mathbf{r}=-\mathbf{s}$. It is easy to see also that these encodings are adaptively secure under standard LWE, no matter the value of $\mathbf{x}$ and $\mathbf{r}$.

The drawback, however, becomes clear when we look at the homomorphic properties of the modified scheme. It is easy to see that the construction is linearly homomorphic, however, we no longer know how to perform multiplications. Or at least, we do not know \emph{when the factors are encoded using the same keys}. If instead the authentication key of the first encoding matches the encryption key of the second one, there is a solution: Suppose that we want to multiply the encodings $\mathbf{c}_x^\intercal=\mathbf{s}^\intercal\cdot \mathbf{A}+\mathbf{r}^\intercal\cdot (x\cdot \mathbf{G})+\mathbf{e}_x^\intercal$ and $\mathbf{c}_y^\intercal=\mathbf{r}^\intercal\cdot \mathbf{B}+\mathbf{t}^\intercal\cdot (y\cdot \mathbf{G})+\mathbf{e}_y^\intercal$. We observe that
\[\mathbf{c}_z^\intercal:=-\mathbf{c}_x^\intercal\cdot \mathbf{G}^{-1}(\mathbf{B})+x\cdot \mathbf{c}_y^\intercal\approx-\mathbf{s}^\intercal\cdot \mathbf{A} \mathbf{G}^{-1}(\mathbf{B})+\mathbf{t}^\intercal \cdot (x y\cdot \mathbf{G}).\]
In other words, we have obtained an encoding of $x\cdot y$ using $\mathbf{s}$ as encryption key and $\mathbf{t}$ as authentication key. Moreover, the matrix relative to the encoding is $-\mathbf{A}\cdot \mathbf{G}^{-1}(\mathbf{B})$. Notice that this is publicly computable, no need to know the keys or the plaintexts! To summarise, we are able to compute linear operations between $(\mathbf{s}, \mathbf{r})$-encodings obtaining other $(\mathbf{s}, \mathbf{r})$-encodings. Moreover, we are able to perform multiplications between $(\mathbf{s}, \mathbf{r})$-encodings and $(\mathbf{r}, \mathbf{t})$-encodings, obtaining $(\mathbf{s}, \mathbf{t})$-encodings as results.

\paragraph{Compressing Adaptive Lattice Encodings.} We present a procedure to compress our adaptive lattice encodings, where a compressed encoding of $\mathbf{x}\in\{0, 1\}^\ell$ will have size $\poly(\log \ell, \sec)$. By leveraging the knowledge of $\mathbf{x}$, it can then be re-expanded into a standard adaptive encoding. Once again, our techniques rely on our succinct OTE protocol and the (bi)linear structure of Bob's share derivation. Specifically, the compressed encoding is composed of two parts $\mathbf{h}$ and $E$. The former consists of an adaptive lattice encoding of $\mathbf{d}:=\OTE.\Hash(\mathbf{x})$. Let $\mathbf{r}$ be the authentication key of $\mathbf{h}$; then $E$ consists of an OTE encoding of a fresh authentication key $\mathbf{t}$ where $\mathbf{r}$ is used as randomness for $E$ (this is secure as $\mathbf{h}$ looks random even when $\mathbf{r}$ is leaked). We observe that
\begin{align*}
    \mathbf{h}^\intercal&\approx\mathbf{s}^\intercal \cdot \mathbf{A}+\mathbf{r}^\intercal \cdot (\mathbf{d}^\intercal\otimes \mathbf{G})\\
    &=\mathbf{s}^\intercal \cdot \mathbf{A}+\mathbf{r}^\intercal \cdot (\mathbf{d}^\intercal\otimes \id_k\otimes \mathbf{g}_q^\intercal)\\
    &=\mathbf{s}^\intercal \cdot \mathbf{A}+(\mathbf{d}^\intercal\otimes \mathbf{r}^\intercal \otimes \mathbf{g}_q^\intercal).
\end{align*}
In other words, $\mathbf{h}$ can be viewed as some sort of encoding of $\mathbf{d}\otimes \mathbf{r}$ where $\mathbf{r}$ is the randomness of $E$. Now, Bob's share of $\mathbf{x}\otimes \mathbf{t}$ would be $\mathbf{w}:=\mathbf{P}\cdot (\mathbf{d}\otimes \mathbf{r})$, where $\mathbf{P}$ is a public low-norm matrix derived from $\mathbf{B}$. Thus, by multiplying $\mathbf{h}$ on the right by $\mathbf{P}$, we derive
\[\hat{\mathbf{h}}^\intercal:=\mathbf{h}^\intercal\cdot \mathbf{P}\approx \mathbf{s}^\intercal \cdot \mathbf{A} \mathbf{P}+\mathbf{w}^\intercal\otimes \mathbf{g}_q^\intercal.\]
 Using $\mathbf{x}$ and $E$, we can also derive the other share $\mathbf{v}$. We conclude by observing that
\begin{align*}
    \hat{\mathbf{h}}^\intercal+\mathbf{v}^\intercal\otimes \mathbf{g}_q^\intercal&\approx \mathbf{s}^\intercal \cdot \mathbf{A} \mathbf{P}+ (\mathbf{x}^\intercal\otimes \mathbf{t}^\intercal \otimes \mathbf{g}_q^\intercal)\\
    &=\mathbf{s}^\intercal \cdot \mathbf{A} \mathbf{P}+ \mathbf{t}^\intercal\cdot (\mathbf{x}^\intercal\otimes \mathbf{G})
\end{align*}
We have just obtained a $(\mathbf{s}, \mathbf{t})$-encoding of $\mathbf{x}$.

\paragraph{Rate-1 Adaptive LFE.} We build our rate-1 adaptive LFE scheme using our compressed adaptive encodings in two steps: First, we construct a weak variant of (adaptive) reverse TDH for the functions that map pairs $(\mathbf{x}, \mathbf{a})$ to $f(\mathbf{x})\otimes \mathbf{a}$ where $f\in\mathsf{NC}_1$. The scheme satisfies all the security properties of standard reverse TDH, however, in order for correctness to hold, the hasher needs to know $\mathbf{x}$ (but not $\mathbf{a}$) for the derivation of her share. {However, importantly, the scheme additionally achieves} polylogarithmic communication in the size of $\mathbf{x}$ (but not $\mathbf{a}$). As a second step, we use our \emph{input-succinct} reverse TDH to build an adaptive LFE scheme for all depth-$D$ functions $f:\{0, 1\}^m\rightarrow \{0, 1\}^\ell$. The size of the hash will be $\poly(\sec)$ whereas the size of the encoding will be $\ell+m+D\cdot \poly(\sec)$.

\paragraph{Step I: Input-Succinct Reverse TDH.} We start by describing the input-succinct reverse TDH. Suppose that we want to evaluate functions of depth at most $\log d$: Each of them can be converted into an RMS program of depth $d$ \cite{C:BoyGilIsh16}. We recall that an RMS program consists of an arithmetic circuit over $\Z$ where multiplications are allowed only if one of the factors is an input. 

We start by describing the generation of encoding keys: We sample a chain of $d$ keys $\mathsf{s}_0, \dots, \mathbf{s}_{d-1}$, we set $\mathbf{s}_d:=\mathbf{a}$ and we generate compressed encodings $(\mathbf{h}_i, E_i)_{i\in[d]}$ so that, when we expand $(\mathbf{h}_i, E_i)$, we obtain a $(\mathbf{s}_i, \mathbf{s}_{i+1})$-encoding of $(\mathbf{x}, 1)$. Notice that we can homomorphically evaluate $f$ on these encodings: The operation proceeds by levels, starting from level $1$ (the inputs) to level $d$ (the output). A level-$i$ encoding consists of any $(\mathbf{s}_0, \mathbf{s}_i)$-encoding. We observe that $(\mathbf{h}_0, E_0)$ gives us a level-1 encoding of the inputs.

In the previous paragraphs, we showed that the levels are closed under linear operations. Moreover, we can also perform multiplications by inputs: If the first factor belongs to level $i$, we can multiply it by the $(\mathbf{s}_i, \mathbf{s}_{i+1})$-encoding of the other factor (the input), which can be derived from $(\mathbf{h}_i, E_i)$. Finally, we can perform hops across levels (always from level $i$ to level $i+1$) by performing multiplications by 1\footnote{Remember that the expansion of $(\mathbf{h}_i, E_i)$ provides also an $(\mathbf{s}_i, \mathbf{s}_{i+1})$-encodings of 1.}. To summarise, given $\mathbf{x}$ and $(\mathbf{h}_i, E_i)_{i\in[d]}$, Alice is able to derive an encoding
\begin{align*}
    \mathbf{c}_f^\intercal&\approx \mathbf{s}_0^\intercal \cdot \mathbf{A}_f+\mathbf{a}^\intercal \cdot (f(\mathbf{x})^\intercal \otimes \mathbf{G})\\
    &=\mathbf{s}_0^\intercal \cdot \mathbf{A}_f+ (f(\mathbf{x})^\intercal \otimes \mathbf{a}^\intercal\otimes \mathbf{g}_q^\intercal).
\end{align*}
To derive a secret-sharing of $f(\mathbf{x}) \otimes \mathbf{a}$, it is therefore sufficient that the parties run a succinct MOLE to compute a secret-sharing of $\mathbf{A}_f^\intercal\cdot \mathbf{s}_0$ similarly to what we did for the other reverse TDH we built. At that point, it is just a matter of applying local linear computations and rounding.

\paragraph{Step II: Building Rate-1 Adaptive LFE.} It is finally time to talk about the rate-1 adaptive LFE scheme. Our approach is the following: We pick a \emph{constant} $d$ and we decompose our function $f$ as $f_{L-1}\circ\dots\circ f_0$ where each $f_i$ is described by an RMS program of depth $d$. Our goal is to use our reverse TDH to evaluate all $f_i$ until we obtain the output. The issue is that Bob does not know what input to encode for $f_i$. 

Instead of evaluating $f_i$, we evaluate a function that allows us to retrieve the encoding key for $\mathbf{x}_i:=(f_i\circ \dots\circ f_0)(\mathbf{x})$. Specifically, for every $i\in[L]$, define\footnote{In our construction, $\OTE.\Hash$ has depth 0 because it is linear.}
\[\hat{f}_i:(\mathbf{x}, \mathbf{a})\longmapsto \OTE.\Hash(f_i(\mathbf{x}))\otimes \mathbf{a}.\] Suppose that Alice sent a hash for $\hat{f}_{i}$ for every $i\in[L]$. Suppose also that we somehow managed to find a way to provide Alice with an encoding key for $(\mathbf{x}_i, \mathbf{a}_{i+1})$ where $\mathbf{a}_{i+1}\sample\Z_q^k$. Under these premises, the parties can derive an additive secret-sharing $\mathbf{y}^{\mathsf{A}}_{i+1}+\mathbf{y}^{\mathsf{B}}_{i+1}=\mathbf{d}_{i+1}\otimes \mathbf{a}_{i+1}$ where $\mathbf{d}_{i+1}=\OTE.\Hash(\mathbf{x}_{i+1})$. We can convert this into a compressed adaptive encoding of $\mathbf{x}_{i+1}$ by making Bob send \[\mathbf{c}_{i+1}^\intercal=\mathbf{s}_{i+1}^\intercal\cdot \mathbf{A}+\mathbf{e}_{i+1}^\intercal+(\mathbf{y}^{\mathsf{B}}_{i+1})^\intercal\otimes \mathbf{g}_q^\intercal\] where $\mathbf{s}_{i+1}\sample\Z_q^k$ and $\mathbf{e}_{i+1}\sample\chi(\bbbone^\sec)$. By adding $(\mathbf{y}^{\mathsf{A}}_{i+1})^\intercal\otimes\mathbf{g}_q^\intercal$ to $\mathbf{c}_{i+1}$, Alice can derived a $(\mathbf{s}_{i+1}, \mathbf{a}_{i+1})$-encoding of $\mathbf{d}_{i+1}$. Indeed,
\begin{align*}
    \mathbf{c}_{i+1}^\intercal+(\mathbf{y}^{\mathsf{A}}_{i+1})^\intercal\otimes\mathbf{g}_q^\intercal&\approx \mathbf{s}_{i+1}^\intercal\cdot \mathbf{A}+(\mathbf{y}^{\mathsf{B}}_{i+1}+\mathbf{y}^{\mathsf{A}}_{i+1})^\intercal\otimes \mathbf{g}_q^\intercal\\
    &=\mathbf{s}_{i+1}^\intercal\cdot \mathbf{A}+(\mathbf{d}_{i+1}^\intercal\otimes \mathbf{a}_{i+1}^\intercal\otimes \mathbf{g}_q^\intercal)\\
    &=\mathbf{s}_{i+1}^\intercal\cdot \mathbf{A}+\mathbf{a}_{i+1}^\intercal\cdot (\mathbf{d}_{i+1}^\intercal\otimes \mathbf{G}).
\end{align*}Bob can of course send also the other information needed to complete the encoding key for $(\mathbf{x}_{i+1}, \mathbf{a}_{i+2})$ as this is independent of $f$ and $\mathbf{x}_{i+1}$. Notice that the size of all this material that Bob sends is independent of the input size and the output size of $f$. By continuing in this way, the parties end up with an additive secret-sharing of $f(\mathbf{x})\otimes \mathbf{a}_L$ where $\mathbf{a}_L\sample\Z_q^k$.

There is still one matter we need to take care of: In order to perform the operations we just described, Alice needs to know $\mathbf{x}$. So, how can we achieve privacy of the input? The trick is the same as in \cite{TCC:BTVW17}: We provide Alice with a $\mathsf{GSW}$ \cite{C:GenSahWat13} encryption of $\mathbf{x}$ using $\mathbf{a}_L$ as a secret key (we also send the corresponding encoding key, which is polylogarithmic in size). Then, instead of evaluating $f$, we evaluate $f':=\GSW.\Eval(f, \cdot)$. At the end, the parties obtain a secret-sharing of $\mathsf{Bits}(\ct)\otimes \mathbf{a}_L$ where $\ct$ is a $\GSW$ encryption of $f(\mathbf{x})$. Given that $\mathbf{a}_L$ is the secret-key and the $\GSW$ decryption consists of linear operations  between $\mathbf{a}_L$ and $\ct$ (followed by rounding), the parties can obtain a secret-sharing of $f(\mathbf{x})$ by applying only local operations. This immediately gives rate-1 communication in the output. What about rate-1 communication in the input? Well, instead of sending a $\GSW$ encryption of $\mathbf{x}$, send $\mathbf{z}:=\mathbf{x}\oplus \mathsf{PRG}(K)$ for $K\sample\{0, 1\}^\sec$ and a $\GSW$ encryption of $K$. Then, during the evaluation of $f'$, we remove the one-time-pad inside FHE.

\subsection{Other Applications}\label{sec:apps}

We outline a few additional additional applications of our results. In particular, we discuss some of the new implications from our construction of reverse TDH.

\paragraph{Regular Trapdoor Hash.}
As an immediate implication of our reverse TDH, we obtain a (regular) TDH \cite{C:DGIMMO19} with close to optimal parameters, that is: The size of the encoding key is $\lvert f\rvert \cdot \poly(\sec, \log m)$ where $\lvert f\rvert$ denotes the size of the description of the encoded function and $m$ denotes the size of its input. For instance, notice that if $f$ is a point function with domain of size $L$, $\lvert f\rvert=\log L$. 
We achieve this with a simple application of universal circuits: Instead of hashing a function, we hash the universal circuit $U_\mathbf{x}$ with the input $\mathbf{x}$ hardwired, that on input a function $f$, returns $f(\mathbf{x})$. Note that the size of the digest is anyway constant (ignoring factors in the security parameter) so the complexity of the TDH only grows with the bit description of $f$.

To our knowledge, this (along with a concurrent work \cite{EC:BJSS25}) is the first trapdoor hashing schemes that support all functions. This is also the first LWE-based TDH constructions achieving nearly optimal communication for a non-trivial class of functions. To our knowledge, the only other construction where the size of the encoding key is sublinear in the dimension of the input is a recently built pairing-based TDH for point functions \cite{Nico}. Such construction, however, pays the succinctness of the encoding key with a non-succinct CRS of size $O(m)$.

\paragraph{Homomorphic Secret Sharing, Spooky Encryption and Public-Key PCGs.}
In addition, note that the reconstruction of our reverse TDH is additive over $\Z_2$, thus, a reverse trapdoor hash also implies the existence of a (public-key) 2-party homomorphic secret sharing \cite{C:BoyGilIsh16} scheme as follows: Suppose that Alice and Bob respectively hold inputs $\mathbf{x}$ and $\mathbf{y}$ and they want to evaluate the function $f$. Suppose also that $\mathbf{x}$ is much longer than $\mathbf{y}$. Alice proceeds by hashing the function $f_\mathbf{x}$ that maps any $\mathbf{y}$ to $f(\mathbf{x}, \mathbf{y})$. She sends the digest to Bob and keeps the randomness as her part of the share of $\mathbf{x}$. Bob, on the other hand, sends an encoding key $\ek$ for $\mathbf{y}$ to Alice. He keeps the corresponding trapdoor $\td$ as his share of $\mathbf{y}$. By the correctness of reverse TDH, the parties can locally derive a secret sharing of the output. Notice also that the scheme is succinct in $\mathbf{x}$: The total communication of the protocol is $\lvert \mathbf{y}\rvert \cdot \poly(\sec, \log \lvert \mathbf{x}\rvert)$!

Previously, succinct homomorphic secret sharing had been built by Abram, Roy and Scholl \cite{EC:AbrRoySch24} under several assumptions, including LWE, DCR and DDH over class groups. Their constructions, however, supported only a limited class of functions: Alice and Bob could only compute secret-sharings of $\mathbf{x}^\intercal\cdot C(\mathbf{y})$ for any circuit $C\in\mathsf{NC}_1$. Their constructions have also a second drawback: The total complexity of the protocol is $\lvert \mathbf{y}\rvert \cdot \lvert \mathbf{x}\rvert^\varepsilon \cdot \poly(\sec)$ for a constant $\varepsilon\in (0, 1)$. Our solution instead scales polylogarithmically in $\lvert \mathbf{x}\rvert$. On the negative side, unlike \cite{EC:AbrRoySch24}, our solution does not allow the parties to evaluate a function that is adaptively chosen \emph{after} the secret-sharing phase. We can however plug our MOLE in the constructions of \cite{EC:AbrRoySch24} to obtain an HSS scheme that allows the evaluation of any \emph{adaptively} chosen function $\mathbf{x}^\intercal\cdot C(\mathbf{y})$ with improved communication $\lvert \mathbf{y}\rvert \cdot \poly(\sec, \log \lvert \mathbf{x}\rvert)$. 

Observe that our succinct HSS scheme can be also viewed as a form of 2-party spooky encryption for additively shared correlation \cite{C:DHRW16} where one of the parties can just send a small hash of its input instead of a full-size ciphertext. Once again, differently from \cite{C:DHRW16}, our construction does not allow us to choose the correlation after we committed to the inputs.

Finally, our HSS scheme can have interesting applications in the context of (public-key) pseudorandom correlation generators (PCGs) \cite{C:BCGIKS19,EC:OrlSchYak21,EC:AbrSchYak22}, especially when the tackled additively-shared correlation takes a long input from Alice: Let $\mathcal{C}(x)$ be the correlation function with long input. Alice can sample $s_0\sample\{0, 1\}^\sec$ and hash the function that maps $y$ to $(\mathcal{C}(x; r_i))_{i\in[n]}$ where $(r_0, \dots, r_{n-1})\gets \mathsf{PRG}(s_0\oplus y)$. Bob instead picks a random $y\sample\{0, 1\}^\sec$ and sends its encoding to Alice.

\paragraph{Rate-$\mathbf{1/2}$ Laconic Oblivious Transfer.} Since our construction of reverse TDH also extends to RAM programs using \cite{C:DHMWW24}, we obtain a new construction of laconic oblivious transfer \cite{C:CDGGMP17} with rate $1/2$ in the batch settings (which is best possible), where one transfers a set of messages with respect to different indices. The receiver hashes the function $f_D$ that has hardwired a database $D$ and, on input a $\mathsf{PRF}$ key $k$, does the following for all indices $i$ and all bits $b$:
\begin{itemize}
    \item If $b = D_i$: Return $0$.
    \item Else return $\mathsf{PRF}(k, i)$.
\end{itemize}
Then, on input a set of indices $I$ and pairs of bits $\{m_{i,0}, m_{i,1}\}_{i\in I}$, the sender
samples a key $k$ and sends $\ek \sample \Gen(\hk, k)$, where $\Gen$ is the encoding generation algorithm of the TDH, along with:
\[
c_{i,b} := \left\{\Dec(\hk, \td, d)_{i,b} \oplus m_{i,b}\right\}_{i,b}
\]
where we abuse the notation and assume that the TDH decoding algorithm $\Dec(\hk, \td, d)_{i,b}$ returns the $(i,b)$-bit of the share (which can be computed by a RAM program in time independent of the size of the database). Note that if $b = D_i$, then $\Dec(\hk, \td, d)_{i,b} = \Enc(\hk, \ek, f_D, \rho)_{i,b}$ and therefore the receiver can recover $m_{i,b}$ running the TDH encoding algorithm $\Enc$. Otherwise the pseudorandomness of $\mathsf{PRF}$ guarantees that the message is computationally hidden.

\paragraph{Attribute-Based Non-Interactive Key-Exchange.} Finally, reverse TDH implies the existence of a non-interactive key exchange (NIKE) with the following additional property: One of the two parties can include the hash of a function $f$ as part of their public key, whereas the other party holds an input $\mathbf{x}$. The NIKE succeeds if $f(\mathbf{x}) = 1$ and otherwise the key of either party is computationally indistinguishable from random. This can be constructed from reverse TDH in a natural manner: The former party hashes the function $F_{f, k_0}$ that takes as input some $\mathbf{x}$ and a key $k_1$ and returns $0$ if $f(\mathbf{x}) = 1$ and $\mathsf{PRF}(k_0 , \mathbf{x}) \oplus \mathsf{PRF}(k_1 , \mathbf{x})$ otherwise. If $f$ is satisfied, then both parties hold the same share, that can be used as a shared key, otherwise the pseudorandomness of $\mathsf{PRF}$ protects the share of either party.

\subsection{Concurrent Work}

A concurrent work by Boyle et al.~\cite{EC:BJSS25} also construct a family of TDHs for all functions $f:\{0,1\}^m \to \{0,1\}^\ell$, using a similar idea. An important difference is that they rely on the $|\mathbf{x}|^{2/3}$-succinct VOLE protocol from \cite{EC:AbrRoySch24}, whereas we (implicitly) construct a $\polylog(|\mathbf{x}|)$-succinct one, derived from our OTE. This translates in different parameters for the encoding key of the TDH: The encoding key of \cite{EC:BJSS25} has size $(|f|+ \ell^{2/3} + D)\cdot \poly(\sec)$, whereas in our TDH the encoding key has size $\lvert f\rvert \cdot \poly(\sec, \log m)$, which is close to optimal.

Besides TDHs, the results in \cite{EC:BJSS25} are largely orthogonal to ours. We also mention here that plugging in our OTE protocol in \cite{EC:BJSS25} leads to similar parameter improvements for their other applications. For instance, following the outline of \cite{EC:BJSS25}, OTE yields a rate-1 fully homomorphic encryption (FHE) with optimal parameters. We fully credit \cite{EC:BJSS25} for discovering the connection, and we sketch here the transformation only for the sake of completeness. Take any FHE scheme with almost-linear decryption \cite{TCC:BDGM19}, i.e., where decryption is a linear function is the secret key $\mathbf{s}$, followed by a rounding, such as \cite{C:GenSahWat13}. Then add to the public key an MOLE encoding $\mathsf{Enc}(\mathbf{s})$. We can compress $m$ ciphertexts $(\mathbf{c}_1, \dots, \mathbf{c}_m)$ as follows: Stack them into a matrix $\mathbf{C}$, then compute the MOLE-hash $\mathsf{Hash}(\mathbf{C})$, and run the Hash-Eval algorithm, to obtain a additive share of $\mathsf{Round}(\mathbf{C}\cdot \mathbf{s}) \in \{0,1\}^m$. Return the hash $\mathsf{Hash}(\mathbf{C})$, along with the bits of the share. Decryption works by simply running the Encoder-Eval algorithm of the MOLE, and reconstructing the output. Crucially, the compressed ciphertext consists of a hash (whose size is a fixed polynomial in the security parameter) plus $m$ bits, i.e., it is $\poly(\lambda) + m$, which is optimal. This improves upon \cite{TCC:BDGM19} since the size of the public key does not depend on $m$ (no amortization is needed).

\section{Preliminaries}
\paragraph{Notation.}
We denote the security parameter by $\sec$. We say that a function $\negl(\sec)$ is negligible if it vanishes faster than any polynomial, i.e., $0\le \negl(\sec) \leq \sec^{-\omega(1)}$. We say that an event happens with overwhelming probability, if it occurs with probability negligibly close to $1$. For any $n\in\N$, we use $[n]$ to denote the set $\{0, 1, \dots, n-1\}$. 

All vectors are denoted using lowercase bold font, whereas matrices are denoted using uppercase bold font; by default, all vectors are columns.
Given a matrix $\mathbf{M}$, we denote its transpose by $\mathbf{M}^\intercal$, whereas $\mathsf{vec}(\mathbf{M})$ denotes its vectorisation, i.e., stacking the columns of $\mathbf{M}$ one underneath the other. For any $n\in\N$, we denote the $n\times n$ identity matrix by $\id_n$ and the $n$-dimensional row vector where all entries are equal to 1 by $\mathbf{1}^n$. We define the Kronecker product between two $n\times m$ matrices $\mathbf{A} \otimes \mathbf{B}$ to be
\[
\mathbf{A} \otimes \mathbf{B} := \begin{pmatrix}
    a_{1,1}\mathbf{B} &\dots& a_{1,m}\mathbf{B}\\
    \vdots &\ddots& \vdots\\
    a_{n,1}\mathbf{B} &\dots& a_{n,m}\mathbf{B}
\end{pmatrix}.
\]
For a vector $\mathbf{x}\in \Z_q^n$, we denote by $\|\mathbf{x}\|_{\infty}$ its infinity norm, i.e., {the maximal distance from $0$ of its coordinate in the standard centered representative convention,} and we extend this notation to matrices by taking the maximum over their columns. 
For any integer $q>0$, let $\mathbf{g}_q^\intercal$ be the gadget row-vector $(1, 2, \dots, 2^{\log q})$ and we omit the subscript when clear from the context. Let $\mathbf{G}^{-1}$ be the algorithm that takes as input a matrix $\mathbf{M}\in\Z_q^{n\times m}$ and outputs a matrix $\mathbf{M}'\in\Z_2^{(n\cdot \log q)\times m}$, where each column is derived by stacking the bit-decompositions of the entries in the corresponding column of $\mathbf{M}$. Notice that $(\id_n\otimes \mathbf{g}_q^\intercal)\cdot \mathbf{G}^{-1}(\mathbf{M})=\mathbf{M}$. We use $\mathsf{Bits}(\mathbf{M})$ to denote $\mathsf{vec}(\mathbf{G}^{-1}(\mathbf{M}))$.
Given an integer $x\in\Z_q$ and an integer $p$ that divides $q$, we use $\lceil x \rfloor_p$ to denote rounding to $\Z_p$, in other words, if $x=y\cdot q/p+z$ where $ z\in[ -q/2p, q/2p)$, we have $\lceil x \rfloor_p=y$.

We state the following useful lemma about matrices.

\begin{lemma}[Matrix Linearisation]
\thmlab{linearisation}
    There exists a deterministic polynomial-time computable function:
    \[
    \mathsf{Lin} : \Z_q^{(t\cdot \ell)\times m} \to \Z_q^{t\times (m\cdot \ell)}
    \]
    such that, for any matrix 
    $\mathbf{B}\in \Z_q^{(t\cdot \ell)\times m}$ and any pair of vectors $\mathbf{x}\in\Z_q^{m}$ and $\mathbf{s}\in\Z_q^\ell$, it holds that:
    \[
    \mathsf{Lin}(\mathbf{B})(\mathbf{x}\otimes \mathbf{s})=(\id_t\otimes \mathbf{s}^\intercal) \mathbf{B}\mathbf{x} \quad\text{and}\quad \|\mathsf{Lin}(\mathbf{B})\|_\infty = \|\mathbf{B}\|_\infty.
    \]
    \end{lemma}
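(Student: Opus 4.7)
The plan is to construct $\mathsf{Lin}(\mathbf{B})$ explicitly as a rearrangement of the entries of $\mathbf{B}$ and then verify the two claimed properties by direct computation. The norm preservation will then be immediate since no arithmetic is performed on the entries.

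First, I would decompose $\mathbf{B} \in \Z_q^{(t \cdot \ell) \times m}$ into $t$ horizontal blocks $\mathbf{B}_0, \dots, \mathbf{B}_{t-1}$, each lying in $\Z_q^{\ell \times m}$, so that $\mathbf{B} = \begin{pmatrix} \mathbf{B}_0^\intercal & \cdots & \mathbf{B}_{t-1}^\intercal \end{pmatrix}^\intercal$. With this decomposition, the matrix $(\id_t \otimes \mathbf{s}^\intercal)$ is block-diagonal with diagonal blocks $\mathbf{s}^\intercal$, and so $(\id_t \otimes \mathbf{s}^\intercal)\mathbf{B}\mathbf{x}$ is the $t$-dimensional vector whose $i$-th entry is $\mathbf{s}^\intercal \mathbf{B}_i \mathbf{x} = \sum_{j \in [m]} \sum_{k \in [\ell]} (\mathbf{B}_i)_{k,j} \cdot x_j \cdot s_k$.

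Next, I would observe that by the paper's convention, $\mathbf{x} \otimes \mathbf{s}$ is the $(m\ell)$-dimensional vector whose entry in position $j \cdot \ell + k$ (with $j \in [m]$, $k \in [\ell]$) equals $x_j \cdot s_k$. Comparing with the sum above, it is natural to define $\mathsf{Lin}(\mathbf{B})$ as the $t \times (m\ell)$ matrix whose $i$-th row has entry $(\mathbf{B}_i)_{k,j}$ in position $j \cdot \ell + k$. Equivalently, the $i$-th row of $\mathsf{Lin}(\mathbf{B})$ is the transpose of $\mathsf{vec}(\mathbf{B}_i)$. This rearrangement is clearly computable in deterministic polynomial time.

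To conclude, I would verify that $\mathsf{Lin}(\mathbf{B})(\mathbf{x} \otimes \mathbf{s})$ has $i$-th coordinate
\[
\sum_{j \in [m]} \sum_{k \in [\ell]} (\mathbf{B}_i)_{k,j} \cdot (\mathbf{x} \otimes \mathbf{s})_{j\ell + k} = \sum_{j,k} (\mathbf{B}_i)_{k,j} \cdot x_j \cdot s_k = \mathbf{s}^\intercal \mathbf{B}_i \mathbf{x},
\]
which matches the $i$-th coordinate of $(\id_t \otimes \mathbf{s}^\intercal)\mathbf{B}\mathbf{x}$ as computed above. Since the entries of $\mathsf{Lin}(\mathbf{B})$ are exactly the entries of $\mathbf{B}$ (in a different order), we have $\|\mathsf{Lin}(\mathbf{B})\|_\infty = \|\mathbf{B}\|_\infty$. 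There is no real obstacle here; the only subtlety is choosing the index convention for the rows/columns of $\mathsf{Lin}(\mathbf{B})$ compatible with the paper's convention for the Kronecker product, which I have handled above.
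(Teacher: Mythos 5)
Your proposal is correct and takes essentially the same approach as the paper: both observe that the $h$-th entry of $(\id_t\otimes\mathbf{s}^\intercal)\mathbf{B}\mathbf{x}$ equals $\sum_{j,k}\mathbf{B}_{h\ell+k,j}\,x_j\,s_k$ and define $\mathsf{Lin}(\mathbf{B})$ as the matching rearrangement of the entries of $\mathbf{B}$ against the indexing of $\mathbf{x}\otimes\mathbf{s}$, so norm preservation is immediate. Your version just makes the block decomposition and the row-by-row identification with $\mathsf{vec}(\mathbf{B}_i)^\intercal$ explicit, which the paper leaves implicit.
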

\begin{proof}
Note that the $h$-th entry of $(\id_t\otimes \mathbf{s}^\intercal) \mathbf{B} \mathbf{x}$ is: 
\[\sum_{i\in[\ell]}\sum_{j\in[m]} \mathbf{B}_{\ell\cdot h+i, j} \cdot (\mathbf{s}_{i}\cdot \mathbf{x}_{j}).\]
In other words, there exists a matrix $\mathsf{Lin}(\mathbf{B})$, obtained by rearranging the entries of $\mathbf{B}$ such that $\mathsf{Lin}(\mathbf{B}) (\mathbf{x}\otimes \mathbf{s})=(\id_t\otimes \mathbf{s}^\intercal) \mathbf{B} \mathbf{x}$. Since rearranging the entries does not change the infinity norm, the claim follows.
\end{proof}

We also recall the definition of RMS program. Essentially this consists of an algebraic circuit over $\Z$ where we can multiply two wires only if at least one of them is an input.

\begin{definition}[Restricted Multiplication Straightline]
A restricted multiplication straightline program (RMS) consists of a polynomial-sized family of algebraic circuits over $\Z$ where the only allowed gates are the following:
\begin{itemize}
    \item Additions: this gate takes as input two wires $x$ and $y$ and outputs their sum $x+y$.
    \item Scalar multiplication: each of these gates is parametrised by a constant $\alpha\in\Z$. It takes as input a single wire $x$ and outputs $\alpha\cdot x$.
    \item Multiplication: this gate takes as input two wires $x$ and $y$ \emph{where $y$ is an input to the RMS program}. The output is their product $x\cdot y$.
\end{itemize}
We say that the program has depth $d$ if the \emph{multiplicative depth} of the program is $d$.

Let $T$ be a positive integer. We say that an RMS program is $T$-bounded if, during any evaluation over \emph{binary} inputs, the absolute value of the wires never exceeds $T$.
\end{definition}

We recall the following result which states that any circuit in $\mathsf{NC}_1$ can be converted into a polynomial size RMS program (paying exponentially in the depth).

\begin{theorem}[\cite{barrington,C:BoyGilIsh16}]
Let $f:\{0, 1\}^n\rightarrow \{0, 1\}$ be described by a boolean circuit of size $s$ and depth $d$ made entirely of NAND gates. Then, $f$ can be computed using a $1$-bounded RMS program of depth at most $2^d$ and size $O(s\cdot 2^d)$.
\end{theorem}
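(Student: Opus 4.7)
The plan is to route the proof through width-$5$ permutation branching programs, which form a natural bridge between boolean circuits (via Barrington's theorem) and RMS programs.

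First, I would apply a NAND-tailored variant of Barrington's theorem to convert the depth-$d$, size-$s$ circuit into a width-$5$ permutation branching program of length $L = O(s \cdot 2^d)$. Each step reads one input bit $x_{i_t}$ and updates a state vector $v^{(t)} \in \{0,1\}^5$ by $v^{(t+1)} = M_{x_{i_t}}^{(t)} v^{(t)}$, where $M_0^{(t)}, M_1^{(t)} \in \{0,1\}^{5\times 5}$ are constant permutation matrices; the initial state is a fixed standard basis vector and the output bit is a fixed $\{0,1\}$-linear function of $v^{(L)}$.

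Second, I would compile the branching program into an RMS program by rewriting the transition as
\[
v^{(t+1)} = M_0^{(t)} v^{(t)} + x_{i_t} \cdot \bigl(M_1^{(t)} - M_0^{(t)}\bigr) v^{(t)},
\]
which is a scalar-linear combination of the current memory entries followed by a single multiplication of an input by a memory value---exactly what an RMS step permits. Chaining $L$ such steps yields an RMS program of size $O(L) = O(s\cdot 2^d)$. The multiplicative depth is determined not by $L$ itself but by the longest chain of input-multiplications traversed when extracting a given output entry; exploiting that a depth-$d$ NAND circuit computes a multilinear polynomial of total degree at most $2^d$ (since each NAND is $1 - ab$), this depth can be kept to $2^d$. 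Finally, $1$-boundedness is immediate because $v^{(t)}$ is always a standard basis vector (permutation matrices preserve $\{0,1\}^5$) and all scalar coefficients arising from the transitions lie in $\{-1,0,1\}$, so no wire ever exceeds magnitude $1$ on binary inputs.

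The main obstacle I anticipate is obtaining the stated constants: the textbook Barrington construction simulates each gate by a commutator $ABA^{-1}B^{-1}$, which quadruples the sub-program length per gate and gives $O(s\cdot 4^d)$, not $O(s\cdot 2^d)$. Getting the factor down to $2^d$ requires a NAND-specific choice of $S_5$ generators so the gate gadgets compose more tightly---or equivalently, processing two circuit layers at once via the identity $\mathrm{NAND}(\mathrm{NAND}(a,b),\mathrm{NAND}(c,d)) = ab + cd - abcd$, which has only three monomials and can be folded into a constant number of RMS layers per pair of circuit levels. This bookkeeping, rather than the core branching-program-to-RMS conversion, is where most of the quantitative work sits.
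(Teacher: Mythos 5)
The paper does not prove this statement—it is cited to the references in the theorem bracket. Comparing your proposal against the bound that is actually claimed, though, I believe it has a real gap.

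The central error is in how you argue the multiplicative depth. You write that after translating the branching program to an RMS program step by step, ``the multiplicative depth is determined not by $L$ itself but by the longest chain of input-multiplications traversed,'' and you then bound that chain by the total degree $2^d$ of the polynomial computed. But the depth of an RMS program is a \emph{syntactic} quantity of the program as written, not a quantity of the function it computes. In the step-by-step compilation $v^{(t+1)} = M_0^{(t)} v^{(t)} + x_{i_t}\bigl(M_1^{(t)} - M_0^{(t)}\bigr)v^{(t)}$, every step multiplies the current memory state by an input bit, so every entry of $v^{(t+1)}$ is at multiplicative depth exactly one more than $v^{(t)}$. Cancellations that lower the \emph{degree} of the final polynomial do not lower the \emph{depth} of the program: they would only help if you exhibited a different RMS program realizing the low-degree representation, which you do not. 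Consequently your construction has depth $L$, not $2^d$, and with classical Barrington $L = \Theta(4^d)$. The obstacle you flag at the end (losing a factor $2^d$) is not a ``bookkeeping'' issue to be absorbed by clever $S_5$ generators—it is the same gap seen from a different angle—and the two-layer identity $\mathrm{NAND}(\mathrm{NAND}(a,b),\mathrm{NAND}(c,d)) = ab+cd-abcd$ does not obviously compile into a constant number of RMS layers, since $abcd$ is still a product of four inputs that must be formed by a length-$4$ chain.

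The route that actually yields depth and size $2^d$ bypasses width-$5$ permutation branching programs entirely. Unroll the sub-circuit above each wire into a NAND formula; a depth-$d$ circuit gives a formula with at most $2^d$ leaves. Then compile the formula directly into an RMS program with an accumulator: define $\mathrm{eval}(a, x_i) := a\cdot x_i$ (one RMS multiplication of memory $a$ by input $x_i$), and $\mathrm{eval}(a, \mathrm{NAND}(L,R)) := a - \mathrm{eval}(\mathrm{eval}(a, L), R)$. By induction $\mathrm{eval}(a, v) = a\cdot F_v$, every intermediate value is a product of the initial $a=1$ with $\{0,1\}$-valued wire labels and hence lies in $\{0,1\}$, giving $1$-boundedness for free, and the multiplicative depth equals the number of leaves, i.e., at most $2^d$. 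Doing this per wire gives total size $O(s\cdot 2^d)$. This is the Cleve/BGI-style reduction the theorem's citation is gesturing at; Barrington's commutator machinery is where the spurious $4^d$ in your approach comes from, and it is not needed here.
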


\subsection{Lattices and Learning with Errors}
Throughout the paper, we often rely on a low-norm distribution $\chi(\sec)$ over $\Z$. We say that $\chi(\sec)$ is $B(\sec)$-bounded if: \[\Pr\left[\lvert e\rvert\le B(\sec)\middle| e\sample\chi(\sec)\right]=1.\] Sometimes, we abuse notation and we write $\mathbf{v}\sample\chi(\sec)$, even if $\mathbf{v}$ is a vector: With this, we mean that each entry of $\mathbf{v}$ is sampled from $\chi(\sec)$ independently of all the others. {Henceforth, we always assume that $\chi(\lambda)$ is a $B(\sec)$-bounded distribution.}

We recall the learning with error assumption, introduced for the first time by Regev \cite{STOC:Regev05}.
\begin{definition}[Learning with Errors]
    Let $k:=k(\sec)$, $m:=m(\sec)$ and $q:=q(\sec)$ be positive integers. Let $\chi(\sec)$ be a low-norm distribution over $\Z$.
    We say that the $(\chi, k, m, q)$-LWE problem is hard if, for every PPT adversary $\Adv$ there exists a negligible function $\negl(\sec)$ such that, for every $\sec\in\N$, we have
    \[\left\lvert \Pr\left[\Adv(\bbbone^\sec, \mathbf{M}, \mathbf{u}_b)=b\middle| \begin{aligned}[c]
    &b\sample\{0, 1\}, \mathbf{s}\sample\Z_q^k\\
    &\mathbf{M}\sample\Z_q^{m\times k}, \mathbf{e}\sample\chi(\sec)^m\\
    &\mathbf{u}_0\gets \mathbf{M}\cdot \mathbf{s}+\mathbf{e}, \mathbf{u}_1\sample\Z_q^m\\
    \end{aligned}\right]-\frac{1}{2}
    \right\rvert\le \negl(\sec).\]
Suppose that $\chi(\sec)$ is $B(\sec)$-bounded. We call the quantity $\alpha:=q/B$ the modulus-noise ratio.
\end{definition}
We also define the \emph{non-uniform LWE} assumption identically as above, except that the matrix $\mathbf{M}$ is no longer uniformly random over $\Z_q^{m\times k}$ but over $\Z_2^{m\times k}$. It is shown in \cite{C:BLMR13} that non-uniform LWE is at least as hard as LWE, with a slightly increased parameter size.

\begin{theorem}[\cite{C:BLMR13}]
    Assume the hardness of $(\chi, k, m, q)$-LWE. Then, $(\chi, k\cdot \log q, m, q)$-NLWE is hard.
\end{theorem}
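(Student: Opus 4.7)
The plan is to prove this via a direct reduction: assume an adversary $\mathcal{A}$ breaks $(\chi, k\log q, m, q)$-NLWE, and construct an adversary $\mathcal{B}$ breaking $(\chi, k, m, q)$-LWE. On input an LWE challenge $(\mathbf{A},\mathbf{b})$ where $\mathbf{A}\sample\Z_q^{m\times k}$ and $\mathbf{b}$ is either $\mathbf{A}\mathbf{s}+\mathbf{e}$ or uniform, the reduction $\mathcal{B}$ must transform this into a correctly-distributed NLWE challenge in dimension $k\log q$ with a uniformly random binary matrix, while preserving the noise distribution $\chi$ and inheriting $\mathcal{A}$'s advantage.

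The key structural observation is that every $\mathbf{A}\in\Z_q^{m\times k}$ admits the factorization $\mathbf{A}=\mathbf{B}_0\cdot(\id_k\otimes \mathbf{g}_q^\intercal)$, where $\mathbf{B}_0\in\{0,1\}^{m\times k\log q}$ is the entry-wise bit decomposition of $\mathbf{A}$ (formally, $\mathbf{B}_0=\mathbf{G}^{-1}(\mathbf{A}^\intercal)^\intercal$, which follows from transposing the identity $(\id_k\otimes\mathbf{g}_q)\mathbf{G}^{-1}(\mathbf{A}^\intercal)=\mathbf{A}^\intercal$). When $q$ is a power of two, bit decomposition is a bijection between $\Z_q$ and $\{0,1\}^{\log q}$, so $\mathbf{B}_0$ is distributed as a uniform element of $\{0,1\}^{m\times k\log q}$, matching the NLWE matrix distribution. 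Consequently, $\mathbf{A}\mathbf{s}+\mathbf{e}=\mathbf{B}_0\mathbf{s}_0+\mathbf{e}$ where $\mathbf{s}_0:=(\id_k\otimes \mathbf{g}_q^\intercal)\mathbf{s}\in\Z_q^{k\log q}$.

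The remaining obstacle is that $\mathbf{s}_0$ is \emph{not} uniformly random in $\Z_q^{k\log q}$: it lies in a $k$-dimensional affine image of $\mathbf{s}$. To fix this, $\mathcal{B}$ samples a fresh uniform $\mathbf{z}\sample\Z_q^{k\log q}$, computes $\mathbf{b}':=\mathbf{b}+\mathbf{B}_0\mathbf{z}$, and forwards $(\mathbf{B}_0,\mathbf{b}')$ to $\mathcal{A}$. In the real case, $\mathbf{b}'=\mathbf{B}_0(\mathbf{s}_0+\mathbf{z})+\mathbf{e}$ and $\mathbf{s}_0+\mathbf{z}$ is uniform over $\Z_q^{k\log q}$, yielding a perfectly-distributed NLWE sample with noise still drawn from $\chi$ (since $\mathbf{e}$ is untouched). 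In the uniform case, $\mathbf{b}+\mathbf{B}_0\mathbf{z}$ remains uniform by shift-invariance, independently of $\mathbf{B}_0$. Hence $\mathcal{B}$'s advantage equals $\mathcal{A}$'s, contradicting the assumed hardness of $(\chi,k,m,q)$-LWE.

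The main subtle point is guaranteeing that $\mathbf{B}_0$ is genuinely uniform over $\{0,1\}^{m\times k\log q}$; this is immediate when $q$ is a power of two. For non-power-of-two moduli, the bit decomposition can represent values exceeding $q$, inducing a small statistical gap which can either be absorbed as a negligible loss in the reduction or avoided by restricting to $q=2^{\lceil\log q\rceil}$. Beyond this bookkeeping, no deeper obstacle arises, since the entire argument is an algebraic identity plus a one-time-pad style randomization.
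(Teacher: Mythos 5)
The paper imports this theorem from \cite{C:BLMR13} and gives no proof of its own; your reduction is correct for $q$ a power of two and matches the standard argument behind the citation: bit-decompose $\mathbf{A}=\mathbf{B}_0(\id_k\otimes\mathbf{g}_q^\intercal)$ so that $\mathbf{B}_0\in\{0,1\}^{m\times k\log q}$ is uniform, fold the secret into $\mathbf{s}_0=(\id_k\otimes\mathbf{g}_q^\intercal)\mathbf{s}$, and rerandomize it with fresh uniform $\mathbf{z}$ via $\mathbf{b}\mapsto\mathbf{b}+\mathbf{B}_0\mathbf{z}$, which preserves both branches of the challenge distribution while leaving the noise untouched. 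One correction to your side remark: for non-power-of-two $q$, the bit-decomposition of a uniform $\Z_q$ entry into $\lceil\log q\rceil$ bits is \emph{not} statistically close to uniform on $\{0,1\}^{\lceil\log q\rceil}$ (the per-entry distance is roughly $1-q/2^{\lceil\log q\rceil}$, which can approach $1/2$, and this is then compounded over $mk\lceil\log q\rceil$ entries), so that gap cannot be ``absorbed as a negligible loss''; your proof is unaffected only because the paper restricts to $q$ a power of two throughout.
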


\section{Non-Interactive Oblivious Tensor Evaluation}
\seclab{NIOTE}

In the following we define and construct a succinct Non-Interactive Oblivious Tensor Evaluation (NI-OTE) protocol.

\subsection{Definitions} \label{defOTE}

We begin by defining the syntax of NI-OTE.

\begin{definition}[Non-interactive oblivious tensor evaluation]
Let $m:=m(\sec)$, $\ell:=\ell(\sec)$, and  $q:=q(\sec)$ be a positive integer. A NI-OTE for $\Z_q^m\otimes\Z_q^\ell$ consists of a tuple of PPT algorithms $(\Setup, \Hash, \Enc, \Hash\Eval, \Enc\Eval)$ with the following syntax:
\begin{description}
    \item[$\Setup(\bbbone^\sec)$:] The setup algorithm is probabilistic, takes as input the security parameter $\bbbone^\sec$ and outputs a public key $\pk$.
    
    \item[$\Hash(\pk, \mathbf{x})$:] The hashing algorithm is probabilistic and takes as input a public key $\pk$ and the description of a vector $\mathbf{x}\in\Z_q^{m}$. The output is a digest $d$ and hasher's private information $\psi$.
    \item[$\Enc(\pk, \mathbf{y})$:] The encoding algorithm is probabilistic and takes as input a public key $\pk$, and a vector $\mathbf{y}\in\Z_q^{\ell}$. The output is an encoding $E$ and encoder's private information $\phi$. 
    \item[$\Hash\Eval(\pk, E, \psi)$:] The hasher's evaluation algorithm is deterministic and takes as input a public key $\pk$, an encoding $E$ and hasher's private information $\psi$. The output is a vector $\mathbf{v}\in\Z_q^{m\cdot \ell}$.
    \item[$\Enc\Eval(\pk, d, \phi)$:] The encoder's evaluation algorithm is deterministic and takes as input a public key $\pk$, a digest $d$ and encoder's private information $\phi$. The output is a vector $\mathbf{w}\in\Z_q^{m\cdot \ell}$.
\end{description}
\end{definition}
Sometimes it will be convenient for us to fix the private information $\phi$ and provide it as an input to the encoding algorithm. In a slight abuse of notation, we denote this by 
$\Enc(\pk, \mathbf{y}, {[\phi]})$, in which case, the algorithm just outputs $E$. If the NI-OTE scheme satisfies this syntactical requirement, which in particular means that $\phi$ does not depend on $\mathbf{y}$, we say that the scheme is \emph{programmable}.

Additionally, we say that an NI-OTE has a \emph{bilinear encoder evaluation} if:
\begin{itemize}
    \item The digest $d$ is a vector in $\Z_q^n$.
    \item The encoder secret information $\phi$ is a vector in $\Z_q^k$.
    \item The encoder evaluation algorithm consists of
    \[
    \Enc\Eval(\pk, d, \phi):=\mathbf{P}\cdot(d\otimes \phi\otimes \mathbf{g}_q)
    \]
    where the matrix $\mathbf{P} \in\Z_q^{(m\cdot \ell)\times (n\cdot k\cdot \log q)}$ can be publicly derived from $\pk$.
\end{itemize}
We say that an NI-OTE is \emph{succinct} if the size of the hash and the size of the encodings are sublinear in the size of the  hasher's input. Depending on the context, we will make the dependence explicit. If only the hash is sublinear, then we say that the scheme is \emph{half-succinct}.

\paragraph{Correctness.} Next, we define (approximate) correctness for an NI-OTE, parametrized by an error function $\alpha$. If $\alpha = 0$, then we say that the NI-OTE is perfectly correct, or simply correct.
{
\begin{definition}[$\alpha$-Correctness]\label{def:alphacorrect}
An NI-OTE scheme $(\Setup, \allowbreak \Hash,  \allowbreak \Enc,  \allowbreak \Hash\Eval,  \allowbreak \Enc\Eval)$ is $\alpha$-adaptively correct if 
there exists a negligible function $\negl(\sec)$ such that, for every adversary $\Adv$ and sufficiently large $\sec\in\N$, we have that:
\[
\Pr\left[\lVert\mathbf{v}+\mathbf{w}-\mathbf{x}\otimes \mathbf{y}\rVert_\infty>\alpha\cdot\lVert \mathbf{x}\rVert_\infty\ \right] \leq \negl(\sec)
\]
where the probability is taken over the random choice of $\pk\sample\Setup(\bbbone^\sec)$, $(\mathbf{x}, \mathbf{y})\sample\Adv(1^\sec, \pk)$, $(d, \psi)\sample\Hash(\pk, \mathbf{x})$, $(E, \phi)\sample\Enc(\pk, \mathbf{y})$, and we define $\mathbf{v}:= \Hash\Eval(\pk, E, \psi)$ $\mathbf{w}:= \Enc\Eval(\pk, d, \phi)$.
\end{definition}
The definition of approximate correctness will be sufficient for our applications. Nevertheless, for future reference, we mention here that there is a general (standard) method to drive the correctness error down to $\alpha = 0$, which may be more convenient to work with. Specifically, let $p$ to be a divisor of $q$ with $q= p\cdot \alpha\cdot\lVert \mathbf{x}\rVert_\infty\cdot\lambda^{\omega(\log \lambda)}$. Instead of encoding $\mathbf{y}$, one encodes $q/p \cdot \mathbf{y}$ and sends along a random PRG seed $\mathsf{seed}$. Then, we return \[\mathbf{v}:=\Bigl\lceil\Hash\Eval(\pk, E, \psi)+\mathsf{PRG}(\mathsf{seed})\Bigr\rfloor_p, \qquad\text{and}\qquad \mathbf{w}:=\Bigl\lceil\Enc\Eval(\pk, d, \phi)-\mathsf{PRG}(\mathsf{seed})\Bigr\rfloor_p.\] 
By the $\alpha$-correctness guarantee of the protocol, it holds that:
\[
\Hash\Eval(\pk, E, \psi) + \Enc\Eval(\pk, d, \phi) = q/p \cdot \mathbf{x} \otimes \mathbf{y} +\mathbf{e}, \qquad \qquad \lVert\mathbf{e}\rVert_\infty\le \alpha\lVert \mathbf{x}\rVert_\infty.
\]
Moreover, by the PRG security, $\mathbf{v}$ is indistinguishable to a random element in $\Z_q^{m\ell}$, independent of $\mathbf{x}$, $\mathbf{y}$ and the public parameters. Following \cite{C:DHRW16}, with overwhelming probability over the randomness of $\mathsf{seed}$, we obtain
\begin{align*}
\mathbf{v}+\mathbf{w}&=\Bigl\lceil\Hash\Eval(\pk, E, \psi)+\mathsf{PRG}(\mathsf{seed})+\Enc\Eval(\pk, d, \phi)-\mathsf{PRG}(\mathsf{seed})\Bigr\rfloor_p\\
&=\Bigl\lceil q/p \cdot \mathbf{x} \otimes \mathbf{y}+\mathbf{e}\Bigr\rfloor_p\\
&= \lceil q/p \cdot \mathbf{x} \otimes \mathbf{y}\rfloor_p\\
&=  \mathbf{x} \otimes \mathbf{y} \quad(\bmod p).
\end{align*}
Notice that with these modifications, the protocol outputs the correct output with overwhelming probability, and this holds even if the inputs are chosen by the adversary after seeing the public parameters.}

\paragraph{Encoder Privacy.} We define encoder privacy, which guarantees that the encoding is simulatable, without knowing the underlying vector. This is formalized as follows.

\begin{definition}[Encoder Privacy]
Consider the following experiment $\mathsf{EncExp}_{\mathcal{A}, \Sim}(\bbbone^\sec)$ parametrized by an adversary $\mathcal{A} = (\mathcal{A}_0, \allowbreak \mathcal{A}_1)$ and a simulator $\Sim$:
\begin{itemize}
    \item Sample a public key $\pk\sample\Setup(\bbbone^\sec)$.
    \item Activate the adversary $(\mathbf{y}, \aux)\sample\Adv_0(\bbbone^\sec, \pk)$.
    \item Sample a random bit $b\sample\{0, 1\}$. 
    \item If $b = 0$ compute $(E_0, \phi_0)\sample\Enc(\pk, \mathbf{y})$, else compute 
    $E_1\sample\Sim(\bbbone^\sec, \pk)$.
    \item Compute $b' \gets \Adv_1(E_b, \aux)$.
    \item Return $1$ if and only if $b =b'$.
\end{itemize}
A non-interactive OTE scheme $(\Setup, \allowbreak \Hash, \allowbreak \Enc, \allowbreak \Hash\Eval, \allowbreak \Enc\Eval)$ is encoder-private if there exists a PPT simulator $\Sim$, such that for every PPT adversary $\mathcal{A}$, there exists a negligible function $\negl(\sec)$ such that, for every $\sec\in\N$, we have that:
\[
\left|\frac{1}{2} - \Pr\left[\mathsf{EncExp}_{\mathcal{A}, \Sim}(\bbbone^\sec) = 1\right] \right|\le\negl(\sec).
\]
\end{definition}

{
\paragraph{Hasher Privacy.} We also define hasher privacy, which guarantees that the hash is simulatable, without knowing the underlying vector. This is formalized as follows.}
{
\begin{definition}[Encoder Privacy]
Consider the following experiment $\mathsf{HashExp}_{\mathcal{A}, \Sim}(\bbbone^\sec)$ parametrized by an adversary $\mathcal{A} = (\mathcal{A}_0, \allowbreak \mathcal{A}_1)$ and a simulator $\Sim$:
\begin{itemize}
    \item Sample a public key $\pk\sample\Setup(\bbbone^\sec)$.
    \item Activate the adversary $(\mathbf{x}, \aux)\sample\Adv_0(\bbbone^\sec, \pk)$.
    \item Sample a random bit $b\sample\{0, 1\}$. 
    \item If $b = 0$ compute $(d_0, \psi_0)\sample\Hash(\pk, \mathbf{x})$, else compute 
    $d_1\sample\Sim(\bbbone^\sec, \pk)$.
    \item Compute $b' \gets \Adv_1(d_b, \aux)$.
    \item Return $1$ if and only if $b =b'$.
\end{itemize}
A non-interactive OTE scheme $(\Setup, \allowbreak \Hash, \allowbreak \Enc, \allowbreak \Hash\Eval, \allowbreak \Enc\Eval)$ is hasher-private if there exists a PPT simulator $\Sim$, such that for every PPT adversary $\mathcal{A}$, there exists a negligible function $\negl(\sec)$ such that, for every $\sec\in\N$, we have that:
\[
\left|\frac{1}{2} - \Pr\left[\mathsf{HashExp}_{\mathcal{A}, \Sim}(\bbbone^\sec) = 1\right] \right|\le\negl(\sec).
\]
\end{definition}}
\subsection{Half-Succinct NI-OTE from LWE}\label{sec:halfNIOTE}

We present our first NI-OTE protocol that only satisfies a weak form of succinctness, namely that only the size of the digest is sublinear in the size of the hasher's input. Let $n:= k\cdot \log q$, where $k=\Theta(\sec)$, and suppose that for convenience that $q$ is a power of $2$. Our protocol is described in Construction~\ref{prot:HCVOLE}.

\begin{protocol}[Half-Succinct NI-OTE]\label{prot:HCVOLE}

\begin{description}
\item[$\Setup(1^\secpar)$:] Sample $\mathbf{A} \sample \Z_2^{n \times m}$ and $\mathbf{B} \sample \Z_2^{n\times (m \cdot \ell\cdot n)}$, then return $\pk:=(\mathbf{A}, \mathbf{B})$.

\item[$\Hash(\pk, \mathbf{x})$:] Compute $\mathbf{d}:= \mathbf{A}\cdot \mathbf{x}$ and return $d:=\mathbf{d}$ and $\psi:=\mathbf{x}$.

\item[$\Enc(\pk, \mathbf{y})$:] Sample $\mathbf{s}\sample \Z_q^k$, $\hat{\mathbf{E}}\sample\chi(\sec)$, 
and $\overline{\mathbf{E}}\sample\chi(\sec)$. Compute 
\[\mathbf{C}:= \mathbf{A}^\intercal \cdot \mathbf{B}\cdot (\id_{m\cdot \ell}\otimes \mathbf{s}\otimes \mathbf{g}_q)+ \mathbf{A}^\intercal\cdot \hat{\mathbf{E}}+ \overline{\mathbf{E}}+ \id_{m} \otimes \mathbf{y}^\intercal.\]
Return $E:=\mathbf{C}$ and $\phi:=\mathbf{s}$.

\item[$\Hash\Eval(\pk, E, \psi)$:] Return $\mathbf{v}:=\mathbf{C}^\intercal\cdot \mathbf{x}$.

\item[$\Enc\Eval(\pk, d, \phi)$:] Return $\mathbf{w}:= -(\id_{m\cdot \ell}\otimes \mathbf{s}^\intercal\otimes \mathbf{g}_q^\intercal) \cdot \mathbf{B}^\intercal\cdot \mathbf{d}$.
\end{description}
\end{protocol}
The scheme is trivially programmable, since $\phi$ consists of a uniformly sampled vector $\mathbf{s}$ that in particular is independent from $\mathbf{y}$.
Then, we observe that the scheme has indeed a bilinear encoding evaluation algorithm. By Lemma \ref{thm:linearisation}, it holds that:
\begin{equation}\label{eq:bilinear}
-(\id_{m\cdot \ell}\otimes \mathbf{s}^\intercal\otimes \mathbf{g}_q^\intercal) \cdot \mathbf{B}^\intercal\cdot \mathbf{d} 
=
\mathsf{Lin}(-\mathbf{B}^\intercal) (\mathbf{d}\otimes \mathbf{s}\otimes \mathbf{g}_q)
\end{equation}
with $\lVert \mathsf{Lin}(-\mathbf{B}^\intercal)\rVert_\infty=\lVert -\mathbf{B}^\intercal\rVert_\infty\le 1$. We can also bound the norm of the digest by 
\begin{equation}\label{eq:bound_d}
    \lVert \mathbf{d}\rVert_\infty=\lVert \mathbf{A}\cdot \mathbf{x}\rVert_\infty\le m\cdot \lVert \mathbf{A}\rVert_\infty\cdot \lVert \mathbf{x}\rVert_\infty\le m\cdot \lVert \mathbf{x}\rVert_\infty
\end{equation}
with a triangle inequality. Finally, it is easy to see that the scheme is half succinct, since the hash $\mathbf{d}$ is $n$-dimensional vector over $\Z_q$, whose size is in particular independent of the length of $\mathbf{x}$ and $\mathbf{y}$. On the other hand, the encoding consists of an $m\times (m\cdot \ell)$ matrix over $\Z_q$. 

Next, we argue that the scheme satisfies approximate correctness. Indeed, let us rewrite
\begin{align*}
    \mathbf{C}&=\mathbf{A}^\intercal \cdot \mathbf{Z}+ \mathbf{\widetilde{E}}+ \id_{m} \otimes \mathbf{y}^\intercal,
\end{align*}
where $\mathbf{Z}:=\mathbf{B}\cdot (\id_{m\cdot \ell}\otimes \mathbf{s}\otimes \mathbf{g}_q)$ and $\mathbf{\widetilde{E}}:=\mathbf{A}^\intercal\cdot \hat{\mathbf{E}}+ \overline{\mathbf{E}}$. Substituting,  we obtain
\begin{align*}
\mathbf{v}+\mathbf{w}&=\mathbf{C}^\intercal \cdot \mathbf{x}-\mathbf{Z}^\intercal \cdot \mathbf{d}\\
&=(\mathbf{Z}^\intercal \cdot \mathbf{A})\cdot \mathbf{x}+\mathbf{\widetilde{E}}^\intercal\cdot \mathbf{x}+(\id_{m} \otimes \mathbf{y})\cdot \mathbf{x}-\mathbf{Z}^\intercal \cdot \mathbf{d}\\
&=\mathbf{Z}^\intercal \cdot \mathbf{d}+\mathbf{\widetilde{E}}^\intercal\cdot \mathbf{x}+\mathbf{x} \otimes \mathbf{y}-\mathbf{Z}^\intercal \cdot \mathbf{d}\\
&=\mathbf{x} \otimes \mathbf{y}+\mathbf{\widetilde{E}}^\intercal\cdot \mathbf{x}.
\end{align*}
Since $\mathbf{A}$ is a matrix in $\Z_2$ with $n$ rows, the entries of $\mathbf{\widetilde{E}}=\mathbf{A}^\intercal\cdot \hat{\mathbf{E}}+\overline{\mathbf{E}}$ is obtained by adding at most $n+1$ entries of the vectors $\overline{\mathbf{E}}$ and $\hat{\mathbf{E}}$, which are $B(\sec)$-bounded. In other words, \[\left\lVert \mathbf{A}^\intercal \cdot \hat{\mathbf{E}}+\overline{\mathbf{E}}\right\rVert_\infty\le (n+1)\cdot B(\sec).\] 
Therefore we can bound the correctness error by 
\begin{equation}\label{eq:alpha_correct}
\left\lVert\mathbf{\widetilde{E}}^\intercal\cdot \mathbf{x}\right\rVert_\infty\le m\cdot (n+1) \cdot B(\sec)\cdot \lVert \mathbf{x}\rVert_\infty.
\end{equation}
with a triangle inequality. Finally, we show that the scheme satisfies encoder privacy.

\begin{theorem}[Encoder Privacy]
    Assuming the hardness of LWE, Construction~\ref{prot:HCVOLE} satisfies encoder privacy.
\end{theorem}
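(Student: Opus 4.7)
The plan is a two-hop hybrid argument reducing to LWE, together with a simulator that simply outputs a uniformly random matrix $E \sample \Z_q^{m\times (m\cdot \ell)}$. The goal is to show that the real encoding $\mathbf{C}$ is computationally indistinguishable from uniform, even given the public key $(\mathbf{A}, \mathbf{B})$.

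The first hop handles the inner LWE term built from $\mathbf{B}$ and $\mathbf{s}$. I would first rewrite $\mathbf{B}\cdot(\id_{m\ell}\otimes \mathbf{s}\otimes \mathbf{g}_q^\intercal) = \mathbf{B}^*\cdot(\id_{m\ell}\otimes \mathbf{s})$, where $\mathbf{B}^* := \mathbf{B}\cdot(\id_{m\ell\cdot k}\otimes \mathbf{g}_q^\intercal) \in \Z_q^{n\times (m\ell k)}$. The map $\mathbf{B}\mapsto \mathbf{B}^*$ is a bijection between $\Z_2^{n\times (m\ell n)}$ and $\Z_q^{n\times (m\ell k)}$ (by bit decomposition, using $n = k\log q$), so sampling $\mathbf{B}$ uniformly over binary matrices is equivalent to sampling $\mathbf{B}^*$ uniformly over $\Z_q$, with $\mathbf{B}$ publicly recoverable from $\mathbf{B}^*$. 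Viewing $\mathbf{B}^*(\id_{m\ell}\otimes \mathbf{s})+\mathbf{E}'$ column by column exhibits it as a stack of $m\ell$ LWE samples with the same $k$-dimensional secret $\mathbf{s}$ and noise $\mathbf{E}'$, i.e.\ an $(\chi, k, n\cdot m\ell, q)$-LWE instance. By LWE, this term is indistinguishable from a uniformly random $\mathbf{Z}\in \Z_q^{n\times m\ell}$ given $\mathbf{B}^*$. The reduction is immediate: it receives $(\mathbf{B}^*, \mathbf{T})$ from the challenger, recovers $\mathbf{B}$ by bit decomposition, samples $\mathbf{A}$ and $\hat{\mathbf{E}}$ itself, runs $\Adv_0$ to obtain $(\mathbf{y},\aux)$, and hands $\Adv_1$ the matrix $\mathbf{A}^\intercal \mathbf{T} + \hat{\mathbf{E}} + \id_m\otimes \mathbf{y}^\intercal$. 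After this hop the encoding has the form $\mathbf{C}_1 = \mathbf{A}^\intercal \mathbf{Z} + \hat{\mathbf{E}} + \id_m\otimes \mathbf{y}^\intercal$.

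The second hop peels off the outer $\mathbf{A}$ using the non-uniform (binary-matrix) variant of LWE. Each column of $\mathbf{A}^\intercal \mathbf{Z} + \hat{\mathbf{E}}$ is a non-uniform LWE sample with binary matrix $\mathbf{A}^\intercal\in \Z_2^{m\times n}$, an independent uniformly random $n$-dimensional secret (the corresponding column of $\mathbf{Z}$), and noise drawn from $\chi$. Invoking the cited reduction from non-uniform LWE to standard LWE, and applying a standard hybrid over the $m\ell$ columns, yields that $\mathbf{A}^\intercal \mathbf{Z} + \hat{\mathbf{E}}$ is indistinguishable from a uniformly random $\mathbf{R}\in\Z_q^{m\times m\ell}$. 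Again, the reduction samples $\mathbf{B}$ itself so as to deliver $(\mathbf{A},\mathbf{B})$ as the public key. After this hop the encoding becomes $\mathbf{C}_2 = \mathbf{R} + \id_m\otimes \mathbf{y}^\intercal$, which is uniformly distributed over $\Z_q^{m\times m\ell}$ and in particular independent of $\mathbf{y}$; this matches exactly the distribution produced by the simulator $\Sim$ defined above.

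There is no real obstacle here beyond correctly packaging the two LWE invocations: the only mildly delicate point is the bijective gadget rewriting $\mathbf{B}\mapsto \mathbf{B}^*$ needed to expose a ``$\Z_q$-uniform'' LWE matrix from the binary $\mathbf{B}$ in Hybrid~1, and the appeal to the non-uniform LWE theorem to accommodate the binary $\mathbf{A}^\intercal$ in Hybrid~2. Combining the two hops by the triangle inequality yields a negligible distinguishing advantage, which completes the proof.
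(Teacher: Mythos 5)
Your proposal is correct and matches the paper's own proof: the same two hybrids (first randomize the inner term $\mathbf{B}\cdot(\id_{m\ell}\otimes\mathbf{s}\otimes\mathbf{g}_q^\intercal)+\mathbf{E}'$ via LWE using the bijection between binary $\mathbf{B}$ and $\Z_q$-uniform $\mathbf{B}^*$, then randomize $\mathbf{A}^\intercal\mathbf{U}+\hat{\mathbf{E}}$ via LWE with binary public matrix), with the simulator outputting a uniform matrix. Your write-up is in fact slightly more careful than the paper's in explicitly invoking the non-uniform (binary-matrix) LWE theorem for the second hop, which the paper leaves implicit after citing it in the preliminaries.
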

\begin{proof}
    Consider the following sequence of hybrids.
    \begin{itemize}
        \item Hybrid $\mathcal{H}_0$: This is the original distribution.
        \item Hybrid $\mathcal{H}_1$: We define 
        \[
        \mathbf{C} := \mathbf{A}^\intercal \cdot \mathbf{U}+\overline{\mathbf{E}}+(\id_{m}\otimes \mathbf{y}^\intercal)
        \]
        where $\mathbf{U}\sample\Z_q^{n\times (m\cdot \ell)}$.

        We claim that this hybrid is indistinguishable from the previous once, by the LWE assumption. To see why, observe that the encoding provided to the adversary in the previous hybrid equals:
    \begin{align*}\mathbf{C}&=\mathbf{A}^\intercal \cdot \mathbf{U}+\overline{\mathbf{E}}+(\id_{m}\otimes \mathbf{y}^\intercal),
\end{align*}
where $\mathbf{U}:=\mathbf{B}\cdot (\id_{m\cdot \ell}\otimes \mathbf{s}\otimes\mathbf{g}_q)+ \hat{\mathbf{E}}$ for $\hat{\mathbf{E}}\sample{\chi}(\sec)$ and $\overline{\mathbf{E}}\sample\chi(\sec)$. This last term consists of an LWE sample with secret $\mathbf{s}$ and public matrix obtained by splitting $\mathbf{B}\cdot(\id_{m\cdot \ell\cdot k}\otimes \mathbf{g}_q)$ in $m\cdot \ell$ blocks in $\Z_2^{n\times k}$ and stacking them one underneath the other. 

    In more details, consider a reduction that receives a matrix $\mathbf{M}\in\Z_q^{( n\cdot m\cdot \ell)\times k}$ and a vector $\mathbf{u}\in\Z_q^{n\cdot m\cdot \ell}$ that is either an LWE sample with respect to $\mathbf{M}$ or a uniformly random vector. The reduction splits the rows of $\mathbf{M}$ into $\ell\cdot m$ blocks in $\Z_q^{n\times k}$, denoted by
    \[
    \mathbf{M} = \begin{pmatrix}   
    \mathbf{M}_0\\
    \vdots\\
    \mathbf{M}_{m\cdot \ell-1}
    \end{pmatrix}.
    \]
    For every $j\in[m\cdot \ell]$, set $\mathbf{B}_j:= \bigl(\mathbf{G}^{-1}(\mathbf{M}_j^\intercal)\bigr)^\intercal$. Notice that $\mathbf{B}_j$ is a random matrix over $\Z_2^{n\times n}$, since $\mathbf{M}_j$ is uniformly sampled and $q$ is a power of 2. Finally, construct $\mathbf{B}$ as 
    \[
    \mathbf{B} = (\mathbf{B}_0, \dots, \mathbf{B}_{m\cdot\ell -1})
    \]
    and generate the columns in $\mathbf{U}$ by splitting $\mathbf{u}$ into $m\cdot \ell$ vectors in $\Z_q^n$. Then set $\mathbf{C}$ to $\mathbf{A}^\intercal \cdot \mathbf{U}+\overline{\mathbf{E}}+(\id_{m}\otimes \mathbf{y}^\intercal)$.

        \item Hybrid $\mathcal{H}_2$: We sample $\mathbf{C}\sample\Z_q^{(m)\times (m\cdot \ell)}$.

        Indistinguishability from the previous hybrid follows by another reduction to the LWE problem. Indeed, notice that $\mathbf{A}^\intercal\cdot \mathbf{U}+\overline{\mathbf{E}}$ is a batch of $m\cdot \ell$ LWE samples with $\mathbf{A}^\intercal$ as public matrix. The LWE secrets consist of the columns of $\mathbf{U}$.
    \end{itemize}
    The proof is concluded by defining the simulator $\Sim$ to output a uniformly random $\mathbf{C}\sample\Z_q^{(m)\times (m\cdot \ell)}$ as the encoding.
\end{proof}

\subsection{Bootstrapping to Fully-Succinct NI-OTE}\label{sec:NIOTE}

We now show a bootstrapping procedure to turn the NI-OTE constructed in Section~\ref{sec:halfNIOTE} into a fully succinct NI-OTE for $\Z_q^{m} \otimes \Z_q^\ell$. We assume that $m := t^r \cdot n$, for some $t,r\in \Z$, and we assume the existence of a half-succinct $\alpha$-correct $\BOTE=(\Setup, \allowbreak \Hash, \allowbreak \Enc, \allowbreak \Hash\Eval, \allowbreak \Enc\Eval)$ for  $\Z_q^{t\cdot n}\otimes \Z_q^n$ with bilinear encoder evaluation, where $d \in \Z_q^n$ and $\phi \in \Z_q^k$, with $n:= k \cdot \log q$. We describe our scheme in Construction~\ref{prot:SOTE}.
\begin{figure*}
\begin{protocol}[Fully Succinct NI-OTE]\label{prot:SOTE}

\begin{description}
\item[$\Setup(1^\secpar)$:] Return $\pk\sample\BOTE.\Setup(\bbbone^\sec)$.

\item[$\Hash(\pk, \mathbf{x})$:] Set $\mathbf{x}_0 := \mathbf{x}$. Then for every $i\in[r]$ proceed as follows:
\begin{itemize}
    \item Parse $\mathbf{x}_i$ as the vertical concatenation of $\mathbf{x}_{i,0},\dots, \mathbf{x}_{i,t^{r-i-1}-1}$ where $\mathbf{x}_{i,j}\in\Z_q^{t\cdot n}$.
    \item For every $j\in[t^{r-i-1}]$, compute 
    \[
    (\mathbf{d}_{i,j}, \psi_{i,j})\sample \BOTE.\Hash(\pk, \mathbf{x}_{i,j}).
    \]
    \item Define $\mathbf{x}_{i+1}$ to be the vertical concatenation of the $\mathbf{d}_{i,j}$.
\end{itemize}
Return $d:=\mathbf{x}_r$ and $\psi:=\{\psi_{i,j}\}_{i\in[r], j\in[t^{r-i-1}]}$.

\item[$\Enc(\pk, \mathbf{y})$:] Set $\mathbf{y}_0 := \mathbf{y}$, then for $i \in [r]$, compute
\[
(E_i, \phi_{i+1})\sample\Enc(\pk, \mathbf{y}_i) \quad\text{and}\quad \mathbf{y}_{i+1}:= \phi_{i+1}\otimes \mathbf{g}_q
\]
Output $E:=\{E_i\}_{i\in[r]}$ and $\phi:=\phi_r$. 

\item[$\Hash\Eval(\pk, E, \psi)$:] Let $\mathbf{P}$ be the public matrix derived from $\pk$. For all $i\in[r]$ and $j\in [t^{r-i-1}]$, compute
\[
\mathbf{P}_i:= (\id_{t^{r-i}}\otimes \mathbf{P})\quad\text{and}\quad
\mathbf{v}_{i,j}:= \BOTE.\Hash\Eval(\pk, E_i, \psi_{i,j}).
\]
Let $\mathbf{v}_i$ be the vertical concatenation of $\{\mathbf{v}_{i,j}\}_{j\in[t^{r-i-1}]}$. Return 
$\mathbf{v}:=\sum_{i\in[r]} \left(\prod_{k=1}^{i} \mathbf{P}_k\right)\cdot \mathbf{v}_i$.

\item[$\Enc\Eval(\pk, d, \phi)$:] Let $\mathbf{P}_i$ be defined as above. Return  
\[\mathbf{w}:=  \left(\prod_{i=1}^{r-1}\mathbf{P}_{i}\right)\cdot \BOTE.\Enc\Eval(\pk, d, \phi_r).\]
\end{description}
\end{protocol}
\end{figure*}
The scheme is clearly programmable, if the underlying $\BOTE$ is.
Furthermore, note that the scheme maintains a bilinear encoder evaluation, since, by Equation (\ref{eq:bilinear}) we have that:
\begin{equation}\label{eq:bilinear2}
    \Enc\Eval(\pk, d, \phi)=\left(\prod_{i=1}^{r-1}\mathbf{P}_i\right)\cdot \mathbf{P} \cdot (d\otimes \phi\otimes \mathbf{g}_q).
\end{equation}
To bound the norm of the public matrix, recall that  $\mathbf{P}\in\Z_q^{n^2\times (t\cdot n^2)}$ and $\lVert \mathbf{P}\rVert_\infty\le 1$. 
So, for every $i\in[r]$, each row of $\mathbf{P}_i=(\id_{t^{r-i}}\otimes \mathbf{P})$ has at most $n^2$ non-zero entries. Thus, multiplying by $\mathbf{P}_i$ increases the infinity norm at most by a factor $n^2$. Thus we have that:
\begin{equation}\label{eq:matrix_norm2}
    \left\lVert \prod_{i=1}^{r-1}\mathbf{P}_i \cdot \mathbf{P} \right\rVert_\infty \leq n^{2r}.
\end{equation}
On the other hand, we can bound the norm of the digest by recalling that $\mathbf{x}_{i,j}\in\Z_{q}^{t\cdot n}$ and by Equation (\ref{eq:bound_d}) the norm of the digest is increased by a factor $t\cdot n$ each time it is hashed. Thus:
\begin{equation}\label{eq:bound_d2}
    \lVert d\rVert_\infty = \lVert \mathbf{x}_r\rVert_\infty\leq (tn)^r\cdot \lVert \mathbf{x}\rVert_\infty.
\end{equation}
Furthermore, the dimension of the digest is $n$, and therefore independent of the dimensions of the input vector, and its bit-length is at most $n\cdot (r \cdot \log tn +\log  \lVert \mathbf{x}\rVert_\infty)$. The size of the encoding is $r=O(\log m)$ times the size of an encoding of $\BOTE$ (whose size is independent of $m$). Thus, the scheme is fully succinct.

\paragraph{Approximate Correctness.} Towards proving approximate correctness, we begin by observing that, by the $\alpha$-correctness of  $\BOTE$, for every $i\in[r]$ and $j\in[t^{r-i-1}]$, we have:
\begin{equation}\label{basic}
\Enc\Eval(\pk, \mathbf{d}_{i,j}, \phi_{i+1})+\Hash\Eval(\pk, E_i, \psi_{i,j})=\mathbf{x}_{i,j}\otimes \mathbf{y}_i+\mathbf{e}_{i,j},
\end{equation}
where $\lVert \mathbf{e}_{i,j}\rVert_\infty\le \alpha\cdot \lVert \mathbf{x}_{i,j}\rVert_\infty \le \alpha\cdot \lVert \mathbf{x}_i\rVert_\infty \leq tn(n+1) B(\sec)\cdot \lVert \mathbf{x}_i\rVert_\infty$, by Equation (\ref{eq:alpha_correct}). To establish correctness, it suffices to prove the following.

\begin{lemma}[Approximate Correctness]\label{lmm:correct}
For every $i\in[r]$:
\[\sum_{j=i}^{r-1} \prod_{k=1}^{j} \mathbf{P}_k\cdot\mathbf{v}_j+\mathbf{w}=\left(\prod_{k=1}^{i} \mathbf{P}_k\right)\cdot(\mathbf{x}_{i}\otimes \mathbf{y}_i)+\widetilde{\mathbf{e}}_i\]
where $\lVert \widetilde{\mathbf{e}}_i\rVert_\infty\le \alpha \cdot \left(\sum_{k=i}^{r-1} (t \cdot n^3)^k\right)\cdot \lVert \mathbf{x}\rVert_\infty$.
\end{lemma}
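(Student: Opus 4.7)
The plan is to argue by downward induction on $i$, starting from the base case $i=r-1$ and descending to $i=0$; the latter is exactly the desired approximate-correctness statement $\mathbf{v}+\mathbf{w}\approx \mathbf{x}\otimes\mathbf{y}$. The key structural observation driving the argument is that the bilinearity of $\BOTE.\Enc\Eval$ makes the sums telescope across recursion levels: when stacked over the $t^{r-i-1}$ blocks at level $i$, the encoder-evaluation terms arising from Equation~(\ref{basic}) line up exactly with the $\mathbf{P}_{i+1}\cdot(\mathbf{x}_{i+1}\otimes\mathbf{y}_{i+1})$ term supplied by the inductive hypothesis.

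For the base case, $\mathbf{x}_{r-1}$ contains only the single block $\mathbf{x}_{r-1,0}$, $\mathbf{d}_{r-1,0}=\mathbf{x}_r=d$, and $\BOTE.\Enc\Eval(\pk, d, \phi_r)=\mathbf{P}\cdot(\mathbf{x}_r\otimes\mathbf{y}_r)$ by bilinearity and the identity $\phi_r\otimes\mathbf{g}_q^\intercal=\mathbf{y}_r$. Equation~(\ref{basic}) then yields $\mathbf{v}_{r-1}=\mathbf{x}_{r-1}\otimes\mathbf{y}_{r-1}-\mathbf{P}\cdot(\mathbf{x}_r\otimes\mathbf{y}_r)+\mathbf{e}_{r-1,0}$, and since $\mathbf{w}=\prod_{k=1}^{r-1}\mathbf{P}_k\cdot\mathbf{P}\cdot(\mathbf{x}_r\otimes\mathbf{y}_r)$ by Equation~(\ref{eq:bilinear2}), multiplying through by $\prod_{k=1}^{r-1}\mathbf{P}_k$ and adding $\mathbf{w}$ cancels the encoder-evaluation term and leaves $\widetilde{\mathbf{e}}_{r-1}=\prod_{k=1}^{r-1}\mathbf{P}_k\cdot\mathbf{e}_{r-1,0}$.

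For the inductive step, I would fix $i<r-1$, assume the claim at level $i+1$, and apply Equation~(\ref{basic}) block-by-block together with the tensor identity $(\mathbf{x}_{i,0};\dots;\mathbf{x}_{i,t^{r-i-1}-1})\otimes\mathbf{y}_i=\mathbf{x}_i\otimes\mathbf{y}_i$, to write $\mathbf{v}_i=\mathbf{x}_i\otimes\mathbf{y}_i-\hat{\mathbf{w}}_i+\mathbf{e}_i$, where $\hat{\mathbf{w}}_i$ stacks the $\BOTE.\Enc\Eval(\pk,\mathbf{d}_{i,j},\phi_{i+1})$ and $\mathbf{e}_i$ stacks the $\mathbf{e}_{i,j}$'s. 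Bilinearity and $\phi_{i+1}\otimes\mathbf{g}_q^\intercal=\mathbf{y}_{i+1}$ rewrite each block as $\mathbf{P}\cdot(\mathbf{d}_{i,j}\otimes\mathbf{y}_{i+1})$, so the stacking factors as $\hat{\mathbf{w}}_i=(\id_{t^{r-i-1}}\otimes\mathbf{P})\cdot(\mathbf{x}_{i+1}\otimes\mathbf{y}_{i+1})=\mathbf{P}_{i+1}\cdot(\mathbf{x}_{i+1}\otimes\mathbf{y}_{i+1})$. Multiplying the resulting identity by $\prod_{k=1}^{i}\mathbf{P}_k$ and adding the inductive hypothesis, the two $(\mathbf{x}_{i+1}\otimes\mathbf{y}_{i+1})$ terms cancel and one reads off $\widetilde{\mathbf{e}}_i=\prod_{k=1}^{i}\mathbf{P}_k\cdot\mathbf{e}_i+\widetilde{\mathbf{e}}_{i+1}$.

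For the norm bound, each $\mathbf{P}_k=\id_{t^{r-k}}\otimes\mathbf{P}$ has at most $n^2$ nonzero entries of magnitude $1$ per row, so $\prod_{k=1}^{i}\mathbf{P}_k$ inflates the infinity norm by a factor at most $n^{2i}$; iterating Equation~(\ref{eq:bound_d}) yields $\|\mathbf{x}_i\|_\infty\le (tn)^i\cdot\|\mathbf{x}\|_\infty$, hence $\|\mathbf{e}_i\|_\infty\le\alpha\cdot(tn)^i\cdot\|\mathbf{x}\|_\infty$. Combining, $\bigl\|\prod_{k=1}^{i}\mathbf{P}_k\cdot\mathbf{e}_i\bigr\|_\infty\le\alpha\cdot(tn^3)^i\cdot\|\mathbf{x}\|_\infty$, and telescoping this against the inductive bound on $\widetilde{\mathbf{e}}_{i+1}$ delivers $\|\widetilde{\mathbf{e}}_i\|_\infty\le\alpha\cdot\sum_{k=i}^{r-1}(t\cdot n^3)^k\cdot\|\mathbf{x}\|_\infty$. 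The one part that requires care is the stacking identity $\hat{\mathbf{w}}_i=\mathbf{P}_{i+1}\cdot(\mathbf{x}_{i+1}\otimes\mathbf{y}_{i+1})$, where one has to match the block structure of $\mathbf{x}_{i+1}=(\mathbf{d}_{i,0};\dots;\mathbf{d}_{i,t^{r-i-1}-1})$ with $\mathbf{P}_{i+1}=\id_{t^{r-i-1}}\otimes\mathbf{P}$; everything else is straightforward bookkeeping of bilinearity and telescoping.
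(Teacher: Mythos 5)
Your proof is correct and takes essentially the same route as the paper's: downward induction from $i=r-1$ to $i=0$, using the bilinear encoder-evaluation identity and the block structure of the Kronecker product to make the stacked $\BOTE.\Enc\Eval$ terms factor as $\mathbf{P}_{i+1}\cdot(\mathbf{x}_{i+1}\otimes\mathbf{y}_{i+1})$ and telescope against the inductive hypothesis, with the same norm bookkeeping (a factor $n^2$ per $\mathbf{P}_k$ and a factor $tn$ per hash level, then a geometric sum). The only cosmetic difference is that the paper phrases the inductive step as passing from level $i$ to level $i-1$ while you pass from $i+1$ to $i$; the underlying computation and the resulting recursion $\widetilde{\mathbf{e}}_i=\prod_{k=1}^{i}\mathbf{P}_k\cdot\mathbf{e}_i+\widetilde{\mathbf{e}}_{i+1}$ are identical.
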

Before proceeding with the proof, we can indeed see that, setting $i=0$, we obtain:
\[\mathbf{v}+\mathbf{w}=(\mathbf{x}\otimes \mathbf{y})+\mathbf{e}\]
where $\lVert\mathbf{e}\rVert_\infty\le \alpha\cdot (t\cdot n^3)^r\cdot \lVert \mathbf{x}\rVert_\infty$.
Thus, the scheme satisfies $\alpha(t\cdot n^3)^r$-correctness.

\begin{proof}
We proceed by induction starting from $i=r-1$ and going all the way down to $i=0$.
It is easy to see that $\mathbf{v}_{r-1}= \BOTE.\Hash\Eval(\pk, E_{r-1}, \psi_{r-1,0})$, whereas $d=\mathbf{d}_{r-1,0}$. Therefore, by Equation (\ref{basic}), we obtain:
\begin{align*}\mathbf{v}_{r-1}\cdot \prod_{k=1}^{r-1} \mathbf{P}_k+\mathbf{w}&=\left(\prod_{k=1}^{r-1} \mathbf{P}_k\right)\cdot \Bigl(\BOTE.\Hash\Eval(\pk, E_{r-1}, \psi_{r-1,0})\\
&+\BOTE.\Enc\Eval(\pk, d, \phi_r)\Bigr)\\
&=\left(\prod_{k=1}^{r-1} \mathbf{P}_k\right)\cdot (\mathbf{x}_{r-1,0}\otimes \mathbf{y}_{r-1})+\left(\prod_{k=1}^{r-1} \mathbf{P}_k\right)\cdot\mathbf{e}_{r-1,0}\\
&=\left(\prod_{k=1}^{r-1} \mathbf{P}_k\right)\cdot(\mathbf{x}_{r-1}\otimes \mathbf{y}_{r-1})+\widetilde{\mathbf{e}}_{r-1}\end{align*}
where $\widetilde{\mathbf{e}}_{r-1}:=\left(\prod_{k=1}^{r-1} \mathbf{P}_k\right)\cdot\mathbf{e}_{r-1,0}$. Notice that 
\begin{align*}\lVert\widetilde{\mathbf{e}}_{r-1}\rVert_\infty&\le n^{2(r-1)}\cdot \lVert \mathbf{e}_{r-1,0}\rVert_\infty\\
&\le \alpha n^{2(r-1)}\cdot \lVert \mathbf{x}_{r-1}\rVert_\infty\\
&\le \alpha (t n^3)^{r-1}\cdot \lVert \mathbf{x}\rVert_\infty\end{align*}
by Equation (\ref{eq:matrix_norm2}) and Equation (\ref{eq:bound_d2}).
This establishes the base case. For the inductive step, we first appeal to the induction hypothesis to show that:
\begin{align*}\sum_{j=i-1}^{r-1}\left(\prod_{k=1}^{j} \mathbf{P}_k\right)\cdot\mathbf{v}_{j}+\mathbf{w}&=\left(\prod_{k=1}^{i-1} \mathbf{P}_k\right)\cdot \mathbf{v}_{i-1}+\left(\prod_{k=1}^{i} \mathbf{P}_k\right)\cdot(\mathbf{x}_{i}\otimes \mathbf{y}_i)+\widetilde{\mathbf{e}}_i\\
&=\left(\prod_{k=1}^{i-1} \mathbf{P}_k\right)\cdot\Bigl(\mathbf{v}_{i-1}+\mathbf{P}_i\cdot(\mathbf{x}_{i}\otimes \mathbf{y}_i)\Bigr)+\widetilde{\mathbf{e}}_i.\end{align*}
 Moreover:
\begin{align*}
\mathbf{P}_i\cdot (\mathbf{x}_i\otimes \mathbf{y}_i)&=\begin{pmatrix} \mathbf{P}\cdot (\mathbf{d}_{i-1,0}\otimes \phi_i\otimes \mathbf{g}_q)\\
\mathbf{P}\cdot (\mathbf{d}_{i-1,1}\otimes \phi_i\otimes \mathbf{g}_q)\\
\vdots\\
\mathbf{P}\cdot (\mathbf{d}_{i-1,t^{r-i}-1}\otimes \phi_i\otimes \mathbf{g}_q)
    \end{pmatrix}\\
    &=\begin{pmatrix} \BOTE.\Enc\Eval(\pk, \mathbf{d}_{i-1,0}, \phi_{i})\\
\BOTE.\Enc\Eval(\pk, \mathbf{d}_{i-1,1}, \phi_{i})\\
\vdots\\
\BOTE.\Enc\Eval(\pk, \mathbf{d}_{i-1,t^{r-i}-1}, \phi_{i})
    \end{pmatrix}\end{align*}
We also observe  that:
\[\mathbf{v}_{i-1}=\begin{pmatrix} \mathbf{v}_{i-1,0}\\
\mathbf{v}_{i-1,1}\\
\vdots\\
\mathbf{v}_{i-1,t^{r-i}-1}
    \end{pmatrix}=\begin{pmatrix} \BOTE.\Enc\Hash(\pk, E_{i-1}, \psi_{i-1,0})\\
\BOTE.\Enc\Hash(\pk, E_{i-1}, \psi_{i-1,1})\\
\vdots\\
\BOTE.\Enc\Hash(\pk, E_{i-1}, \psi_{i-1,t^{r-i}-1})
    \end{pmatrix}.\]
Furthermore, by Equation \eqref{basic}, we have:
\begin{align*}
    \mathbf{v}_{i-1}+\mathbf{P}_i\cdot(\mathbf{x}_{i}\otimes \mathbf{y}_i)&=\begin{pmatrix} (\mathbf{x}_{i-1,0}\otimes \mathbf{y}_{i-1})+\mathbf{e}_{i-1,0}\\
(\mathbf{x}_{i-1, 1}\otimes \mathbf{y}_{i-1})+\mathbf{e}_{i-1,1}\\
\vdots\\
(\mathbf{x}_{i-1, t^{r-i}}\otimes \mathbf{y}_{i-1})+\mathbf{e}_{i-1,t^{r-i}-1}
    \end{pmatrix}\\
    &=(\mathbf{x}_{i-1}\otimes \mathbf{y}_{i-1}) +\mathbf{e}_{i-1}, \end{align*}
 where $\mathbf{e}_{i-1}^\intercal:=(\mathbf{e}_{i-1,0}^\intercal, 
\mathbf{e}_{i-1,1}^\intercal, 
\dots, 
\mathbf{e}_{i-1,t^{r-i}-1}^\intercal)$.
Notice that $\lVert \mathbf{e}_{i-1}\rVert_\infty\le \alpha\cdot \lVert \mathbf{x}_{i-1}\rVert_\infty\le \alpha\cdot (tn)^{i-1}\cdot \lVert \mathbf{x}\rVert_\infty$ by Equation (\ref{eq:matrix_norm2}) and Equation (\ref{eq:bound_d2}). We conclude that
\[\sum_{j=i-1}^{r-1} \left(\prod_{k=1}^{j} \mathbf{P}_k\right)\cdot\mathbf{v}_j+\mathbf{w}=\left(\prod_{k=1}^{i-1} \mathbf{P}_k\right)\cdot(\mathbf{x}_{i-1}\otimes \mathbf{y}_{i-1})+\widetilde{\mathbf{e}}_{i-1}\]
where $\widetilde{\mathbf{e}}_{i-1}:=\widetilde{\mathbf{e}}_{i}+\left(\prod_{k=1}^{i-1} \mathbf{P}_k\right)\cdot \mathbf{e}_{i-1}$.
Observe that \[\lVert\widetilde{\mathbf{e}}_{i-1}\rVert_\infty\le \lVert\widetilde{\mathbf{e}}_{i}\rVert_\infty+\alpha\cdot (t\cdot n^3)^{i-1}\cdot\lVert\mathbf{x}\rVert_\infty\le \alpha\cdot \left(\sum_{k=i-1}^{r-1} (t \cdot n^3)^k\right)\cdot \lVert \mathbf{x}\rVert_\infty\]
as desired.
\end{proof}

\paragraph{Encoder Privacy.} The following theorem establishes encoder privacy.

\begin{theorem}
    Assuming that $\BOTE$ is encoder private, Construction~\ref{prot:SOTE} is encoder private.
\end{theorem}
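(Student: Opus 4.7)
\begin{proofsketch}
The plan is to prove encoder privacy by a hybrid argument over the $r = O(\log m)$ invocations of $\BOTE.\Enc$ inside the bootstrapped $\Enc$, replacing them one at a time with the $\BOTE$ simulator. I take the simulator for Construction~\ref{prot:SOTE} to be the natural one: it outputs $\{E_j\}_{j\in[r]}$ where each $E_j$ is sampled independently from the $\BOTE$ encoder-privacy simulator on input $(\bbbone^\sec, \pk)$.

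For each $k \in \{0,1,\ldots,r\}$ I define a hybrid $\mathcal{H}_k$ in which the outputs $E_0,\ldots,E_{k-1}$ are produced honestly along the chain (running $\BOTE.\Enc$ on $\mathbf{y}_j$ returns a real $E_j$ together with a real $\phi_{j+1}$, which is used to derive the next input $\mathbf{y}_{j+1} = \phi_{j+1} \otimes \mathbf{g}_q^\intercal$), while $E_k,\ldots,E_{r-1}$ are sampled independently from the $\BOTE$ simulator. Then $\mathcal{H}_r$ is the real encoding distribution and $\mathcal{H}_0$ is our simulator's output, so it is enough to show $\mathcal{H}_{k+1} \compindist \mathcal{H}_k$ for every $k \in [r]$.

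The reduction to the encoder privacy of $\BOTE$ is backwards-oriented, which is the key design choice. Given a challenge $(\pk, E^*)$, the reduction forwards $\pk$ to the adversary of the bootstrapped scheme, obtains $(\mathbf{y}, \aux)$, honestly runs $\BOTE.\Enc$ for $j=0,\ldots,k-1$ (computing $E_0,\ldots,E_{k-1}$, the chain $\phi_1,\ldots,\phi_k$, and the inputs $\mathbf{y}_1,\ldots,\mathbf{y}_k$), forwards $\mathbf{y}_k$ to its own challenger, sets $E_k := E^*$, samples $E_j$ from the $\BOTE$ simulator for $j>k$, and finally hands $\{E_j\}_{j\in[r]}$ to the bootstrapped adversary. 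When $b=0$ the induced distribution matches $\mathcal{H}_{k+1}$; when $b=1$ it matches $\mathcal{H}_k$.

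The delicate point---and the reason for moving the boundary from real down to simulated, rather than the opposite direction---is that the reduction must never use the internal secret $\phi^*$ that the $\BOTE$ challenger samples when producing $E^*$, since the encoder-privacy game does not reveal it. In the backwards direction this causes no trouble, because every $E_j$ with $j>k$ is simulated and therefore input-independent, so the value $\mathbf{y}_{k+1}=\phi_{k+1}\otimes\mathbf{g}_q^\intercal$ is simply not required. Running the hybrid forwards would instead force the reduction to sample a fresh $\phi_{k+1}$ independently of $E^*$ to feed $E_{k+1},\ldots,E_{r-1}$, and the resulting distribution would fail to reproduce $\mathcal{H}_k$ faithfully, since $E_k$ and $\phi_{k+1}$ are correlated in the real scheme via the joint output of $\BOTE.\Enc$. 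A union bound over the $r$ transitions completes the argument.
\end{proofsketch}
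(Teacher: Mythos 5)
Your proposal is correct and matches the paper's approach, which also proves encoder privacy by a hybrid over the $r$ calls to $\BOTE.\Enc$, replacing them one by one with the $\BOTE$ simulator. The paper's one-line proof does not spell out the directional subtlety you identify — that the reduction must replace encodings from the \emph{last} level backwards, because a challenge $E^*$ at an intermediate level does not reveal the associated $\phi$ needed to continue the honest chain — but your observation is correct, and it is indeed the only direction in which the reduction goes through cleanly given that $E_j$ and $\phi_{j+1}$ are correlated as a joint output of $\BOTE.\Enc$.
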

\begin{proof}
    The proof follows by a standard hybrid argument, where we gradually substitute the encodings $\{E_j\}_{j\in[r]}$ with the outputs of the simulator $\Sim(1^\sec, \pk)$, provided by the definition of encoder privacy of $\BOTE$.
\end{proof}

\paragraph{How to Build a Fully Succinct OTE for $\Z_q^m\otimes \Z_q^\ell$.}
Construction \ref{prot:SOTE} presents a non-interactive OTE for $\Z_q^m\otimes \Z_q^n$ where the hash size is a vector in $\Z_q^n$ and the encoding consists of $r$ matrices of size $(tn)\times (tn^2)$. A trivial way to build a fully succinct OTE for $\Z_q^m\otimes \Z_q^\ell$ is to instantiate Construction \ref{prot:SOTE} with $n:=\ell$. If $\ell=O(\sec\cdot \log q)$, this is secure, however, the encoder size would scale as $r\cdot \ell^3$ while the hash size would be linear in $\ell$. We can do better: just split $\mathbf{y}$ into blocks $\mathbf{y}_0, \dots, \mathbf{y}_N$ of size $n$. Then, compute an OTE encoding of each of them using Construction \ref{prot:SOTE} and send them to the hasher. Given the hasher's digest (notice, a single digest is sufficient for all $\mathbf{y}_0, \dots, \mathbf{y}_N$), we can then derive shares for $\mathbf{x}\otimes \mathbf{y}_0, \dots, \mathbf{x}\otimes \mathbf{y}_N$. By reordering these, we obtain a secret-sharing of $\mathbf{x}\otimes \mathbf{y}$. 

{\paragraph{Parameters.} Let $q=q(\lambda)$ be a power of $2$, let $\chi=\chi(\lambda)$ be a
$B(\lambda)$-bounded error distribution, and let $k=k(\lambda)$ be such that
the $(\chi,k,m,q)$-LWE problem is hard. Set $n := k \log q $ and fix integers $t=t(\lambda)\ge 2$ and $r=r(\lambda)\ge 1$ with $m = t^r \cdot n$. Instantiate Construction~\ref{prot:SOTE} using as underlying
half-succinct NI-OTE the scheme of Construction~\ref{prot:HCVOLE}
for $\mathbb Z_q^{t n} \otimes \mathbb Z_q^{n}$, we obtain the following parameters:
\begin{itemize}
    \item (Public Key) The public is exactly the CRS of the underlying half-succinct scheme, which consists of $\mathbf{A} \in \mathbb Z_2^{n\times t n}$ and $\mathbf{B} \in \mathbb Z_2^{n\times t n^3}$.
    \item (Digest) The digest is a vector in $\mathbb Z_q^{n}$ of $\ell_\infty$-norm at most $m^{r}\cdot \lVert\mathbf{x}\rVert_\infty$.
    \item (Encoding) Write $N := \lceil \ell/n \rceil$ and partition $\mathbf y$ into $N$ blocks of length at most $n$. The encoding consists of $N$ independent encodings
    produced by Construction~\ref{prot:SOTE}, each such encoding
    is a tuple of $r$ matrices of size $(t n)\times (t n^2)$. Therefore the
    total encoding size is
    $\ell  r  t^2 n^2$
     elements of $\mathbb Z_q$.
\end{itemize}
Moreover, Construction~\ref{prot:HCVOLE} on inputs
    $\mathbb Z_q^{t n} \otimes \mathbb Z_q^n$ is $t n (n+1) B(\lambda)$-correct. Hence, by Lemma~\ref{lmm:correct}, 
    Construction~\ref{prot:SOTE} is
    $\alpha$-correct with
    \[
    \alpha:=\alpha_0 \cdot (t n^3)^r
    = t n (n+1) B(\lambda) (t n^3)^r
    \]
    Thus it suffices to set $q/p > 2\alpha$ to obtain a perfectly correct scheme. 
For $t=\Theta(1)$, then $r=\Theta(\log m)$, the digest size and CRS size are
$\poly(\lambda,\log m)$, and the encoding size is
    $\ell \cdot \poly(\lambda,\log m)$. Thus, we obtain the following theorem.}

{
\begin{theorem}[Fully-Succinct NI-OTE]
Assume the hardness of LWE, there exists an
encoder-private programmable NI-OTE for
$\mathbb Z_q^m \otimes \mathbb Z_q^\ell$ with digest and public key size $\poly(\lambda,\log m)$ and encoding size $\ell \cdot \poly(\lambda,\log m)$.
\end{theorem}
}
{
\subsection{How to Achieve Hasher-Privacy}
The constructions presented so far rely on a deterministic hasher, hence, they are not hasher private. There is however an easy way to achieve this property by relying on a non-succinct NI-OTE protocol that provides privacy at both ends. The idea is to compose the latter with the fully succinct OTE to set up a (possibly noisy) additive secret-sharing of $d\otimes \phi\otimes \mathbf{g}_q$. Specifically, instead of sending $d$ in the clear, the hasher inputs $d$ in the hasher-private non-succinct NI-OTE. Simultaneously, the encoder inputs $\phi\otimes \mathbf{g}_q$ and sends the fully succinct OTE encoding $E$ along with it. Using $\psi$ and $E$, the hasher can derive the output share of the fully succinct OTE as usual. Moreover, by relying on the fact that the encoder output in the fully succinct OTE is just
\[\mathbf{w}:=\left(\prod_{i=1}^{r-1}\mathbf{P}_i\right)\cdot \mathbf{P} \cdot (d\otimes \phi\otimes \mathbf{g}_q),\]
where each $\mathbf{P}_i$ is  a public low-norm matrix, the parties can bootstrap the (noisy) secret sharing of $d\otimes \phi\otimes \mathbf{g}_q$ into a (noisy) secret-sharing of $\mathbf{w}$. Putting all these shares together provides a (noisy) secret-sharing of $\mathbf{x}\otimes \mathbf{y}$.}

{
Notice that the hasher-privacy of the non-succinct OTE guarantees the privacy of $\mathbf{x}$, whereas the encoder privacy of both the non-succinct and the fully-succinct OTE ensure that $\mathbf{y}$ remains secret. Finally, the protocol remains fully succinct. Indeed, the non-succinct OTE contributes only with a $\poly(\sec, \lvert d\rvert, \lvert \phi\rvert)=\poly(\sec)$ additive overhead. Also the CRS size increases only by a $\poly(\sec)$ amount.}

{
\paragraph{A Non-Succinct OTE Protocol with Hasher and Encoder Privacy.} We sketch how to build an LWE-based non-succinct OTE protocol with privacy at both sides. This can be easily done by following the blueprint of \cite{EC:OrlSchYak21,C:ADOS22,EC:AbrRoySch24}.}

{
\begin{protocol}[Fully Private, Non-Succinct NI-OTE]\label{prot:privateVOLE}
\begin{description}
\item[$\Setup(1^\secpar)$:] Sample $\mathbf{A} \sample \Z_q^{k \times m}$ and $\mathbf{B}\sample\Z_q^{k\times 2n}$ and return $\pk:=(\mathbf{A}, \mathbf{B})$.
\item[$\Hash(\pk, \mathbf{x})$:] Compute $\mathbf{d}:= \mathbf{A}\cdot \mathbf{x}+\mathbf{B}\cdot \mathbf{r}$ where $\mathbf{r}\sample\Z_2^{2n}$ and return $d:=\mathbf{d}$ and $\psi:=(\mathbf{x}, \mathbf{r})$.
\item[$\Enc(\pk, \mathbf{y})$:] Sample $\mathbf{S}\sample \Z_q^{k\times mn}$, $\mathbf{E}\sample\chi(\sec)$, 
and $\hat{\mathbf{E}}\sample\chi(\sec)$. Compute 
\begin{align*}\mathbf{C}&:= \mathbf{A}^\intercal \mathbf{S}+ \mathbf{E}+ \id_{m} \otimes \mathbf{y}^\intercal,\\
\hat{\mathbf{C}}&:= \mathbf{B}^\intercal \mathbf{S}+ \hat{\mathbf{E}}.\end{align*}
Return $E:=(\mathbf{C}, \hat{\mathbf{C}})$ and $\phi:=\mathbf{S}$.
\item[$\Hash\Eval(\pk, E, \psi)$:] Return $\mathbf{v}:=\mathbf{C}^\intercal\cdot \mathbf{x}+\hat{\mathbf{C}}^\intercal\cdot \mathbf{r}$.
\item[$\Enc\Eval(\pk, d, \phi)$:] Return $\mathbf{w}:= -\mathbf{S}^\intercal\cdot \mathbf{d}$.
\end{description}
\end{protocol}
Observe that in Construction \ref{prot:privateVOLE}, we have that
\begin{align*}
    \mathbf{w}+\mathbf{v}&=\mathbf{C}^\intercal\cdot \mathbf{x}+\hat{\mathbf{C}}^\intercal\cdot \mathbf{r}-\mathbf{S}^\intercal\cdot \mathbf{d}\\
    &=\mathbf{S}^\intercal\mathbf{A}\mathbf{x}+\mathbf{E}^\intercal\cdot \mathbf{x}+\mathbf{x}\otimes \mathbf{y}+\mathbf{S}^\intercal\mathbf{B}\mathbf{r}+\hat{\mathbf{E}}^\intercal\cdot \mathbf{r}-\mathbf{S}^\intercal (\mathbf{A}\mathbf{x}+\mathbf{B}\mathbf{r})\\
    &=\mathbf{x}\otimes \mathbf{y}+\mathbf{E}^\intercal\cdot \mathbf{x}+\hat{\mathbf{E}}^\intercal\cdot \mathbf{r},
\end{align*}
and $\lVert \mathbf{E}^\intercal\cdot \mathbf{x}+\hat{\mathbf{E}}^\intercal\cdot \mathbf{r}\rVert_\infty\le mB(\sec)\cdot \lVert \mathbf{x}\rVert_\infty+2nB(\sec)$.}

{
Hasher privacy is an immediate application of the leftover hash lemma, whereas encoder privacy derives from the fact that $\mathbf{A}^\intercal \mathbf{S}+ \mathbf{E}$ and $\mathbf{B}^\intercal \mathbf{S}+ \hat{\mathbf{E}}$ look random under LWE. Notice that while the hasher message is succinct, the encoder message has size $m^2n\log q+2mn^2\log q$ where $m$ is the length of the input $\mathbf{x}$. Finally, notice that we can easily modify this construction to achieve $\alpha$-approximate correctness for $\alpha=0$ using the same rounding trick we mentioned in Section \ref{defOTE}.}
\subsection{From Succinct NI-OTE to Succinct NI-MOLE}\label{sec:MOLE}

We will also consider an modification of NI-OTE where instead of computing a tensor product, we compute a matrix-vector product. More specifically, we let the hasher take as input a matrix $\mathbf{M}$ and the encoder a vector $\mathbf{y}$ and we require that
\[
\Hash\Eval(\pk, E, \psi)+\Enc\Eval(\pk, d, \phi) \approx \mathbf{M} \mathbf{y}.
\]
Furthermore, we require the protocol to be succinct, in the sense that the communication complexity should be independent of the number of rows of the matrix $\mathbf{M}$. We refer to this protocol as Non-Interactive Matrix Oblivious Linear Evaluation (NI-MOLE).

It is easy to see that one can generically construct an $(\ell\cdot \log q\cdot \alpha)$-correct NI-MOLE from an $\alpha$-correct NI-OTE: Simply hash $\mathbf{x} := \mathbf{G}^{-1}(\mathsf{vec}(\mathbf{M}))$ and encode $\mathbf{y}\otimes \mathbf{g}_q$. Then the hasher and the encoder return
\begin{gather*}
\mathsf{Lin}(\id_{m\cdot \ell\cdot \log q})\cdot\Hash\Eval(\pk, E, \psi), \\
\text{and}
\\
\mathsf{Lin}(\id_{m\cdot \ell\cdot \log q})\cdot\Enc\Eval(\pk, d, \phi)
\end{gather*}
respectively. By the correctness of the NI-OTE we have:
\begin{align*}
\mathbf{v}+\mathbf{w}&=\mathsf{Lin}(\id_{m\cdot \ell\cdot \log q})\cdot (\mathbf{x}\otimes \mathbf{y}\otimes \mathbf{g}_q)+\mathsf{Lin}(\id_{m\cdot \ell\cdot \log q})\cdot \mathbf{e}\\
&=(\id_{m}\otimes \mathbf{y}^\intercal\otimes \mathbf{g}_q^\intercal)\cdot \mathbf{x}+\mathsf{Lin}(\id_{m\cdot \ell\cdot \log q})\cdot \mathbf{e}\\
&=(\id_{m}\otimes \mathbf{y}^\intercal)\cdot \mathsf{vec}(\mathbf{M}^\intercal)+\mathsf{Lin}(\id_{m\cdot \ell\cdot \log q})\cdot \mathbf{e}\\
&=\mathbf{M}\cdot \mathbf{y}+\mathsf{Lin}(\id_{m\cdot \ell\cdot \log q})\cdot \mathbf{e}\\
&\approx\mathbf{M}\cdot \mathbf{y}
\end{align*}
ignoring low-order error terms, since $\lVert \mathsf{Lin}(\id_{m\cdot \ell\cdot \log q}) \rVert_\infty=1$ by Lemma \ref{thm:linearisation}.  Notice that \[\lVert \mathsf{Lin}(\id_{m\cdot \ell\cdot \log q})\cdot \mathbf{e}\rVert_\infty\le \ell\cdot\log q\cdot \alpha\cdot \lVert \mathbf{x}\rVert_\infty=\ell\cdot \log q\cdot \alpha,\] 
{as each row of $\mathsf{Lin}(\id_{m\cdot \ell\cdot \log q})$ has Hamming weight $\ell \cdot \log q$.} Thus, henceforth we will assume the existence of succinct NI-OTE and succinct NI-MOLE interchangeably, with the understanding the succinct NI-OTE implies the existence of both.

\section{Adaptive Lattice Encodings}
\seclab{encodings}
\newcommand{\lenc}{\mathsf{LEnc}}

In the following we present our new construction of adaptive lattice encodings. Let $k:=k(\sec)$ and $q:=q(\sec)$ be positive integers, and let $\mathbf{G}$ be the $k$-dimensional gadget matrix $\mathbf{G}:=\id_k\otimes \mathbf{g}_q^\intercal$. For element $x\in\Z_q$, vectors $\mathbf{s}, \mathbf{r}\in\Z_q^k$, matrix $\mathbf{A}\in\Z_q^{k\times (k\cdot \log q)}$ and noise term $\mathbf{e}\in\Z^{k\cdot \log q}$, we define the corresponding \emph{adaptive lattice encoding} as:
\[
\lenc_\mathbf{A} (x;\mathbf{s}, \mathbf{r}, \mathbf{e}) := \mathbf{s}^\intercal\mathbf{A}+ x \cdot \mathbf{r}^\intercal \mathbf{G}+\mathbf{e}^\intercal.
\]
Note that, for an appropriately sampled $\mathbf{A}$, $\mathbf{s}$, and $\mathbf{e}$. It is straightforward to see that the encoding is computationally close to uniform under the LWE assumption. Next, we demonstrate the homomorphic properties of such encodings.

\subsection{Homomorphic Operations}\label{sec:hom_enc}

We show that our lattice encodings support addition, scalar multiplication, and even multiplication, provided that the encodings are encrypted with correlated secrets. We present a formal description of the algorithms below.

\begin{protocol}[Homomorphic Operations on Lattice Encodings]\label{prot:Homomorphic}
\begin{description}
\item[Addition:] For every $x_0, x_1\in\Z_q$, vectors $\mathbf{s}, \mathbf{r}\in \Z_q^k$, matrices $\mathbf{A}_0, \mathbf{A}_1\in\Z_q^{k\times (k\cdot \log q)}$ and noise terms $\mathbf{e}_0, \mathbf{e}_1\in\Z_q^{k\cdot \log q}$, we have:
\[
\lenc_{\mathbf{A}_0} (x_0 ;\mathbf{s}, \mathbf{r}, \mathbf{e}_0) +
\lenc_{\mathbf{A}_1} (x_1;\mathbf{s}, \mathbf{r}, \mathbf{e}_1)
=
\lenc_{\mathbf{A}_0+ \mathbf{A}_1} (x_0+x_1; \mathbf{s}, \mathbf{r}, \mathbf{e}_0 +\mathbf{e}_1).
\]

\item[Scalar Multiplication:] For every $x, \delta\in\Z_q$, vectors $\mathbf{s}, \mathbf{r}\in \Z_q^k$, matrix $\mathbf{A}\in\Z_q^{k\times (k\cdot \log q)}$ and noise terms $\mathbf{e}\in\Z_q^{k\cdot \log q}$, we have:
\[
\lenc_\mathbf{A} (x; \mathbf{s}, \mathbf{r}, \mathbf{e}) \cdot \mathbf{G}^{-1}(\delta\cdot \mathbf{G})
=
\lenc_{\mathbf{A}\cdot \mathbf{G}^{-1}(\delta\cdot \mathbf{G})} (x\cdot \delta; \mathbf{s}, \mathbf{r}, \mathbf{G}^{-1}(\delta\cdot \mathbf{G})^\intercal\cdot\mathbf{e}).
\]

\item[Multiplication:] For every $x_0, x_1\in\Z_q$, vectors $\mathbf{s}_0, \mathbf{s}_1, \mathbf{s}_2\in \Z_q^k$, matrices $\mathbf{A}_0, \mathbf{A}_1\in\Z_q^{k\times (k\cdot \log q)}$ and noise terms $\mathbf{e}_0, \mathbf{e}_1\in\Z_q^{k\cdot \log q}$, we have:
\begin{align*}
&-\lenc_{\mathbf{A}_0}(x_0; \mathbf{s}_0, \mathbf{s}_1, \mathbf{e}_0) \cdot \mathbf{G}^{-1}(\mathbf{A}_1)+x_0\cdot \lenc_{\mathbf{A}_1}(x_1; \mathbf{s}_1, \mathbf{s}_2, \mathbf{e}_1)
\\
&\quad = \lenc_{-\mathbf{A}_0\cdot \mathbf{G}^{-1}(\mathbf{A}_1)}(x_0\cdot x_1; (\mathbf{s}_0, \mathbf{s}_2, -\mathbf{G}^{-1}(\mathbf{A}_1)^\intercal\cdot \mathbf{e}_0+x_0\cdot\mathbf{e}_1)).
\end{align*}
\end{description}
\end{protocol}
We elaborate on the correctness of the claimed homomorphic operations. For addition, we observe that:
\begin{align*}
    \lenc_{\mathbf{A}_0} (x_0 ;\mathbf{s}, \mathbf{r}, \mathbf{e}_0) +
\lenc_{\mathbf{A}_1} (x_1;\mathbf{s}, \mathbf{r}, \mathbf{e}_1)
&=\mathbf{s}^\intercal\mathbf{A}_0+x_0\cdot\mathbf{r}^\intercal \mathbf{G}+\mathbf{e}_0^\intercal+\mathbf{s}^\intercal\mathbf{A}_1+x_1\cdot\mathbf{r}^\intercal \mathbf{G}+\mathbf{e}_1^\intercal\\
&=\mathbf{s}^\intercal\cdot(\mathbf{A}_0+\mathbf{A}_1)+(x_0+x_1)\cdot\mathbf{r}^\intercal \mathbf{G}+(\mathbf{e}_0^\intercal+\mathbf{e}_1^\intercal)\\
    &=\lenc_{\mathbf{A}_0+ \mathbf{A}_1} (x_0+x_1; \mathbf{s}, \mathbf{r}, \mathbf{e}_0 +\mathbf{e}_1).
\end{align*}
For scalar multiplication, we have:
\begin{align*}
    \lenc_\mathbf{A} (x; \mathbf{s}, \mathbf{r}, \mathbf{e}) \cdot \mathbf{G}^{-1}(\delta\cdot \mathbf{G})
    &=\mathbf{s}^\intercal\mathbf{A} \mathbf{G}^{-1}(\delta\cdot \mathbf{G})+x\cdot\mathbf{r}^\intercal \mathbf{G} \mathbf{G}^{-1}(\delta\cdot \mathbf{G})+\mathbf{e}^\intercal \mathbf{G}^{-1}(\delta\cdot \mathbf{G})\\
    &=\mathbf{s}^\intercal\mathbf{A} \mathbf{G}^{-1}(\delta\cdot \mathbf{G})+(x\cdot\delta)\cdot\mathbf{r}^\intercal \mathbf{G} +\mathbf{e}^\intercal \mathbf{G}^{-1}(\delta\cdot \mathbf{G})\\
    &=\lenc_{\mathbf{A}\cdot \mathbf{G}^{-1}(\delta\cdot \mathbf{G})} (x\cdot \delta; \mathbf{s}, \mathbf{r}, \mathbf{G}^{-1}(\delta\cdot \mathbf{G})^\intercal\cdot\mathbf{e}).
\end{align*}
Finally, for homomorphic multiplication, we have:
\begin{align*}
    &-\lenc_{\mathbf{A}_0}(x_0; \mathbf{s}_0, \mathbf{s}_1, \mathbf{e}_0) \cdot \mathbf{G}^{-1}(\mathbf{A}_1)+x_0\cdot \lenc_{\mathbf{A}_1}(x_1; \mathbf{s}_1, \mathbf{s}_2, \mathbf{e}_1)\\
    &\quad=-(\mathbf{s}_0^\intercal\mathbf{A}_0+x_0\cdot\mathbf{s}_1^\intercal \mathbf{G}+\mathbf{e}_0^\intercal)\cdot\mathbf{G}^{-1}(\mathbf{A}_1)+x_0\cdot (\mathbf{s}_1^\intercal\mathbf{A}_1+x_1\cdot\mathbf{s}_2^\intercal \mathbf{G}+ \mathbf{e}_1^\intercal)\\
    &\quad=\mathbf{s}_0^\intercal\cdot(-\mathbf{A}_0\mathbf{G}^{-1}(\mathbf{A}_1))-x_0\cdot \mathbf{s}_1^\intercal\mathbf{A}_1+x_0\cdot \mathbf{s}_1^\intercal\mathbf{A}_1+(x_0\cdot x_1)\cdot\mathbf{s}_2^\intercal \mathbf{G}-\mathbf{e}_0^\intercal\mathbf{G}^{-1}(\mathbf{A}_1)+x_0\cdot \mathbf{e}_1^\intercal\\
    &\quad=\lenc_{-\mathbf{A}_0\cdot \mathbf{G}^{-1}(\mathbf{A}_1)}(x_0\cdot x_1; (\mathbf{s}_0, \mathbf{s}_2, -\mathbf{G}^{-1}(\mathbf{A}_1)^\intercal\cdot \mathbf{e}_0+x_0\cdot\mathbf{e}_1)).
\end{align*}

\paragraph{Evaluating RMS Programs.} From the homomorphic operations described above, one can derive a general routine to evaluate any $T$-bounded (i.e., the maximum norm of an intermediate variable of the computation is bounded by $T$) RMS program of depth $d$. Recall that in an RMS program, one can sum any two variables, whereas multiplication can be done only so long as one of the two variables is an input. Before discussing the evaluation algorithm, let us generalize the notation described above to vectors, with:
\[
\lenc_\mathbf{A} (\mathbf{x};\mathbf{s}, \mathbf{r}, \mathbf{e}) := \mathbf{s}^\intercal\mathbf{A}+\mathbf{r}^\intercal (\mathbf{x}^\intercal\otimes\mathbf{G})+\mathbf{e}^\intercal
\]
by increasing the dimensions of the components appropriately. We refer to $\mathbf{s}$ as the \emph{encryption key} and $\mathbf{r}$ as the \emph{authentication key}. Homomorphic operations can be extended to vectors in a straightforward manner. We are now ready to state the algorithm to evaluate RMS programs. 

\begin{lemma}[Evaluation of RMS Programs]
\label{evalRMS}
Let $\mathbf{A}\in\Z_q^{k\times (m\cdot k\cdot \log q)}$ be a matrix, $\mathbf{x}\in\Z_q^{m-1}$ be an input, and let $f$ be a $T$-bounded depth-$d(\sec)$ RMS program. Define $\hat{\mathbf{x}}$ as the vertical concatenation of $\mathbf{x}$ and $1$.
For all $i\in[d]$, let:
\[
\mathbf{c}_i^\intercal := \lenc_{\mathbf{A}}(\hat{\mathbf{x}}; \mathbf{s}_i, \mathbf{s}_{i+1}, \mathbf{e}_i)
\]
with $\mathbf{s}_i\in \Z_q^k$ and such that $\max_i \|\mathbf{e}_i\|_\infty \leq \beta$. Then there exist two polynomial-time algorithms $\Eval\mathsf{RMSK}$ and $\Eval\mathsf{RMSC}$ such that:
\[
\mathbf{A}_f\gets \Eval\mathsf{RMSK}(\mathbf{A}, f) \quad\text{and}\quad
\Tilde{\mathbf{c}}\gets \Eval\mathsf{RMSC}(\mathbf{A}, f, \mathbf{x}, \{\mathbf{c}_i\}_{i\in [d]})
\]
with $\Tilde{\mathbf{c}}^\intercal \in \lenc_{\mathbf{A}_f}(f(\mathbf{x}), \mathbf{s}_0, \mathbf{s}_d, \Tilde{\mathbf{e}})$ such that $\|\Tilde{\mathbf{e}}\|_\infty \leq \beta \cdot O(T \cdot (k\cdot \log q)^d)$.
\end{lemma}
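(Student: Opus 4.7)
The plan is to process the RMS program gate by gate from inputs to outputs, maintaining the inductive invariant that every intermediate wire $w$ sitting at multiplicative depth $j$ is represented by an adaptive lattice encoding of the form $\lenc_{\mathbf{A}_w}(w; \mathbf{s}_0, \mathbf{s}_j, \mathbf{e}_w)$, where $\mathbf{A}_w$ depends only on $\mathbf{A}$ and the circuit topology (not on $\mathbf{x}$ or the noise), and $\mathbf{e}_w$ admits a controlled bound. The input encoding $\mathbf{c}_0$ bootstraps this invariant by providing, via restriction to the appropriate $k\times (k\cdot\log q)$ sub-block of $\mathbf{A}$, a $(\mathbf{s}_0, \mathbf{s}_1)$-encoding of each coordinate of $\hat{\mathbf{x}}$; in particular, the hard-wired $1$ in $\hat{\mathbf{x}}$ provides a canonical encoding of the constant $1$ at every depth $j$ via the corresponding $\mathbf{c}_j$. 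For the inductive step I would apply the three homomorphic rules of Construction~\ref{prot:Homomorphic}: addition and scalar multiplication preserve the depth; a multiplication gate $w\cdot x_j$ (with $w$ a depth-$j$ wire and $x_j$ an input) combines the $(\mathbf{s}_0, \mathbf{s}_j)$-encoding of $w$ with the $(\mathbf{s}_j, \mathbf{s}_{j+1})$-encoding of $x_j$ read off of $\mathbf{c}_j$, producing a $(\mathbf{s}_0, \mathbf{s}_{j+1})$-encoding of $w\cdot x_j$. Any wire whose depth is still strictly below $d$ at the end of the circuit can be promoted one level at a time by multiplying by the encoding of the constant $1$.

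A central observation I would leverage is that, in all three homomorphic rules, the output matrix depends only on the input matrices (and on public gadget data), never on the plaintexts, the keys, or the noise terms. This cleanly separates the gate-by-gate procedure into a matrix-only routine $\Eval\mathsf{RMSK}(\mathbf{A}, f)$ that outputs $\mathbf{A}_f$, and a ciphertext routine $\Eval\mathsf{RMSC}(\mathbf{A}, f, \mathbf{x}, \{\mathbf{c}_j\}_{j\in[d]})$ that tracks ciphertexts in parallel and uses $\mathbf{x}$ only to supply the ``left factor'' required by the multiplication rule.

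The main obstacle I foresee is the noise analysis, which I would perform by induction on $j$ alongside the structural invariant. Let $\beta_j$ upper bound $\|\mathbf{e}_w\|_\infty$ over all wires at depth $j$. Intra-depth additions and scalar multiplications only contribute factors polynomial in $k\cdot\log q$ and in the (polynomial) size of $f$, which I absorb into a big-$O$ constant. Each depth-crossing multiplication replaces the noise by a term of the form $-\mathbf{G}^{-1}(\mathbf{A}_1)^\intercal\mathbf{e}_w + w\cdot \mathbf{e}_j$, whose infinity norm is bounded by
\[
k\cdot\log q\cdot \beta_j + T\cdot \beta,
\]
using $|w|\le T$ from $T$-boundedness of $f$ and $\|\mathbf{e}_j\|_\infty\le \beta$ from the hypothesis on $\mathbf{c}_j$. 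Unrolling the recurrence $\beta_{j+1} = O(k\cdot\log q\cdot \beta_j + T\cdot \beta)$ with $\beta_1\le O(\beta)$ for $d$ levels yields $\beta_d \le \beta\cdot O(T\cdot (k\cdot\log q)^d)$, matching the bound claimed in the lemma and concluding the argument.
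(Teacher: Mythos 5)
Your proposal is correct and follows essentially the same path as the paper's proof: you maintain the same level-$(\mathbf{s}_0,\mathbf{s}_j)$ invariant by splitting each $\mathbf{c}_i$ into per-coordinate blocks, apply the three homomorphic rules with the encoded constant $1$ for level promotion, separate the matrix-only routine from the ciphertext routine by noting that output matrices depend only on input matrices, and derive the same noise recurrence $\beta_{j+1}=O(k\log q\cdot\beta_j+T\beta)$. The only cosmetic difference is that the paper phrases wires as ``level-$(i,j)$ encodings'' rather than ``wire at depth $j$,'' which is the same thing.
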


\begin{proof}
We refer to any encoding using $\mathbf{s_i}$ as encryption key and $\mathbf{s_j}$ as authentication key a level-$(i, j)$ encoding. Observe that splitting a level-$(i, i+1)$ encoding $\mathbf{c}_i$ into blocks of dimension $k\cdot \log q$, we obtain level-$(i, i+1)$ encodings: 
\[
\lenc_{\mathbf{A}_j}(x_j; \mathbf{s}_i, \mathbf{s}_{i+1}, \mathbf{e}_{i,j})
\]
where $x_j$ is the $j$-th entry of $\hat{\mathbf{x}}$, $\mathbf{A}_j$ is the $j$-th block of $\mathbf{A}$, and $\mathbf{e}_{i,j}$ is the $j$-th block of $k\cdot\log q$ entries in $\widetilde{\mathbf{e}}_{i}$).

Using the addition and multiplication by a constant, we can apply linear operations over encodings lying on the same level $(i, j)$. In this way, we obtain a level-$(i, j)$ encoding of the result. On the other hand, the operation increases the norm of the noise in the encodings: If the linear operation is described by a vector $\bm{\ell}$, the noise magnitude increases by a factor $\sum_{v} \log \ell_v$.
We can also homomorphically compute multiplications between any encoding on level $(i, j)$ and any encoding on level $(j, j+1)$. In this way, we obtain a level-$(i, j+1)$ encoding of the product. This time the noise magnitude increases by a factor $O(k\cdot \log q)$ and by an additive term $T\cdot \beta$ for each multiplication.
Finally, we observe that we can convert a level-$(i, j)$ encoding into a level $(i, j+1)$ encoding by simply multiplying by a level-$(j, j+1)$ encoding of 1. Notice that the latter is know given that the last entry of $\hat{\mathbf{x}}$ is a $1$.
Overall, the growth of the noise norm is bounded by a factor $\beta \cdot O(T \cdot (k\cdot \log q)^d)$.

We also highlight that for all these operations we described, we are able to derive the matrix underlying the output encodings from the matrices underlying the input encodings. Thus both algorithms are well-defined.
\end{proof}

\subsection{Compressing Lattice Encodings}\label{sec:comp_enc}

We describe a procedure to compress and decompress lattice encodings. Formally, this consists of a triple of algorithms $(\Setup, \Compress, \Expand)$ that allows one to sample a compressed version of a lattice encoding, that can be later on expanded into the format described above.

Let $n:=n(\sec), k:=k(\sec)$, $m:=m(\sec)$, and $q:=q(\sec)$ be positive integers. We are going to assume the existence of an $\alpha$-correct succinct NI-OTE protocol $\OTE=(\Setup, \Hash, \Enc, \Hash\Eval, \Enc\Eval)$ with bilinear encoder evaluation for $\Z_q^{m}\otimes\Z_q^{k\cdot\log q}$ where digests are vectors in $\Z_q^n$ and the encoder private information consists of a vector in $\Z_q^k$. Let $\mathbf{G}:=\id_k\otimes \mathbf{g}_q^\intercal$. Our scheme is formally described in Construction \ref{prot:R1BGG}. Observe that if we instantiate $\OTE$ with Construction \ref{prot:SOTE}, the size of the compressed encoding scales logarithmically in the size of its input.

\begin{protocol}[Compression of Lattice Encodings]\label{prot:R1BGG}

\begin{description}
\item[$\Setup(1^\secpar)$:] Return $\ck:= \pk\sample \OTE.\Setup(\bbbone^\sec)$.

\item[$\Compress(\ck, \mathbf{A}, \mathbf{x}, \mathbf{s}_0, \mathbf{s}_1, \mathbf{r})$:]  Compute $(\mathbf{d}, \psi) := \OTE.\Hash(\pk, \mathbf{x})$, then sample $\mathbf{e}\sample\chi(\sec)$. Compute
\[
\mathbf{h}^\intercal:= \mathbf{s}_0^\intercal \mathbf{A}+\mathbf{r}^\intercal (\mathbf{d}^\intercal \otimes \mathbf{G})+\mathbf{e}^\intercal
\]
and set $E\sample \OTE.\Enc(\pk, \mathbf{s}_1\otimes \mathbf{g}_q, \mathbf{r})$. Return 
$(\mathbf{h}, E)$.

 \item[$\Expand(\ck, \mathbf{A}, \mathbf{h}, E, \mathbf{x})$:] Recompute the hash 
 $(\mathbf{d}, \psi) := \OTE.\Hash(\pk, \mathbf{x})$ and set $\mathbf{v}:=\OTE.\Hash\Eval(\pk, E, \psi)$. Let $\mathbf{P}$ be the matrix of the $\OTE$ protocol that can be publicly derived from $\pk$. Return
 \[
 \mathbf{c}^\intercal:=\mathbf{h}^\intercal \mathbf{P}^\intercal+\mathbf{v}^\intercal.
 \]
\end{description}
\end{protocol}
We show that the expansion algorithm indeed leads to well-formed lattice encoding. By the correctness of the NI-OTE protocol, we have that:
\[
\mathbf{w}:=\OTE.\Enc\Eval(\pk, \mathbf{d}, \mathbf{r})=\mathbf{P}\cdot(\mathbf{d} \otimes \mathbf{r}\otimes \mathbf{g}_q).
\]
Then let us rewrite:
    \begin{align*}\mathbf{c}^\intercal&=\mathbf{h}^\intercal \mathbf{P}^\intercal+\mathbf{v}^\intercal\\
    &=\mathbf{s}_0^\intercal\cdot \mathbf{A}+\mathbf{r}^\intercal\cdot(\mathbf{d}^\intercal \otimes \mathbf{G})\cdot \mathbf{P}^\intercal+\mathbf{e}^\intercal\cdot \mathbf{P}^\intercal+\mathbf{v}^\intercal\\
    &=\mathbf{s}_0^\intercal\cdot \mathbf{A}\cdot \mathbf{P}^\intercal+\mathbf{r}^\intercal\cdot (\mathbf{d}^\intercal \otimes \id_k\otimes \mathbf{g}_q^\intercal)\cdot \mathbf{P}^\intercal+\mathbf{e}^\intercal\cdot \mathbf{P}^\intercal+\mathbf{v}^\intercal\\
    &=\mathbf{s}_0^\intercal\cdot \mathbf{A}\cdot \mathbf{P}^\intercal+(\mathbf{d}^\intercal \otimes \mathbf{r}^\intercal\otimes \mathbf{g}_q^\intercal)\cdot \mathbf{P}^\intercal+\mathbf{e}^\intercal\cdot \mathbf{P}^\intercal+\mathbf{v}^\intercal\\
    &=\mathbf{s}_0^\intercal\cdot \mathbf{A}\cdot \mathbf{P}^\intercal+{\mathbf{w}}^\intercal+\mathbf{e}^\intercal\cdot \mathbf{P}^\intercal+\mathbf{v}^\intercal\\
    &=\mathbf{s}_0^\intercal\cdot \mathbf{A}\cdot \mathbf{P}^\intercal+\mathbf{x}^\intercal\otimes (\mathbf{s}_1^\intercal\otimes \mathbf{g}_q^\intercal)+\mathbf{e}^\intercal\cdot \mathbf{P}^\intercal+\mathbf{e}'^\intercal\\
        &=\mathbf{s}_0^\intercal\cdot \mathbf{A}\cdot \mathbf{P}^\intercal+\mathbf{s}_1^\intercal\cdot (\mathbf{x}^\intercal\otimes\id_k\otimes \mathbf{g}_q^\intercal)+\mathbf{e}^\intercal\cdot \mathbf{P}^\intercal+\mathbf{e}'^\intercal\\
    &=\mathbf{s}_0^\intercal\cdot \mathbf{A}\cdot \mathbf{P}^\intercal+\mathbf{s}_1^\intercal\cdot (\mathbf{x}^\intercal\otimes \mathbf{G})+\mathbf{e}^\intercal\cdot \mathbf{P}^\intercal+\mathbf{e}'^\intercal\\
    &= \lenc_{\mathbf{A}\mathbf{P}^\intercal}(\mathbf{x}; \mathbf{s}_0, \mathbf{s}_1, \mathbf{e}\cdot \mathbf{P}+\mathbf{e}')
    \end{align*}
by the $\alpha$-correctness of the NI-OTE. We can bound the norm of the noise term by:
\[\lVert \mathbf{e}^\intercal\cdot \mathbf{P}^\intercal+\mathbf{e}'^\intercal\rVert_\infty\le k\cdot n\cdot \log q\cdot \lVert \mathbf{e}\rVert_\infty\cdot \lVert \mathbf{P}\rVert_\infty+\lVert \mathbf{e}'\rVert_\infty \le k\cdot n^{2r + 1}\cdot \log q\cdot B(\lambda) + \alpha \cdot\lVert\mathbf{x}\rVert_\infty.
\] 
by Equation (\ref{eq:matrix_norm2}) and by the $\alpha$-correctness of the NI-OTE.
We prove that the compressed encodings satisfy a notion of simulation that we define below.
\begin{theorem}[Simulatability]\thmlab{expansion}
Consider the following experiment $\mathsf{CompExp}_{\mathcal{A}, \mathsf{C}\Sim}(\bbbone^\sec)$ parametrized by an adversary $\mathcal{A} = (\mathcal{A}_0, \mathcal{A}_1)$ and a simulator $\mathsf{C}\Sim$:
\begin{itemize}
\item Sample $\ck\sample\Setup(\bbbone^\sec)$, $\mathbf{A}\sample\Z_q^{k\times (n\cdot k\cdot \log q)}$, and $\mathbf{s}_0, \mathbf{r}\sample\Z_q^k$.
\item Activate the adversary $(\mathbf{x}, \mathbf{s}_1, \aux)\gets\Adv_0(\bbbone^\sec, \ck, \mathbf{A})$.
\item Sample $b\sample\{0, 1\}$. If $b = 0$ compute:
\[
(\mathbf{h}_0, E_0)\sample\Compress(\ck, \mathbf{A}, \mathbf{x}, \mathbf{s}_0, \mathbf{s}_1, \mathbf{r})
\]
whereas if $b=1$ compute:
\[
(\mathbf{h}_1, E_1)\sample\mathsf{C}\Sim(\bbbone^\sec, \ck, \mathbf{A}).
\]
\item Obtain $b' \gets \Adv_1(\mathbf{h}_b, E_b, \aux)$ and return $1$ if and only if $b=b'$.
\end{itemize}
Then assuming the hardness of LWE and that $\OTE$ is encoder private, there exists a PPT simulator $\mathsf{C}\Sim$, such that for every PPT adversary $\mathcal{A}$, there exists a negligible function $\negl(\sec)$ such that, for every $\sec\in\N$, we have that:
\[
\left|\frac{1}{2} - \Pr\left[\mathsf{CompExp}_{\mathcal{A}, \mathsf{C}\Sim}(\bbbone^\sec) = 1\right] \right|\le\negl(\sec)
\]
where the probability is taken over the random coins of the experiment.
\end{theorem}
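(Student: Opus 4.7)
The plan is to prove simulatability through a short sequence of three hybrids that gradually strip out the secret-dependent components of $(\mathbf{h}, E)$. The natural simulator $\mathsf{C}\Sim(\bbbone^\sec, \ck, \mathbf{A})$ samples $\mathbf{h}^* \sample \Z_q^{n\cdot k\cdot\log q}$ uniformly and sets $E^* \sample \OTE.\Sim(\bbbone^\sec, \ck)$, where $\OTE.\Sim$ is the encoder-privacy simulator of the underlying succinct NI-OTE.

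I would first move from the real experiment to a hybrid $\mathcal{H}_1$ in which the summand $\mathbf{s}_0^\intercal \mathbf{A} + \mathbf{e}^\intercal$ inside $\mathbf{h}$ is replaced by a freshly sampled uniform row vector $\mathbf{u}^\intercal \in \Z_q^{n\cdot k \cdot \log q}$. This is a direct LWE reduction: $\mathbf{s}_0$ is sampled independently of everything else in the experiment and appears only here, while $\mathbf{A}$ is drawn uniformly. The reduction can therefore sample $\ck, \mathbf{r}, \mathbf{s}_1$ on its own and complete the experiment honestly, plugging in either $\mathbf{s}_0^\intercal\mathbf{A} + \mathbf{e}^\intercal$ or $\mathbf{u}^\intercal$ at the appropriate place. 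In the next hybrid $\mathcal{H}_2$, I observe that the value $\mathbf{h} = \mathbf{u}^\intercal + \mathbf{r}^\intercal(\mathbf{d}^\intercal \otimes \mathbf{G})$ is now uniform over $\Z_q^{n\cdot k\cdot \log q}$ and statistically independent of $\mathbf{r}, \mathbf{d}, \mathbf{s}_1$, so it can be re-sampled as a fresh uniform $\mathbf{h}^*$ for free. Crucially, after this information-theoretic step $\mathbf{r}$ only appears in the experiment as the fixed encoder randomness $\phi$ inside $E$.

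In the final hybrid $\mathcal{H}_3$, I invoke encoder privacy of $\OTE$ to replace $E$ with $\OTE.\Sim(\bbbone^\sec, \ck)$, giving exactly the simulator's output. The reduction forwards $\mathbf{y} := \mathbf{s}_1 \otimes \mathbf{g}_q^\intercal$ to the encoder-privacy challenger and plugs the returned encoding back in; since that challenger samples $\phi$ internally, the reduction never has to handle $\mathbf{r}$ directly, which is consistent with $\mathbf{r}$ being confined to $E$ after $\mathcal{H}_2$. The only subtlety is the ordering of hybrids: the LWE step must precede the encoder-privacy step, because the encoder-privacy challenger does not reveal $\phi = \mathbf{r}$, so any reduction that still needed $\mathbf{r}$ to form $\mathbf{h}$ would be uninstantiable. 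Aside from this, I foresee no substantive obstacle; the entire argument reduces cleanly to LWE and the encoder privacy of $\OTE$.
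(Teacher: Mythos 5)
Your proposal is correct and takes essentially the same route as the paper: the paper collapses your $\mathcal{H}_1$ and $\mathcal{H}_2$ into a single hybrid (replacing $\mathbf{h}$ directly with a uniform vector, justified by the observation that $\mathbf{h}^\intercal = \mathbf{A}^\intercal\mathbf{s}_0 + \mathbf{e} + (\mathbf{d}\otimes\mathbf{G}^\intercal)\mathbf{r}$ is an LWE sample shifted by a known quantity), and then invokes encoder privacy of $\OTE$ exactly as you do in your $\mathcal{H}_3$. Your remark on hybrid ordering, that the LWE step must come first so that the encoder-privacy reduction need not know $\mathbf{r}=\phi$, is the same implicit dependency the paper exploits, and your handling is sound given that $\OTE$ is programmable with $\phi$ sampled uniformly and independently of the encoded value.
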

\begin{proof}
Consider the following sequence of hybrid experiments.
\begin{itemize}
    \item Hybrid $\mathcal{H}_0$: This is the original experiment.
    \item Hybrid $\mathcal{H}_1$: We provide the adversary $\Adv_1$ with a pair $(\mathbf{h}, E)$ where $\mathbf{h}\sample\Z_q^{n\cdot k\cdot \log q}$ and $E\sample \OTE.\Enc(\pk, \mathbf{s}_1\otimes \mathbf{g}_q, \mathbf{r})$ for $\mathbf{r}\sample\Z_q^k$.

    Indistinguishability from the previous hybrid follows by a direct reduction to LWE. Indeed, in Hybrid $\mathcal{H}_0$, we have that $\mathbf{h} =\mathbf{A}^\intercal\cdot \mathbf{s}_0+\mathbf{e}+(\mathbf{d}\otimes\mathbf{G}^\intercal)\cdot \mathbf{r}$, where in particular $\mathbf{A}^\intercal\cdot \mathbf{s}_0+\mathbf{e}$ is an LWE sample.
    
    \item Hybrid $\mathcal{H}_2$: We also simulate $E\sample \OTE.\Sim(\bbbone^\sec, \pk)$.

    Indistinguishability follows from a straightforward reduction to the encoder privacy of $\OTE$.
\end{itemize}
We conclude by observing that in Hybrid $\mathcal{H}_2$ the pair $(\mathbf{h}, E)$ given to $\Adv_1$ is independent of the values $\mathbf{x}$ and $\mathbf{s}_1$ chosen by $\Adv_0$. This concludes the proof.
\end{proof}

\section{Reverse Trapdoor Hashing for all Functions}\label{sec:rtdh}

We construct a selectively-secure reverse TDHs for all functions. Note that, for an expressive enough class of functions, in terms of functionality reverse TDHs are identical to the standard notion of TDH \cite{C:DGIMMO19}, since one can always encode the universal circuit as the input, and vice-versa.
However, since the encoding key grows with the size of the input, embedding a universal circuit introduces a dependency in the size of the function. Thus, we pay a price in succinctness, when going from TDH to reverse TDH. On the other hand, the opposite direction (reverse TDH $\implies$ TDH) has no such problem, since the size of the hash is anyway constant. Thus, reverse TDH appears to be a more powerful abstraction.

\subsection{Definition}

\begin{definition}[Reverse Trapdoor Hashing]
A reverse trapdoor hashing scheme for the function class $\mathcal{F}=(\mathcal{F}_\sec)_{\sec\in\N}$ with input size $m(\sec)$ and output of size $\ell(\sec)$ consists of a tuple of PPT algorithms $(\Setup, \Hash, \Gen, \Enc, \Dec)$ with the following syntax:
\begin{description}
\item[$\Setup(\bbbone^\sec)$:] The setup algorithm is randomised and takes as input the security parameter $\bbbone^\sec$. The output is a hash key $\hk$.
\item[$\Hash(\hk, f)$:] The hashing algorithm is randomised takes as input a hash key $\hk$ and a function $f\in\mathcal{F}_\sec$. The output is a digest $d$ and hasher's private information $\rho$.
\item[$\Gen(\hk, \mathbf{x})$:] The generation algorithm is randomised and takes as input an hash key $\hk$, an element $\mathbf{x}\in\Z_2^m$. The output is an encoding key $\ek$ and a trapdoor $\td$. 
\item[$\Enc(\hk, \ek, f, \rho)$:] The encoding algorithm is deterministic and takes as input a hash key $\hk$, an encoding key $\ek$ and a function $f\in\mathcal{F}_\sec$ and hasher's private information $\rho$. The output is an encoding $\mathbf{e}\in\Z_2^\ell$.
\item[$\Dec(\hk, \td, d)$:] The decoding procedure is deterministic and takes as input a hash key $\hk$, a trapdoor $\td$ and a digest $d$. The output is an encoding $\mathbf{e}'\in\Z_2^\ell$.
\end{description}
\end{definition}

{
\begin{definition}[Correctness]
We say that a reverse TDH scheme is correct if there exists a negligible function $\negl(\sec)$ such that, for every valid adversary\footnote{An adversary is valid if it outputs a vector $\mathbf{x}\in\Z_2^m$ and a function $f\in\mathcal{F}_\sec$.} $\Adv$ and sufficiently large $\sec\in\N$, it holds that:
\[\Pr\left[\Enc(\hk, \ek, f, \rho)\oplus \Dec(\hk, \td, d)\ne f(\mathbf{x})\right]\le \negl(\sec)\]
where the probability is taken over the random choice of $\hk\sample\Setup(\bbbone^\sec)$, $(\mathbf{x}, f)\sample\Adv(1^\sec, \hk)$, $(d,\rho)\sample\Hash(\hk, f)$, and $(\ek, \td)\sample\Gen(\hk, \mathbf{x})$.
\end{definition}}

\begin{definition}[Selective Function Privacy]
Consider the following experiment $\mathsf{FuncExp}_{\mathcal{A}}(\bbbone^\sec)$ parametrized by an adversary $\mathcal{A} = (\mathcal{A}_0, \mathcal{A}_1)$:
\begin{itemize}
    \item Activate the adversary $(f_0, f_1, \aux)\sample\Adv_0(\bbbone^\sec)$.
    \item Sample a hash key $\hk\sample\Setup(\bbbone^\sec)$.
    \item Sample a random bit $b\sample\{0, 1\}$. 
    \item Compute $(d, \rho)\sample\Hash(\hk, f_b)$.
    \item Compute $b' \gets \Adv_1(\hk, d, \aux)$.
    \item Return $1$ if and only if $b =b'$.
\end{itemize}
We say that a reverse trapdoor hashing scheme $(\Setup, \Hash, \Gen, \Enc, \Dec)$ is function private if for every PPT adversary $\mathcal{A}$, there exists a negligible function $\negl(\sec)$ such that, for every $\sec\in\N$, we have that:
\[
\left|\frac{1}{2} - \Pr\left[\mathsf{FuncExp}_{\mathcal{A}}(\bbbone^\sec) = 1\right] \right|\le\negl(\sec).
\]
If the above property holds for every adversary (even computationally unbounded ones) we say that the scheme is statistically function private.
\end{definition}

\begin{definition}[Selective Input Privacy]
Consider the following experiment $\mathsf{InpExp}_{\mathcal{A}, \Sim}(\bbbone^\sec)$ parametrized by an adversary $\mathcal{A} = (\mathcal{A}_0, \mathcal{A}_1)$ and a simulator $\Sim$:
\begin{itemize}
    \item Activate the adversary $(\mathbf{x}, \aux)\sample\Adv_0(\bbbone^\sec)$.
    \item Sample a hash key $\hk\sample\Setup(\bbbone^\sec)$.
    \item Sample a random bit $b\sample\{0, 1\}$. 
    \item If $b = 0$ compute $(\ek_0, \td_0)\sample\Gen(\hk, \mathbf{x})$, else compute 
    $\ek_1\sample\Sim(\bbbone^\sec, \hk)$.
    \item Compute $b' \gets \Adv_1(\hk, \ek_b, \aux)$.
    \item Return $1$ if and only if $b =b'$.
\end{itemize}
We say that a reverse trapdoor hashing scheme $(\Setup, \Hash, \Gen, \Enc, \Dec)$ is function-private if there exists a PPT simulator $\Sim$, such that for every PPT adversary $\mathcal{A}$, there exists a negligible function $\negl(\sec)$ such that, for every $\sec\in\N$, we have that:
\[
\left|\frac{1}{2} - \Pr\left[\mathsf{InpExp}_{\mathcal{A}, \Sim}(\bbbone^\sec) = 1\right] \right|\le\negl(\sec).
\]
\end{definition}

\subsection{Laconic Function Evaluation with Pre-Encoding}

We now define a new variant of LFE. We ask that the encoding procedure is split into two parts: First, the encoder sends a function-independent pre-encoding of the input, then, once the hash of the function is revealed, it sends an input-independent post-encoding. We also require that the function digest and the post-encoding have a particular structure: the former consists of a matrix, the latter consists of a \emph{LWE-like} sample. Finally, we ask that the output of the decoding procedure is produced by rounding the sum between the post-encoding and a (possibly non-linear) function of the pre-encoding. The reader may have already noticed that these properties are satisfied by many LFE schemes studied in the literature \cite{FOCS:QuaWeeWic18,FOCS:HsiLinLuo23,C:DHMWW24,C:Wee24}.

\begin{definition}[Laconic Function Evaluation with Pre-Encoding]\label{def:LFE}
Let $\mathcal{F}:=(\mathcal{F}_\sec)_{\sec\in\N}$ be a family of functions, an LFE scheme consists of a tuple of PPT algorithms $(\Setup, \Hash, \Enc, \Dec)$ with the following syntax:
\begin{description}
    \item[$\Setup(\bbbone^\sec)$:] The probabilistic setup algorithm takes as input the security parameter and outputs a public key $\pk$.
    \item[$\Hash(\pk, f)$:] The hashing algorithm takes as input a public key $\pk$ and the description of a function $f\in\mathcal{F}_\sec$ with $f: \{0,1\}^{m} \to \{0,1\}$.
    The output is a digest $\mathbf{A}_f\in\Z_q^{k\times (k\cdot \log q)}$ and hasher's private information $\psi$.
    \item[$\Enc(\pk,\mathbf{A}_f, \mathbf{x})$:] The encoding algorithm takes as input a public key $\pk$ a hash $\mathbf{A}_f$, and an input string $\mathbf{x}\in\{0,1\}^m$. The algorithm is divided into two subroutines.
    \begin{description}
    \item[$\Pre\Enc(\pk, \mathbf{x})$:] The pre-encoding algorithm is independent of the hash: it takes as input a public key $\pk$, an input $\mathbf{x}$ (and sometimes a vector $\mathbf{r}\in\Z_q^{k-1}$). It returns a pre-encoding of the input $E$, along with a private information $\mathbf{s}\in\Z_q^{k}$. 
    \item[$\Post\Enc(\pk, \mathbf{A}_f, \mathbf{s})$:] The post-encoding algorithm does not depend on the input and returns a post-encoding defined as:
    \[
    c := \mathbf{s}^\intercal \cdot \mathbf{A}_f\cdot \mathbf{t} + e \in \Z_p
    \]
    where $e \sample \chi(\bbbone^\sec)$ and $\mathbf{t}$ is sampled over $\Z_2^{k\cdot \log q}$ (not necessarily at random).

    \end{description}
    The algorithm outputs an encoding $(E, c, \mathbf{t})$.
    
    \item[$\Dec(\pk, (E, c, \mathbf{t}), \psi)$:] The decoding algorithm takes as input a public key $\pk$, an encoding $(E, c, \mathbf{t})$ and hasher's private information $\phi$. The output is a bit 
    \[
    \lceil\Eval(E, \psi) \cdot \mathbf{t}+c\rfloor_2
    \]
    where $\Eval$ is a polynomial-time deterministic algorithm that returns a vector in $\Z_q^{k\cdot \log q}$.
\end{description}
\end{definition}

We define a particular version of correctness that applies in most scheme as in \defref{LFE}

\begin{definition}[Special Correctness]
Let $\widetilde{B}:=\widetilde{B}(\sec)$ be a function of the security parameter. An LFE scheme with pre-encoding satisfies $\widetilde{B}(\sec)$-special correctness if, for every $\sec\in\N$, function $f\in\mathcal{F}_\sec$ and input $\mathbf{x}\in\Z_2^m$, we have
\[
\Eval(E, \psi) = \mathbf{s}^\intercal\cdot \mathbf{A}_f+ \mathbf{r}^\intercal\cdot f(\mathbf{x})\cdot \mathbf{G}+\mathbf{\widetilde{e}}^\intercal
\]
where $\mathbf{G}=\id_k\otimes \mathbf{g}_q^\intercal$, $\mathbf{\widetilde{e}}$ is such that $\lVert\mathbf{\widetilde{e}}\rVert_\infty\le \widetilde{B}(\sec)$ and the last entry of $\mathbf{r}$ is $-1$.
\end{definition}
We recall the notion of hasher privacy.
\begin{definition}[Selective Hasher Privacy]
Consider the following experiment $\mathsf{HashExp}_{\mathcal{A}}(\bbbone^\sec)$ parametrized by an adversary $\mathcal{A} = (\mathcal{A}_0, \mathcal{A}_1)$:
\begin{itemize}
    \item Activate the adversary $(f_0, f_1, \aux)\sample\Adv_0(\bbbone^\sec)$.
    \item Sample a public key $\pk\sample\Setup(\bbbone^\sec)$.
    \item Sample a random bit $b\sample\{0, 1\}$. 
    \item Compute $(\mathbf{A}_f, \rho)\sample\Hash(\pk, f_b)$.
    \item Compute $b' \gets \Adv_1(\pk, \mathbf{A}_f, \aux)$.
    \item Return $1$ if and only if $b =b'$.
\end{itemize}
We say that an LFE scheme with pre-encoding $(\Setup, \Hash, \Pre\Enc, \Post\Enc, \Dec)$ is hasher-private if for every PPT adversary $\mathcal{A}$, there exists a negligible function $\negl(\sec)$ such that, for every $\sec\in\N$, we have that:
\[
\left|\frac{1}{2} - \Pr\left[\mathsf{HashExp}_{\mathcal{A}}(\bbbone^\sec) = 1\right] \right|\le\negl(\sec).
\]
If the above property holds for every adversary (even computationally unbounded ones) we say that the scheme is statistically hasher private.
\end{definition}
Finally we define a slightly weaker version of the standard encoder privacy, which however suffices for our purposes. Namely, we only require security against a distinguisher that sees the pre-encoding information (and we do not pose any requirement on the post-encoding).

\begin{definition}[Selective Pre-Encoding Privacy]
Consider the following experiment $\mathsf{PreEncExp}_{\mathcal{A}, \Sim}(\bbbone^\sec)$ parametrized by an adversary $\mathcal{A} = (\mathcal{A}_0, \mathcal{A}_1)$ and a simulator $\Sim$:
\begin{itemize}
    \item Activate the adversary $(\mathbf{x}, \aux)\sample\Adv_0(\bbbone^\sec)$.
    \item Sample a public key $\pk\sample\Setup(\bbbone^\sec)$.
    \item Sample a random bit $b\sample\{0, 1\}$. 
    \item If $b=0$, compute $(E_0, \phi_0)\sample\Enc(\pk, \mathbf{x})$. Otherwise, compute $E_1\sample\Sim(\bbbone^\sec, \pk)$.
    \item Compute $b' \gets \Adv_1(\pk, E_b, \aux)$.
    \item Return $1$ if and only if $b =b'$.
\end{itemize}
We say that an LFE scheme with pre-encoding $(\Setup, \Hash, \Pre\Enc, \Post\Enc, \Dec)$ is pre-encoder private if for every PPT adversary $\mathcal{A}$, there exists a negligible function $\negl(\sec)$ such that, for every $\sec\in\N$, we have that:
\[
\left|\frac{1}{2} - \Pr\left[\mathsf{PreEncExp}_{\mathcal{A}}(\bbbone^\sec) = 1\right] \right|\le\negl(\sec).
\]
\end{definition}
Most known LFE schemes satisfy (or can be adapted to satisfy) the above syntactical requirements. We summarize the state of the art in the following:
\begin{itemize}
    \item Assuming the hardness of LWE, there exists an LFE scheme with pre-encoding for all bounded-depth circuits \cite[Appendix E]{FOCS:QuaWeeWic18}.
    \item Assuming the hardness of small-secret circular LWE, there exists an LFE scheme with pre-encoding for all (no bound on the depth) circuits \cite{FOCS:HsiLinLuo23}.
    \item Assuming the hardness of Ring-LWE (small-secret circular Ring-LWE, resp.) there exists an LFE scheme with pre-encoding for all bounded-depth (unbounded-depth, resp.) RAM programs \cite{C:DHMWW24}.\footnote{{Although the construction presented in \cite{C:DHMWW24} is not shown to satisfy function hiding, the generic compiler from \cite{FOCS:QuaWeeWic18} can be applied here to upgrade the scheme to satisfy this property. The compiler preserves the asymptotic efficiency of the construction, and we refer the reader to \cite{FOCS:QuaWeeWic18} for details.}}
\end{itemize}
For our purposes, the details of these constructions will be irrelevant, provided that they satisfy the above syntax. Henceforth, we just assume that such an LFE exists, with the understanding that the exact efficiency guarantees of our construction will depend on the building block used to instantiate it.

\subsection{Construction}

In the following we define our construction for a reverse TDH for functions
\[
f: \{0,1\}^m \to \{0,1\}^\ell.
\]
We will use the following ingredients instantiating them over the ring $\Z_q$:
\begin{itemize}
    \item The $\alpha$-correct succinct MOLE protocol
     $(\Setup, \Hash, \Enc, \Hash\Eval, \Enc\Eval)$, see Protocol \ref{prot:HCVOLE}.
    \item An LFE scheme with pre-encoding.
     $(\Setup, \Hash, \Pre\Enc, \Post\Enc, \Dec)$. Suppose that the scheme satisfies $\widetilde{B}$-special correctness.
   \item A pseudorandom generator $\mathsf{PRG}:\{0, 1\}^\sec\rightarrow \Z_q^{\ell}$ that can be evaluated uniformly (e.g., instantiated via a pseudorandom function).
\end{itemize}
For convenience, we assume that $q$ is even. Moreover, we assume that $\alpha, \widetilde{B}\le q\cdot\negl(\sec)$. The protocol description is presented below.

\begin{protocol}[Reverse Trapdoor Hash]\label{prot:RTDH}

\begin{description}
\item[$\Setup(1^\secpar)$:] Sample keys
\[\pk \sample \LFE.\Setup(\bbbone^\sec), \qquad\qquad \mpk \sample \MOLE.\Setup(\bbbone^\sec).\]
Then, output $\hk:=(\pk, \mpk)$.

\item[$\Gen(\hk, \mathbf{x})$:] Compute
\[(E, \mathbf{s}) \sample \LFE.\Pre\Enc(\pk, \mathbf{x}), \qquad (E', \phi) \sample \MOLE.\Enc(\mpk, \mathbf{s})\]
{Then, sample $\mathsf{seed}\sample\{0, 1\}^\sec$ and output $\ek:=(E, E', \mathsf{seed})$ and $\td:=(\phi, \mathsf{seed})$.}

\item[$\Hash(\hk, f)$:] Let $\mathbf{u}$ be the last element of the standard basis over $\Z_q^k$. Let $(f_0, \dots, f_{\ell-1})$ be the functions that compute the $i$-th bit of the output of $f$. For every $i\in[\ell]$ compute
\[(\mathbf{A}_{f_i}, \psi_i)\sample\LFE.\Hash(\pk, f_i).\]
Finally, compute
\begin{align*}\mathbf{A}&\gets \left(\mathbf{A}_{f_0}\cdot \mathbf{G}^{-1}\left(-\frac{q}{2}\cdot \mathbf{u}\right)\mathbin{\Big\|}\dots \mathbin{\Big\|}\mathbf{A}_{f_{\ell-1}}\cdot \mathbf{G}^{-1}\left(-\frac{q}{2}\cdot \mathbf{u}\right)\right)\\
(d, \psi) &\sample \MOLE.\Hash(\mpk, \mathbf{A}^\intercal)\end{align*}
and output $d$ and $\rho:=(\psi, \psi_0, \dots, \psi_{\ell-1})$.

\item[$\Enc(\hk, \ek, f, \rho)$:] Let $\mathbf{u}$ be the last element of the standard basis over $\Z_q^k$. Compute
\[\mathbf{v}\gets \MOLE.\Hash\Eval(\mpk, E', \psi).\]
Parse $\mathbf{v}$ as the vertical concatenation of $(v_0, \dots, v_{\ell-1})$ 
and let $(r_0, \dots, r_{\ell-1})\gets \mathsf{PRG}(\mathsf{seed})$. For every $i\in[\ell]$, compute
\[e_i \gets \left\lceil r_i + \LFE.\Eval(E, \psi_i)\cdot  \mathbf{G}^{-1}(-q/2\cdot \mathbf{u}) - v_i \right\rfloor_2\]
Finally, output $\mathbf{e}:= \begin{pmatrix}
  e_0 \\ \vdots\\ e_{\ell-1}  
\end{pmatrix}$.
\item[$\Dec(\hk, d, \td)$:] Compute
\[\mathbf{w}\gets\MOLE.\Enc\Eval(\mpk, d, \phi).\]
 Parse $\mathbf{w}$ as the vertical concatenation of $(w_0, \dots, w_{\ell-1})$ 
and derive $(r_0, \dots, r_{\ell-1})\gets \mathsf{PRG}(\mathsf{seed})$. Then, for all $i\in[\ell]$, compute
\[e_i' \gets \left\lceil - r_i -w_i\right\rfloor_2\]
Finally, output $\mathbf{e}':=\begin{pmatrix}e_0'\\ \vdots\\ e_{\ell-1}'\end{pmatrix}$.

\end{description}
\end{protocol}

\paragraph{Correctness.}
For correctness, observe that for all $i\in[\ell]$ we have that:
\begin{align*}
    &\underbrace{r_i + \LFE.\Eval(E, \psi_i)\cdot  \mathbf{G}^{-1}(-q/2\cdot \mathbf{u}) - v_i}_{s_0} +
    \underbrace{- r_i -v_i'}_{s_1}\\
    &=\LFE.\Eval(E, \psi_i)\cdot  \mathbf{G}^{-1}(-q/2\cdot \mathbf{u}) -(v_i +v'_i)\\
    &=(\mathbf{s}^\intercal\cdot \mathbf{A}_{f_i}+ \mathbf{r}^\intercal\cdot f_i(\mathbf{x})\cdot \mathbf{G}+\mathbf{\widetilde{e}}^\intercal)\cdot  \mathbf{G}^{-1}(-q/2\cdot \mathbf{u}) - \mathbf{s}^\intercal\cdot \mathbf{A}_{f_i}\cdot \mathbf{G}^{-1}(-q/2\cdot \mathbf{u})-e_i\\  
    &= -q/2\cdot f_i(\mathbf{x})\cdot \mathbf{r}^\intercal\cdot\mathbf{u}+ \widetilde{\mathbf{e}}^\intercal\cdot  \mathbf{G}^{-1}(-q/2\cdot \mathbf{u})-e_i\\
    &= q/2\cdot f_i(\mathbf{x}) +\hat{e}_i
\end{align*}
where $\hat{e}_i:=\widetilde{\mathbf{e}}^\intercal\cdot  \mathbf{G}^{-1}(-q/2\cdot \mathbf{u})-e_i$ and the second equality follows from the $\alpha$-correctness of the MOLE and the special correctness of the LFE. Notice that $\lvert \widetilde{\mathbf{e}}^\intercal\cdot  \mathbf{G}^{-1}(-q/2\cdot \mathbf{u})-e_i\rvert\le \alpha(\sec)+\widetilde{B}(\sec)$. Thus we have that $s_0$ and $s_1$ is a noisy secret sharing of $f_i(\mathbf{x})$. Furthermore, it holds that
\[
f_i(\mathbf{x}) = \lceil s_0 + s_1\rfloor_2
= \lceil s_0\rfloor_2 \oplus \lceil s_1\rfloor_2
\]
except if $s_0 \in [q/4 -  \lvert\hat{e}_i\rvert, q/4 + \lvert\hat{e}_i\rvert] \cup [3q/4 - \lvert\hat{e}_i\rvert, 3q/4 + \lvert\hat{e}_i\rvert]$. Note that the size of the interval is at most $ 4\cdot\lvert\hat{e}_i\rvert$, which is a negligible fraction of $q$. Thus, by the pseudorandomness of $\mathsf{PRG}$, this event happens with negligible probability {(independently of the inputs and the public parameters),} concluding our proof of correctness.

\paragraph{Security.} We prove security in the following.
\begin{theorem}
    Suppose that $\MOLE$ is encoder private. If $\LFE$ is pre-encoding secure, Construction \ref{prot:RTDH} is input private. 
    Finally, if $\LFE$ is (statistically) function private, the reverse tradpoor hashing scheme is (statistically) hasher private.
\end{theorem}
\begin{proof}
It is easy to see that the construction is (statistically) function private if $\LFE$ is (statistically) function private.

As for input privacy, we proceed by a series of indistinguishable hybrids.

\begin{itemize}
\item Hybrid $\mathcal{H}_0$: This hybrid corresponds to the original game: We provide the adversary $\Adv_1$ with an encoding key $(E, E', \mathsf{seed})$ generated using $\Gen(\hk, \mathbf{x})$. 

\item Hybrid ${\mathcal{H}_1}$: In this hybrid, we generate $E'$ using $\MOLE.\Sim(\bbbone^\sec, \mpk)$. The rest remains as in the previous hybrid.

Hybrid ${\mathcal{H}_0}$ and Hybrid ${\mathcal{H}_1}$ are computationally indistinguishable thanks to the encoder privacy of the MOLE.

\item Hybrid ${\mathcal{H}_2}$: In this hybrid, we generate $E$ using $\LFE.\Sim(\bbbone^\sec, \pk)$. 
The rest remains as in the previous hybrid.

Hybrid ${\mathcal{H}_1}$ and ${\mathcal{H}_2}$ are indistinguishable under the input privacy of $\LFE$.
\end{itemize}
Notice that in Hybrid ${\mathcal{H}_2}$, the triple $(E, E', \mathsf{seed})$ provided to the adversary contains no information about $\mathbf{x}$. From this, we can easily derive a simulator. This ends the proof.
\end{proof}

{\paragraph{Parameters.} Construction~\ref{prot:RTDH}
inherits its efficiency from two ingredients: The pre-encoding part of the
underlying LFE scheme and the succinct MOLE used to compress the collection of
post-encoding matrices. Let
$s_{\mathsf{pre}}(\lambda,m,\ell)$
denote the size of the pre-encoding in the underlying LFE instantiation and let
$s_{\mathsf{hash}}(\lambda,\ell)$
denote the size of the corresponding reverse-TDH digest produced after MOLE
compression. Then Construction~\ref{prot:RTDH} has digest size
$s_{\mathsf{hash}}(\lambda,\ell)$,
encoding-key size
\[
s_{\mathsf{pre}}(\lambda,m,\ell)
+\poly(\lambda,\log \ell,\log q,\log k),
\]
and online encoding size exactly $\ell$. In particular, any dependence on the circuit depth is inherited from the
underlying LFE parameters, including the induced dependence of $k$ and $q$ on
that depth.}

{
\begin{theorem}[Reverse Trapdoor Hashing]\label{thm:reverseTDH}
Assume that MOLE is encoder private and that LFE is pre-encoding secure and
function private. Then Construction~\ref{prot:RTDH} is an input-private
and function-private reverse TDH for functions
$f:\{0,1\}^m\to\{0,1\}^\ell$ with digest size $s_{\mathsf{pre}}(\lambda,m,\ell)$,
encoding-key size
$s_{\mathsf{pre}}(\lambda,m,\ell)
+\poly(\lambda,\log \ell,\log q,\log k)$,
and online encoding size $\ell$.
\end{theorem}}

{\subsection{Applications: Laconic OT and Succinct HSS}}

\noindent{We briefly record two direct consequences of the reverse trapdoor hash of Construction \ref{prot:RTDH}: A public-key two-party homomorphic secret-sharing scheme and a batched laconic oblivious transfer protocol.}

{\begin{theorem}[Succinct HSS from Reverse TDH]
Assume that the hypotheses of Theorem~\ref{thm:reverseTDH} hold.  Then there exists a
public-key two-party homomorphic secret-sharing scheme for all functions
$f : \{0,1\}^{m} \times \{0,1\}^{n} \to \{0,1\}^{\ell}$
with communication complexity $m \cdot \poly(\lambda,\log n)$.
\end{theorem}}

{\begin{proof}[Proof Sketch]
    Fix $x$ and let $f_x(\,\cdot\,) := f(x,\cdot)$. Alice runs the reverse-TDH hashing algorithm on $f_x$, obtaining a digest $d$ and local randomness $\rho$.  She sends only the digest $d$ to Bob and keeps $\rho$ as her local share material. Bob runs the reverse-TDH key-generation algorithm on his input $y$, obtaining an encoding key $\mathsf{ek}$ and trapdoor $\mathsf{td}$. He sends $\mathsf{ek}$ to Alice and keeps $\mathsf{td}$ as his local share material.
Alice computes
\[
z_A := \Enc(\mathsf{hk},\mathsf{ek},f_x,\rho),
\]
while Bob computes
\[
z_B := \Dec(\mathsf{hk},\mathsf{td},d).
\]
By correctness of the reverse TDH we have $z_A \oplus z_B = f_x(y) = f(x,y)$.
Thus $(z_A,z_B)$ is a valid additive secret-sharing of the output.
The communication is exactly the reverse-TDH digest for the function side plus
the reverse-TDH encoding key for the input side.  Since the digest is succinct in
the description of the hashed function, this yields communication polylogarithmic in $|x|$ and linear in the short side $|y|$.
\end{proof}}
{The following theorem uses  Construction~\ref{prot:RTDH} instantiated with LFE for RAM programs, whose security follows from the hardness of Ring-LWE.}

{\begin{theorem}[Laconic OT from Reverse TDH]
Assume that the hypotheses of Theorem~\ref{thm:reverseTDH} hold and that pseudorandom functions exist. Then there exists a batched laconic OT protocol with download rate $1/2$.
\end{theorem}}

{\begin{proof}[Proof Sketch]
Let $D=\{D_i\}_{i\in[N]} \in \{0,1\}^N$ be the receiver's choice bits, and let the
sender hold pairs of messages
\[
\{(m_{i,0},m_{i,1})\}_{i\in I}
\]
for a batch of queried indices $I \subseteq [N]$. Define the RAM program $f_D$ which has the database $D$ hardwired and, on input a PRF key $k$, returns for every queried pair $(i,b)$:
\[
f_D(k;i,b)=
\begin{cases}
0 & \text{if } b = D_i,\\
\mathsf{PRF}(k,i) & \text{if } b \neq D_i.
\end{cases}
\]
The receiver hashes $f_D$ using the reverse TDH, obtaining digest $d$ and local
randomness $\rho$, and sends the digest to the sender. The sender samples a fresh PRF key $k$, runs the reverse-TDH generation algorithm
to obtain an encoding key $\mathsf{ek}$ for $k$, and computes, for each requested
$(i,b)$,
\[
c_{i,b} := \Dec(\mathsf{hk},\mathsf{td},d)_{i,b} \oplus m_{i,b},
\]
where $\mathsf{td}$ is the sender's trapdoor output by key generation.
The sender transmits $\mathsf{ek}$ together with all ciphertext bits
$\{c_{i,b}\}_{i\in I,b\in\{0,1\}}$. For the selected branch $b=D_i$, correctness of reverse TDH gives
\[
\Enc(\mathsf{hk},\mathsf{ek},f_D,\rho)_{i,b}
=
\Dec(\mathsf{hk},\mathsf{td},d)_{i,b},
\]
and since in that case $f_D(k;i,b)=0$, the receiver can recover
$m_{i,D_i}$.
For the unselected branch $b \neq D_i$, the value revealed by $f_D$ is
$\mathsf{PRF}(k,i)$, so the corresponding pad is pseudorandom and computationally hides
$m_{i,1-D_i}$.
The sender sends one mask per two transferred messages, so the asymptotic transfer
rate for a growing $|I|$ is $1/2$.
\end{proof}}
\section{Rate-1 Adaptive LFE for all Bounded-Depth Functions}

\subsection{Input-Succinct Reverse Trapdoor Hashing for RMS}\label{sec:weakTDH}

As a stepping stone, we present a variant of a trapdoor hashing (TDH) scheme \cite{C:DGIMMO19} with reversed syntax, achieving almost optimal encoding key size but supporting a restricted family of functions. 

Let $m:=m(\sec)$, $\ell:=\ell(\sec)$, $p:=p(\sec)$, $d:=d(\sec)$, $T:=T(\sec)$, $k:=k(\sec)$ and $t:=t(\sec)$ be positive integers. Construction \ref{prot:R1LFE} allows the evaluation of functions that map any pair $(\mathbf{x}, \mathbf{a})$, where $\mathbf{x}\in\{0, 1\}^m$ and $\mathbf{a}\in\Z_p^{t\cdot k}$, into $f(\mathbf{x})\otimes \mathbf{a}$ where $f:\{0, 1\}^m\rightarrow \{0, 1\}^\ell$ is described by a depth-$d$, $T$-bounded RMS program. 

Our construction only achieves \emph{adaptive} input privacy, but only a weak notion of correctness: Although the encoding key leaks no information about $\mathbf{x}$ and $\mathbf{a}$, in order to successfully run the encoding procedure, we are required to know $\mathbf{x}$ (but not $\mathbf{a}$). Although these properties may seem artificial, they will later be useful in combination with techniques of \cite{TCC:BTVW17}, to build adaptive LFE with rate-1 encodings (Section~\ref{sec:mainLFE}).

Concerning efficiency, the digest size in our construction is logarithmic in $m$, $t$ and linear in $\ell$, $d$. Moreover, the size of the encoding key is $t\cdot \poly(\sec, \log m, \log \ell, d)$. Notice that we can achieve logarithmic dependency in $m$ only because the encoding procedure needs $\mathbf{x}$ in order to run successfully. 

\paragraph{The Construction.} We now invite the reader to take a look at Construction \ref{prot:R1LFE}. The scheme relies on the compressed, adaptive lattice encodings of \secref{encodings}. We instantiate them over $\Z_q$, where $q=\Delta\cdot p$ where $\Delta:=\Delta(\sec)$ is a positive integer. We use $(\Setup, \Compress, \Expand)$ to denote the algorithms of Construction \ref{prot:R1BGG} and we use $n:=n(\sec)$ to denote the digest size of the $\alpha$-correct OTE with which it is instantiated. Let $\mathbf{P}$ be the matrix used for the hasher evaluation in the OTE protocol. We also rely on an $\hat{\alpha}$-correct succinct NI-MOLE protocol $\MOLE=(\Setup, \Hash, \Enc, \Hash\Eval, \Enc\Eval)$ for matrices of size $\Z_q^{(t\cdot \ell\cdot k\cdot \log q)\times k}$.

\begin{protocol}[Input-Succinct Reverse TDH for RMS]\label{prot:R1LFE}

\begin{description}
\item[$\Setup(1^\secpar)$:] Run the setup for the MOLE and the compressed lattice encodings
\[\ck\sample\Setup(\bbbone^\sec),\qquad \qquad \mpk\sample\MOLE.\Setup(\bbbone^\sec).\]
Then, sample $\mathbf{A}\sample\Z_q^{k\times (n\cdot k\cdot \log q)}$, $\mathbf{B}\sample\Z_q^{k\times (t\cdot k\cdot \log q)}$ and return $\hk:=(\ck, \mpk, \mathbf{A}, \mathbf{B})$.

\item[$\Hash(\hk, f)$:] Compute:
\[ \mathbf{A}_f\gets \Eval\mathsf{RMSK}(\mathbf{A}\mathbf{P}^\intercal, f), \qquad \qquad \mathbf{F}\gets -\mathbf{A}_f\cdot (\id_\ell \otimes \mathbf{G}^{-1}(\mathbf{B})).\]
Then output $(\mathbf{d}, \psi)\sample\MOLE.\Hash(\mpk,\mathbf{F}^\intercal)$.

\item[$\Gen\left(\hk, \mathbf{x}, \mathbf{a}; \{\mathbf{r}_i\}_{i\in[d]}\right)$:] Let $\hat{\mathbf{x}}$ be the vertical concatenation of $\mathbf{x}$ and $1$.
For every $i\in[d+1]$, sample $\mathbf{s}_i\sample\Z_q^k$ and set:
\[(\mathbf{h}_i, E_{i})\sample\Compress(\ck, \mathbf{A}, \hat{\mathbf{x}}, \mathbf{s}_i, \mathbf{s}_{i+1}, \mathbf{r}_i).\]
Next, sample $\mathbf{e}\sample\chi(\sec)$ and set
\[
\mathbf{b}^\intercal:= \mathbf{s}_d^\intercal \mathbf{B}+\mathbf{a}^\intercal\cdot (\id_t\otimes \mathbf{G}) +\mathbf{e}^\intercal,
\]
where $\mathbf{G}=\id_k\otimes \mathbf{g}_q^\intercal$.

Proceed by generating $(C, \phi)\sample\MOLE.\Enc(\mpk, \mathbf{s}_0)$ and sampling $\mathsf{seed}\sample\{0, 1\}^\sec$.

 Output $\ek:=(C, \mathbf{b}, \{\mathbf{h}_i, E_i\}_{i\in[d]}, \mathsf{seed})$ and $\td:=(\phi, \mathsf{seed})$

\item[$\Enc(\hk, \ek, \psi, \mathbf{x}, f)$:] Let $\hat{\mathbf{x}}$ be the vertical concatenation of $\mathbf{x}$ and $1$. For every $i\in[d]$, compute
\[
\mathbf{c}_i^\intercal\gets \Expand(\ck, \mathbf{A}, \mathbf{h}_i, E_i, \hat{\mathbf{x}})
\]
and set
\[
\mathbf{c}^\intercal\gets \Eval\mathsf{RMSC}(\mathbf{A}\mathbf{P}^\intercal, f, \mathbf{x}, \{\mathbf{c}_i\}_{i\in[d]}).
\]
Next, derive
\[
\mathbf{z}\gets -(\id_\ell\otimes \mathbf{G}^{-1}(\mathbf{B}))^\intercal \mathbf{c}
+f(\mathbf{x})\otimes \mathbf{b}.
\]
Finally, compute
\begin{gather*}
\mathbf{v}\gets \MOLE.\Hash\Eval(\mpk, C, \psi), \qquad\qquad
\tilde{\mathbf{u}}\gets (\id_{t\cdot \ell\cdot k}\otimes \mathbf{G}^{-1}(\Delta))^\intercal (\mathbf{z}-\mathbf{v}),\\
\tilde{\mathbf{y}}\gets \left\lceil \tilde{\mathbf{u}}+\mathsf{PRG}(\mathsf{seed})\right\rfloor_p.
\end{gather*}
 Output $\tilde{\mathbf{y}}$. 

\item[$\Dec(\hk, \mathbf{d}, \td)$:] Compute
\begin{gather*}
\mathbf{w}\gets \MOLE.\Enc\Eval(\mpk, \mathbf{d}, \phi), \qquad \qquad
\mathbf{u}\gets (\id_{t\cdot \ell\cdot k}\otimes \mathbf{G}^{-1}(\Delta))^\intercal \mathbf{w},\\
\mathbf{y}\gets \left\lceil -\mathbf{u}-\mathsf{PRG}(\mathsf{seed})\right\rfloor_p.
\end{gather*}
Then, output $\mathbf{y}$.

\end{description}
\end{protocol}

\paragraph{Adaptive Correctness.} In order for the construction to be fully correct, we need to choose a sufficiently large modulus $q$. Specifically, it must hold that 
\[\log q \cdot p\cdot \frac{n\cdot \lVert \mathbf{P}\rVert_\infty \cdot  T\cdot B \cdot (k\cdot \log q)^{d+2}+\alpha\cdot T\cdot (k\cdot \log q)^d+B+\hat{\alpha}}{q}=2^{-\omega(\log \sec)}.\] Notice that in the OTE of \secref{NIOTE}, it holds that $\lVert \mathbf{P}\rVert_\infty=n^{2r}$ where $r=O(\log m)$.\footnote{By a careful choice of parameters, $r$ can also be made a constant: E.g., by setting $t=\sec$ in Construction \ref{prot:SOTE}.}
We begin by observing that, by the correctness of Construction \ref{prot:R1BGG}, for every $i\in[d]$:
\[
\mathbf{c}_i^\intercal=\mathbf{s}_i^\intercal\mathbf{A}\mathbf{P}^\intercal+\mathbf{s}_{i+1}^\intercal(\hat{\mathbf{x}}^\intercal\otimes \mathbf{G})+\mathbf{e}_i^\intercal
\]
where $\lVert \mathbf{e}_i\rVert_\infty\le k\cdot n\cdot \lVert \mathbf{P}\rVert_\infty\cdot \log q \cdot B(\sec)+\alpha$.
Therefore, by Lemma \ref{evalRMS}, we have that:
\begin{align*}
\mathbf{c}^\intercal
&=\Eval\mathsf{RMSC}(\mathbf{A}\mathbf{P}^\intercal, f, \mathbf{x}, \{\mathbf{c}_i\}_{i\in[d]}) \\
&=\mathbf{s}_0^\intercal\mathbf{A}_f +\mathbf{s}_d^\intercal(f(\mathbf{x})^\intercal\otimes \mathbf{G})+\widetilde{\mathbf e}^\intercal.
\end{align*}
     where $\lVert \mathbf{\widetilde{e}}\rVert_\infty\le n\cdot \lVert \mathbf{P}\rVert_\infty\cdot B\cdot T\cdot (k\cdot \log q)^{d+1}+\alpha\cdot T\cdot (k\cdot \log q)^{d}$.
    We obtain that: 
\begin{align*}
\mathbf{z}
&=-(\id_\ell\otimes \mathbf{G}^{-1}(\mathbf{B}))^\intercal \mathbf{c}+f(\mathbf{x})\otimes \mathbf{b}\\
&=-(\id_\ell\otimes \mathbf{G}^{-1}(\mathbf{B}))^\intercal
\bigl(\mathbf{A}_f^\intercal \mathbf{s}_0+(f(\mathbf{x})\otimes \mathbf{G}^\intercal)\mathbf{s}_d+\widetilde{\mathbf e}\bigr)
+f(\mathbf{x})\otimes \mathbf{b}\\
&=\mathbf{F}^\intercal \mathbf{s}_0
-(\id_\ell\otimes \mathbf{G}^{-1}(\mathbf{B}))^\intercal (f(\mathbf{x})\otimes \mathbf{G}^\intercal)\mathbf{s}_d
-(\id_\ell\otimes \mathbf{G}^{-1}(\mathbf{B}))^\intercal \widetilde{\mathbf e}
+f(\mathbf{x})\otimes \mathbf{b}\\
&=\mathbf{F}^\intercal \mathbf{s}_0
-f(\mathbf{x})\otimes (\mathbf{B}^\intercal \mathbf{s}_d)
-(\id_\ell\otimes \mathbf{G}^{-1}(\mathbf{B}))^\intercal \widetilde{\mathbf e}
+f(\mathbf{x})\otimes
\bigl(\mathbf{B}^\intercal \mathbf{s}_d+(\id_t\otimes \mathbf{G})^\intercal\mathbf{a}+\mathbf e\bigr)\\
&=\mathbf{F}^\intercal \mathbf{s}_0
+f(\mathbf{x})\otimes\bigl((\id_t\otimes \mathbf{G})^\intercal\mathbf a\bigr)
-(\id_\ell\otimes \mathbf{G}^{-1}(\mathbf{B}))^\intercal \widetilde{\mathbf e}
+f(\mathbf{x})\otimes \mathbf e.
\end{align*}
    Finally, by the $\hat{\alpha}$-correctness of the MOLE, we have
\[
\mathbf{v}+\mathbf{w}=\mathbf{F}^\intercal\mathbf{s}_0+\widehat{\mathbf e}
\]
for some error vector $\widehat{\mathbf e}$ such that $\lVert \widehat{\mathbf e}\rVert_\infty\le \hat{\alpha}$. Therefore
\begin{align*}
\mathbf{z}-\mathbf{v}-\mathbf{w}
&=\mathbf{F}^\intercal \mathbf{s}_0
+f(\mathbf{x})\otimes\bigl((\id_t\otimes \mathbf{G})^\intercal\mathbf a\bigr)
-(\id_\ell\otimes \mathbf{G}^{-1}(\mathbf{B}))^\intercal \widetilde{\mathbf e}
+f(\mathbf{x})\otimes \mathbf e
-\mathbf{F}^\intercal\mathbf{s}_0-\widehat{\mathbf e}\\
&=f(\mathbf{x})\otimes\bigl((\id_t\otimes \mathbf{G})^\intercal\mathbf a\bigr)
-(\id_\ell\otimes \mathbf{G}^{-1}(\mathbf{B}))^\intercal \widetilde{\mathbf e}
+f(\mathbf{x})\otimes \mathbf e
-\widehat{\mathbf e}.
\end{align*}
Define
\[
\bm\varepsilon:=-(\id_\ell\otimes \mathbf{G}^{-1}(\mathbf{B}))^\intercal \widetilde{\mathbf e}
+f(\mathbf{x})\otimes \mathbf e
-\widehat{\mathbf e}.
\] It holds that:
    \[\lVert\bm{\varepsilon}\rVert_\infty\le n\cdot \lVert \mathbf{P}\rVert_\infty\cdot B\cdot T\cdot (k\cdot \log q)^{d+2}+\alpha\cdot T\cdot (k\cdot \log q)^{d}+B+\hat{\alpha}.\]
    We observe that:
    \begin{align*}
(\id_{t\cdot \ell\cdot k}\otimes \mathbf{G}^{-1}(\Delta))^\intercal(\mathbf{z}-\mathbf{v}-\mathbf{w})
&=\Delta\cdot (f(\mathbf{x})\otimes \mathbf a)
+(\id_{t\cdot \ell\cdot k}\otimes \mathbf{G}^{-1}(\Delta))^\intercal\bm\varepsilon.
\end{align*}
Notice also that
\[
\left\lVert(\id_{t\cdot \ell\cdot k}\otimes \mathbf{G}^{-1}(\Delta))^\intercal\bm\varepsilon\right\rVert_\infty
\le \log q\cdot \lVert\bm\varepsilon\rVert_\infty.
\]

    Finally, we observe that $\tilde{\mathbf{y}}+\mathbf{y}\ne f(\mathbf{x})\otimes \mathbf{a}\pmod p$ only if one of the entries of $\mathbf{u}-\mathsf{PRG}(\mathsf{seed})$ is less than $\log q\cdot \lVert \bm{\varepsilon}\rVert_\infty$ away from an odd multiple of $\Delta/2$. Since $\mathsf{seed}$ is independent of $\mathbf{u}$ and by the security of the PRG, the probability of this event is at most $p\cdot \log q\cdot \lVert\bm{\varepsilon}\rVert_\infty/q+\negl(\sec)$ {(this holds irrespectively of the inputs chosen by the adversary).} By hypothesis, this quantity is negligible.

\paragraph{Adaptive Privacy.} We now show that our construction satisfies adaptive privacy: under the LWE assumption with superpolynomial modulus-noise ratio, we can simulate $(C, \mathbf{b}, \{\mathbf{h}_i, E_i\}_{i\in[d]}, \mathsf{seed})$ without knowing any information about $\mathbf{x}$ and $\mathbf{a}$ even if these are chosen by the adversary after seeing the output of the setup. 
\begin{theorem}[Adaptive Privacy]
\thmlab{RMSLFE}
    Consider the following experiment $\mathsf{AdaptivePrivacy}_{\Adv, \Sim}(\bbbone^\sec)$ parametrized by an adversary $\Adv = (\Adv_0, \Adv_1)$ and a simulator $\Sim$:
\begin{itemize}
\item Sample $\hk\sample\Setup(\bbbone^\sec)$.
\item Activate the adversary $(\mathbf{x}, \mathbf{a}, \aux)\gets\Adv_0(\bbbone^\sec, \hk)$.
\item Sample $b\sample\{0, 1\}$. If $b = 0$, sample $\bm{r}_{i}\sample\Z_q^k$ for every $i\in[d]$. Then, compute 
\begin{align*}
    (\ek_0, \td_0)&\sample\Gen(\hk, \mathbf{x}, \mathbf{a}; \{\mathbf{r}_i\}_{i\in[d]})
\end{align*}
If $b=1$ compute:
\[
\ek_1\sample\Sim(\bbbone^\sec, \hk).
\]
\item Obtain $b' \gets \Adv_1(\ek_b, \aux)$ and return $1$ if and only if $b=b'$.
\end{itemize}
Then assuming the hardness of LWE with superpolynomial modulus-noise ratio and that $\MOLE$ and $\OTE$ are encoder private, there exists a PPT simulator $\Sim$, such that for every PPT adversary $\mathcal{A}$, there exists a negligible function $\negl(\sec)$ such that, for every $\sec\in\N$, we have that:
\[
\left|\frac{1}{2} - \Pr\left[\mathsf{AdaptivePrivacy}_{\mathcal{A}, \Sim}(\bbbone^\sec) = 1\right] \right|\le\negl(\sec)
\]
where the probability is taken over the random coins of the experiment.

\end{theorem}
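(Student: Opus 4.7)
\begin{proofsketch}
The plan is to establish adaptive privacy via a sequence of hybrids that gradually strip $\ek=(C,\mathbf{b},\{\mathbf{h}_i, E_i\}_{i\in[d]},\mathsf{seed})$ of all dependence on $(\mathbf{x},\mathbf{a})$. The key structural observation is that the chain of secrets $\mathbf{s}_0,\ldots,\mathbf{s}_d$ generated inside $\Gen$ is sampled uniformly and each secret appears in a very limited set of components: $\mathbf{s}_0$ appears in $C$ and in $(\mathbf{h}_0,E_0)$; each intermediate $\mathbf{s}_j$ (for $0<j<d$) appears in $(\mathbf{h}_{j-1},E_{j-1})$ as authentication key and in $(\mathbf{h}_j,E_j)$ as encryption key; and $\mathbf{s}_d$ appears in $(\mathbf{h}_{d-1},E_{d-1})$ together with the term $\mathbf{s}_d^\intercal\mathbf{B}+\mathbf{e}^\intercal$ inside $\mathbf{b}$.

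First I would replace $C$ by a fresh output of the $\MOLE$ encoder-privacy simulator; indistinguishability is a direct reduction to encoder privacy of $\MOLE$, and after this step $\mathbf{s}_0$ is used only inside $(\mathbf{h}_0,E_0)$. Next, for $i=0,1,\ldots,d-1$ in this order, I would replace $(\mathbf{h}_i,E_i)$ by the output of $\mathsf{C}\Sim(\ck,\mathbf{A})$ from \thmref{expansion}. The ordering is crucial: when I perform the $i$-th replacement, $\mathbf{s}_i$ no longer appears anywhere else in the adversary's view (the earlier $(\mathbf{h}_{i-1},E_{i-1})$ is already simulated, and $C$ is already simulated when $i=0$), so it can play the role of the uniform encryption key ``$\mathbf{s}_0$'' of the simulatability experiment, while $\mathbf{s}_{i+1}$ plays the role of the adversarially chosen ``$\mathbf{s}_1$'' and the fresh randomness $\mathbf{r}_i$ plays the role of ``$\mathbf{r}$''.

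After the $d$-th replacement, $\mathbf{s}_0,\ldots,\mathbf{s}_{d-1}$ have all disappeared from the view and $\mathbf{s}_d$ is used only inside $\mathbf{b}=\mathbf{s}_d^\intercal\mathbf{B}+\mathbf{a}^\intercal(\id_t\otimes\mathbf{G})+\mathbf{e}^\intercal$. A last hybrid replaces $\mathbf{s}_d^\intercal\mathbf{B}+\mathbf{e}^\intercal$ by a uniformly random vector via a direct reduction to $\LWE$ with secret $\mathbf{s}_d$ on the uniform matrix $\mathbf{B}$; $\mathbf{b}$ then becomes uniform and statistically independent of $\mathbf{a}$. The simulator $\Sim(\bbbone^\sec,\hk)$ is defined accordingly: output $C\sample\MOLE.\Sim(\mpk)$, $(\mathbf{h}_i,E_i)\sample\mathsf{C}\Sim(\ck,\mathbf{A})$ for every $i\in[d]$, $\mathbf{b}\sample\Z_q^{t\cdot k\cdot\log q}$ and a uniform $\mathsf{seed}$.

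The main obstacle will be the bookkeeping needed to invoke \thmref{expansion} adaptively $d$ times. Each such reduction must produce every other component of the current hybrid---in particular the \emph{next} compressed pair $(\mathbf{h}_{i+1},E_{i+1})$, which at this stage still uses $\mathbf{s}_{i+1}$ as its own encryption key---without ever accessing the external challenge $\mathbf{s}_i$. This works because the reduction samples $\mathbf{s}_{i+1}$ itself, hands it to the outer challenger as the adversarially chosen authentication key, and reuses its own copy internally to honestly generate $(\mathbf{h}_{i+1},E_{i+1})$; all lower-indexed components have already been simulated and depend on none of the remaining secrets, while all higher-indexed components, including $\mathbf{b}$, can be produced from secrets $\mathbf{s}_{i+1},\ldots,\mathbf{s}_d$ that the reduction sampled itself. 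An analogous argument justifies the very first hybrid: the MOLE reduction samples $\mathbf{s}_0$ itself, submits it as the message whose encoding $C$ is challenged, and reuses $\mathbf{s}_0$ to produce $(\mathbf{h}_0,E_0)$ honestly.
\end{proofsketch}
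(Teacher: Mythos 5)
Your proposal is correct and follows essentially the same hybrid sequence as the paper's proof: first simulate $C$ via $\MOLE$ encoder privacy, then replace each $(\mathbf{h}_i, E_i)$ for $i=0,\ldots,d-1$ by the output of the compressed-encoding simulator of Theorem~\ref{thm:expansion} in increasing order of $i$, and finally switch $\mathbf{b}$ to uniform via a direct LWE reduction using $\mathbf{s}_d$ as the secret. The ordering observation you highlight---that at the $i$-th step $\mathbf{s}_i$ appears nowhere else in the view, so it can be instantiated as the external challenger's secret while the reduction samples $\mathbf{s}_{i+1},\ldots,\mathbf{s}_d$ itself---is exactly the point the paper stresses when arguing the indistinguishability of $\mathcal{H}_2^i$ and $\mathcal{H}_2^{i+1}$.
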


\begin{proof}

We prove the theorem through a series of indistinguishable hybrids.
\begin{itemize}
\item Hybrid ${\mathcal{H}_0}$: This is the original experiment: We provide the adversary $\Adv_1$ with $(C, \mathbf{b}, \{\mathbf{h}_i, E_i\}_{i\in[d]}, \mathsf{seed})$ where $((C, \mathbf{b}, \{\mathbf{h}_i, E_i\}_{i\in[d]}, \mathsf{seed}), \td)\sample\Gen(\hk, \mathbf{x}, \mathbf{a}; \{\mathbf{r}_i\}_{i\in[d]})$.

\item Hybrid ${\mathcal{H}_1}$: In this hybrid, we generate $C$ using $\MOLE.\Sim(\bbbone^\sec, \mpk)$. The rest remains as in the previous hybrid.

Indistinguishability from Hybrid ${\mathcal{H}_0}$ follows from the encoder privacy of $\MOLE$.

\item Hybrid ${\mathcal{H}_{2}^i}$: In this hybrid, for every $j<i$, we generate $(\mathbf{h}_j, E_j)\sample\mathsf{C}\Sim(\bbbone^\sec, \ck, \mathbf{A})$, where $\mathsf{C}\Sim$ is the simulator of \thmref{expansion}. The rest remains as in the previous hybrid.

We observe that Hybrid ${\mathcal{H}_1}$ is identical to Hybrid ${\mathcal{H}_2^0}$. Moreover, for every $i\in[d]$, Hybrid ${\mathcal{H}_2^i}$ is computationally indistinguishable from Hybrid ${\mathcal{H}_2^{i+1}}$ due to \thmref{expansion} (notice that in Hybrid ${\mathcal{H}_2^i}$, the pair $(\mathbf{h}_{i-1}, E_{i-1})$ is independent of $\mathbf{s}_i$).

\item Hybrid ${\mathcal{H}_3}$: In this hybrid, we sample $\mathbf{b}\sample\Z_q^{t\cdot k\cdot \log q}$. The rest remains as in the previous hybrid.

Hybrid ${\mathcal{H}_2^{d}}$ is computationally indistinguishable from ${\mathcal{H}_3}$ under LWE. Notice that in Hybrid ${\mathcal{H}_2^{d}}$, the terms $C, \{\mathbf{h}_i, E_i\}_{i\in[d]}$ no longer contain information about $\mathbf{s}_d$. This allows us to reduce indistinguishability to LWE. Indeed, all information about $(\id_t\otimes \mathbf{G})^\intercal\mathbf{a}$ in $\mathbf{b}$ is masked by
\[ \mathbf{B}^\intercal \mathbf{s}_d+\mathbf e. \]
This can be viewed as an LWE sample with respect to the matrix ${\mathbf{B}}^\intercal$ and secret $\mathbf{s}_d$.
\end{itemize}
Notice that in Hybrid ${\mathcal{H}_3}$ all material provided to $\Adv_1$ is independent of $\mathbf{x}$ and $\mathbf{a}$.

\end{proof}

\subsection{Rate-1 Adaptive LFE for all Bounded-Depth Functions}\label{sec:mainLFE}

We now present our adaptive LFE scheme for bounded-depth functions. We described it in Construction \ref{prot:R1LFEforP}. Let $\ell:=\ell(\sec)$, $m:=m(\sec)$,  and $D:=D(\sec)$ be positive integers. The construction allows the evaluation of any function $f:\{0, 1\}^m\rightarrow \{0, 1\}^\ell$ described by a polynomial-sized circuit of depth at most $D(\sec)$ and width bounded by $W(\sec)=t^r\cdot n$, where $t$ is the parameter in the fully succinct OTE. The digest size is $\poly(\sec, \log m, \log \ell, \log D)$, whereas the size of the encodings is $m+\ell+D\cdot \poly(\sec, \log m, \log \ell, \log D)$.

The construction builds upon the reverse trapdoor hashing scheme $\RTDH:=(\Setup, \allowbreak\Hash, \allowbreak\Gen, \Enc, \Dec)$ of Construction \ref{prot:R1LFE}. {Notice that we can interpret the outputs of $\Enc$ and $\Dec$ as a secret-sharing over $\Z_p$ where $p=m^r\cdot 2^{\omega(\log \sec)}$, setting the modulus appropriately.} Moreover, we assume that $\RTDH$ is instantiated using the OTE protocols in Construction \ref{prot:SOTE} and \ref{prot:HCVOLE}. The  basic idea is the following: we pick a constant $d=O(1)$ and we rewrite $f$ as the composition of $L=D/\log d$ functions of depth at most $\log d$. Each of these can be regarded as a depth-$d$ RMS program, which can therefore be evaluated using $\RTDH$.

\begin{protocol}[Rate-1 Adaptive LFE for Bounded-Depth Circuits]%
\label{prot:R1LFEforP}

\begin{description}
\item[$\Setup(1^\secpar)$:] Output $\pk:=\hk\sample\RTDH.\Setup(\bbbone^\sec)$

\item[$\Hash(\pk, f)$:] For any vector $\mathbf{z}$, let $f_{\mathbf{z}}$ be the function that maps a key $K$ to $f(\mathbf{z}\oplus \mathsf{PRG}(K))$. Let $f'$ be the function that maps a pair $(\ct, \mathbf{z})$ to $\BV.\Eval(\ct, f_{\mathbf{z}})$ where the evaluation is performed as in \cite{ITCS:BraVai14}. 
Rewrite $f'$ as \[f'=f_{L-1}'\circ f_{L-2}'\circ\dots\circ f'_0\] where each $f'_i$ is a boolean function described by a $T$-bounded depth-$d$ RMS program, all with the same input and output size $W=t^r\cdot n$.
For every $i\in[L-1]$, let $\hat{f}_i$ be the function that maps $\mathbf{x}$ to $\OTE.\Hash(\tpk, f'_i(\mathbf{x}))$ and set $\hat{f}_{L-1}\gets f'_{L-1}$, where $\tpk$ denotes the OTE public key that is used in Construction \ref{prot:R1BGG}. Notice that $\OTE.\Hash$ is linear so its depth is $0$.\footnote{See Construction \ref{prot:HCVOLE} and Construction \ref{prot:SOTE}.} {Therefore, $\hat{f}_i$ is described by a $\max(T, m^{r})$-bounded depth-$d$ RMS program.\footnote{Since, we are hashing a binary string, the $\ell_\infty$-norm of the digest is $m^r$.}} Let $\hat{f}$ the function that maps any vector $\mathbf{x}$ to $(\hat{f}_0(\mathbf{x}), \dots, \hat{f}_{L-1}(\mathbf{x}))$. Output $(\mathbf{d}, \psi)\gets\RTDH.\Hash(\hk, \hat{f})$.

\item[$\Enc(\pk, \mathbf{d}, \mathbf{x})$:] Sample $\mathbf{M}\sample\Z_q^{(k-1)\times (k\cdot \log q+\sec)}$, $\mathbf{r}\sample\Z_q^{k-1}$ and $\mathbf{e}\sample\chi(\sec)$ and generate a $\BV$ key 
\[
\sk_\BV:=\begin{pmatrix}
\mathbf r\\
-1
\end{pmatrix},
\qquad
\pk_\BV:=\begin{pmatrix}
\mathbf M\\
(\mathbf M^\intercal \mathbf r+\mathbf e)^\intercal
\end{pmatrix}.
\]
Sample $K\sample\{0, 1\}^\sec$ and encrypt the input 
\[\mathbf{z}\gets \mathbf{x}\oplus\mathsf{PRG}(K),\qquad\qquad \ct\sample\BV.\Enc(\pk_\BV, K).\]
 For every $i\in[L], j\in[d]$ sample randomness $\mathbf{r}_{i, j}\sample\Z_q^k$ and define:
\[
\mathbf{a}_i:=\begin{pmatrix}
\mathbf{G}^{-1}(\mathbf{r}_{i,0})\\
\vdots\\
\mathbf{G}^{-1}(\mathbf{r}_{i,d-1})
\end{pmatrix},
\qquad
\mathbf{a}_L:=\mathbf{G}^{-1}(\sk_\BV).
\]
Then, set $\mathbf{x}_0:=(\mathsf{Bits}(\ct), \mathbf{z})$ and compute
\begin{align*}(\ek_0, \td_0)&\sample\RTDH.\Gen(\hk, \mathbf{x}_0, \mathbf{a}_{1}; \{\mathbf{r}_{0, j}\}_{j\in[d]})\\
\forall i>0:\qquad(\ek'_i, \td_i)&\sample\RTDH.\Gen(\hk, \mathbf{0}, \mathbf{a}_{i+1}; \{\mathbf{r}_{i, j}\}_{j\in[d]})\end{align*}
For every $i\in[L]$, let $\ek'_i=(C_i, \mathbf{b}_i, \{\mathbf{h}'_{i, j}, E_{i, j}\}_{j\in[d]}, \mathsf{seed}_i)$. Compute 
\[\mathbf{y}_i\gets -\RTDH.\Dec(\hk, \mathbf{d}, \td_i)\bmod p\]
and set $\overline{\ek}_i:= (C_i, \mathbf{b}_i, \{E_{i, j}\}_{j\in[d]}, \mathsf{seed}_i)$.
 Let $n$ be the digest size of $\OTE$. For every $i\in[L-1]$, take the block of $\mathbf{y}_{i}$ corresponding to the evaluation of $\hat{f}_i$ and split it into $d\cdot n$ subblocks $\hat{\mathbf{y}}_{i, 0}, \dots, \hat{\mathbf{y}}_{i, n\cdot d-1}$ of dimension $k\cdot \log q$ and set
\begin{align*}
\mathbf{y}_{i, j}^\intercal&\gets \bigl(\hat{\mathbf{y}}_{i, j}^\intercal \mathbin{\|} \hat{\mathbf{y}}_{i, d+j}^\intercal \mathbin{\|}\dots \mathbin{\|} \hat{\mathbf{y}}_{i, d\cdot (n-1)+j}^\intercal\bigr)\\
\mathbf{w}_{i, j}&\gets \mathbf{h}'_{i+1, j}- (\id_{n\cdot k}\otimes \mathbf{g}_q^\intercal\otimes \id_{\log q})\,(\mathbf{y}_{i,j}\otimes \mathbf{g}_q)\pmod q.
\end{align*}
Let $\mathbf{u}$ be the last element of the standard basis of $\Z_q^k$ and let $\mathbf{y}^\ast_{L-1}$ be the block of ${\mathbf{y}}_{L-1}$ corresponding to the output of $\hat{f}_{L-1}$. Compute \[ \mathbf{w}_{L-1}\gets -\mathbf{G}^{-1}(-\Delta\cdot (\id_\ell\otimes \mathbf u))^\intercal \mathsf{Lin}(\id_{\ell\cdot k^2\cdot \log q}) (\id_{\ell\cdot k^2\cdot \log q}\otimes \mathbf g_q^\intercal\otimes \id_k\otimes \mathbf g_q^\intercal) \,\mathbf{y}^\ast_{L-1}\pmod q. \]
Sample $\mathsf{seed}\sample\{0, 1\}^\sec$ and set $\mathbf{w}\gets \lceil\mathbf{w}_{L-1}+\mathsf{PRG}(\mathsf{seed})\rfloor_2$, parse $\mathsf{ek}_0=(C_0,\mathbf b_0,\{\mathbf h_{0,j},E_{0,j}\}_{j\in[d]},\mathsf{seed}_0)$ and output \[E:=(\ct, \mathbf{z}, \mathsf{seed}, \ek_0, \{\overline{\ek}_i\}_{i\in[L]}, \{\mathbf{w}_{i, j}\}_{i\in[L-1], j\in[d]}, \mathbf{w}).\]

\item[$\Dec(\pk, E, f, \psi)$:]
Initially, set $\mathbf{x}_0\gets (\mathsf{Bits}(\ct), \mathbf{z})$.
Then, for $i=0, \dots, L-2$: 
\begin{itemize}
\item compute $\tilde{\mathbf{y}}_i\gets\RTDH.\Enc(\hk, \ek_i, \psi, \mathbf{x}_i, \hat{f})$
\item take the block of $\tilde{\mathbf{y}}_{i}$ corresponding to the evaluation of $\hat{f}_i$ and split it into $d\cdot n$ subblocks $\overline{\mathbf{y}}_{i, 0}, \dots, \overline{\mathbf{y}}_{i, n\cdot d-1}$ of dimension $k\cdot \log q$ and set
\begin{align*}
\tilde{\mathbf{y}}_{i, j}^\intercal&\gets \bigl(\overline{\mathbf{y}}_{i, j}^\intercal \mathbin{\|} \overline{\mathbf{y}}_{i, d+j}^\intercal \mathbin{\|}\dots \mathbin{\|} \overline{\mathbf{y}}_{i, d\cdot (n-1)+j}^\intercal\bigr)\\
\mathbf{h}_{i+1,j}&\gets \mathbf{w}_{i,j} +(\id_{n\cdot k}\otimes \mathbf{g}_q^\intercal\otimes \id_{\log q})\,(\tilde{\mathbf{y}}_{i,j}\otimes \mathbf{g}_q)\pmod q.\end{align*}
\item set $\ek_{i+1}\gets(\overline{\ek}_{i+1}, \{\mathbf{h}_{i+1, j}\}_{j\in[d]})$
\item compute the input to the next RMS program $\mathbf{x}_{i+1}\gets f'_i(\mathbf{x}_i)$
\end{itemize}
Finally, let $\mathbf{u}$ be the last element of the standard basis of $\Z_q^k$. Derive
\begin{align*}\tilde{\mathbf{y}}_{L-1}&\gets\RTDH.\Enc(\hk, \ek_{L-1}, \psi, \mathbf{x}_{L-1}, \hat{f})\\
\tilde{\mathbf{w}}_{L-1}&\gets \mathbf{G}^{-1}(-\Delta\cdot (\id_\ell\otimes \mathbf u))^\intercal \mathsf{Lin}(\id_{\ell\cdot k^2\cdot \log q}) (\id_{\ell\cdot k^2\cdot \log q}\otimes \mathbf g_q^\intercal\otimes \id_k\otimes \mathbf g_q^\intercal) \,\overline{\mathbf y}_{L-1}^\ast\pmod q.
\end{align*}
where $\overline{\mathbf{y}}_{L-1}^\ast$ denotes the block of $\tilde{\mathbf{y}}_{L-1}$ corresponding to the evaluation of $\hat{f}_{L-1}$.

 Output $\lceil\tilde{\mathbf{w}}_{L-1}-\mathsf{PRG}(\mathsf{seed})\rfloor_2\oplus \mathbf{w}$

\end{description}
\end{protocol}

    \paragraph{Adaptive Correctness.} Towards proving correctness, we prove the following lemma, which states that the term $\mathbf{h}_{i, j}$ computed by the server during the decoding procedure is an adaptive lattice encoding of $\OTE.\Hash(\tpk, \mathbf{x}_i)$. In other words, $\ek_i$ is an encoding key for $\mathbf{x}_i=(f_{i-1}'\circ\dots\circ f'_0)(\mathsf{Bits}(\ct), \mathbf{z})$.
\begin{lemma}
\label{claimboh}
For every $i\in[L]$ and $j\in[d]$, we have that
\[ \mathbf{h}_{i,j}=\mathbf{A}^\intercal \mathbf{s}_{i,j}+(\mathbf{d}'_i\otimes \mathbf{G}^\intercal)\mathbf{r}_{i,j}+\mathbf e_i. \]
where $(\mathbf{d}_i', \psi'_i)\gets\OTE.\Hash(\tpk, \mathbf{x}_i)$, $\mathbf{s}_{i, j}$ is the encryption key used in the encoding $\mathbf{h}'_{i, j}$ $(\mathbf{h}_{0, j}$, if $i=0)$ and $\lVert \mathbf{e}_i\rVert_\infty\le B(\sec)$. {This holds with overwhelming probability, irrespectively of the inputs chosen by the adversary.}
\end{lemma}
\begin{proof}
We proceed by induction over $i$. The claim is trivially true for $i=0$. Now, we show that if it holds for $i$, it holds also for $i+1$.

We observe that
\begin{align*}
\mathbf{h}_{i+1,j}
&=\mathbf{w}_{i,j}+(\id_{n\cdot k}\otimes \mathbf g_q^\intercal\otimes \id_{\log q})(\tilde{\mathbf{y}}_{i,j}\otimes \mathbf g_q)\pmod q\\
&=\mathbf{h}'_{i+1,j}
-(\id_{n\cdot k}\otimes \mathbf g_q^\intercal\otimes \id_{\log q})(\mathbf y_{i,j}\otimes \mathbf g_q)
+(\id_{n\cdot k}\otimes \mathbf g_q^\intercal\otimes \id_{\log q})(\tilde{\mathbf{y}}_{i,j}\otimes \mathbf g_q)\pmod q\\
&=\mathbf{h}'_{i+1,j}
+(\id_{n\cdot k}\otimes \mathbf g_q^\intercal\otimes \id_{\log q})\bigl((\tilde{\mathbf{y}}_{i,j}-\mathbf y_{i,j})\otimes \mathbf g_q\bigr)\pmod q.
\end{align*}
Now, by the correctness of $\RTDH$ and the inductive hypothesis, we recall that $ \tilde{\mathbf{y}}_i-\mathbf{y}_i=\hat{f}(\mathbf{x}_i)\otimes \mathbf{a}_{i+1}\pmod p$. {This holds with overwhelming probability, irrespectively of the inputs chosen by the adversary.} We also recall that this subtractive secret-sharing was initially over $\Z_q$, i.e. the parties held vectors $\tilde{\mathbf{u}}_i$ and $\mathbf{u}_i$ such that $\tilde{\mathbf{u}}_i-\mathbf{u}_i=\frac{q}{p}\cdot\bigl(\hat{f}(\mathbf{x}_i)\otimes \mathbf{a}_{i+1}\bigr)+\bm{\varepsilon}_i$ where $\lVert \bm{\varepsilon}_i\rVert_\infty$ is low. This secret-sharing was rerandomised using $\mathsf{PRG}(\mathsf{seed}_i)$ and then rounded. We argue that $\mathbf{y}_i$ is pseudorandom. {In particular, since $p=2^{\omega(\log \sec)}$, with overwhelming probability over the choice of $\mathsf{seed}_i$, all entries of $\mathbf{y}_i$ are in the interval $[-p/2, p/2-m^r)$. Since $\hat{f}(\mathbf{x}_i)\otimes \mathbf{a}_{i+1}$ has all $\ell_\infty$-norm at most $m^{r}$, we conclude that (with overwhelming probability over the choice of $\mathsf{seed}_i$)} $\tilde{\mathbf{y}}_i$ and $\mathbf{y}_i$ are a subtractive secret-sharing of $\hat{f}(\mathbf{x}_i)\otimes \mathbf{a}_{i+1}$ even over the integers. Continuing, we observe that the $i$-th $n$-entry block of $\hat{f}(\mathbf{x}_i)$ coincides with $\mathbf{d}'_{i+1}$ where $\mathbf{d}'_{i+1}$ is the output of $\OTE.\Hash(\tpk, f'_i(\mathbf{x}_i))=\OTE.\Hash(\tpk, \mathbf{x}_{i+1})$. Moreover, we observe that, by the way we constructed $\mathbf{a}_{i+1}$, $\tilde{\mathbf{y}}_{i, j}$ and $\mathbf{y}_{i, j}$, we have that 
\[ \tilde{\mathbf{y}}_{i,j}-\mathbf{y}_{i,j} =\mathbf{d}'_{i+1}\otimes \mathbf{G}^{-1}(\mathbf{r}_{i+1,j}). \]
Putting everything together, we obtain
\begin{align}
\mathbf{h}_{i+1,j}
&=\mathbf{h}'_{i+1,j}
+(\id_{n\cdot k}\otimes \mathbf g_q^\intercal\otimes \id_{\log q})\bigl((\tilde{\mathbf{y}}_{i,j}-\mathbf y_{i,j})\otimes \mathbf g_q\bigr)\nonumber\\
&=\mathbf{h}'_{i+1,j}
+(\id_{n\cdot k}\otimes \mathbf g_q^\intercal\otimes \id_{\log q})
\bigl((\mathbf d'_{i+1}\otimes \mathbf G^{-1}(\mathbf r_{i+1,j}))\otimes \mathbf g_q\bigr)\nonumber\\
&=\mathbf{h}'_{i+1,j}+(\mathbf d'_{i+1}\otimes \mathbf G^\intercal)\mathbf r_{i+1,j}.\label{eqclaimboh}
\end{align}
Since $\mathbf{h}'_{i+1,j}=\mathbf{A}^\intercal \mathbf{s}_{i+1,j}+\mathbf e_{i+1}$ where $\lVert \mathbf{e}_{i+1}\rVert_\infty\le B(\sec)$, we obtain that
\[ \mathbf{h}_{i+1,j}=\mathbf{A}^\intercal \mathbf{s}_{i+1,j}+(\mathbf{d}'_{i+1}\otimes \mathbf{G}^\intercal)\mathbf{r}_{i+1,j}+\mathbf e_{i+1}. \]
This ends the proof of the lemma.
\end{proof}

Continuing with our analysis, by Lemma \ref{claimboh}, we have that $\tilde{\mathbf{y}}_{L-1}$ and $\mathbf{y}_{L-1}$ form a subtractive secret sharing of $\hat{f}(\mathbf{x}_{L-1})\otimes \mathbf{a}_{L}$ over $\Z_p$. {This holds with overwhelming probability, irrespectively of the inputs chosen by the adversary.} Furthermore, similarly to how we argued in the proof of Lemma \ref{claimboh}, it is possible to prove that the probability that any entry of $\mathbf{y}_{L-1}$ lies outside of $[-p/2, p/2-1)$ is negligible {(irrespectively of the inputs chosen by the adversary)}. Given that all entries of $\hat{f}(\mathbf{x}_{L-1})\otimes \mathbf{a}_{L}$ belong to $\{0, 1\}$, we conclude that, with overwhelming probability, $\tilde{\mathbf{y}}_{L-1}$ and $\mathbf{y}_{L-1}$ form a subtractive secret sharing of $\hat{f}(\mathbf{x}_{L-1})\otimes \mathbf{a}_{L}$ even over $\Z$. By the way we constructed $\hat{f}, \overline{\mathbf{y}}^\ast_{L-1}$ and $\mathbf{y}^\ast_{L-1}$, we infer that \[
\overline{\mathbf{y}}_{L-1}^\ast-\mathbf{y}^\ast_{L-1}
=\hat{f}_{L-1}(\mathbf{x}_{L-1})\otimes \mathbf{a}_{L}.
\]
We also notice that $\mathbf{a}_L=\mathbf{G}^{-1}(\sk_\BV)$, whereas
\[\hat{f}_{L-1}(\mathbf{x}_{L-1})=f'_{L-1}(\mathbf{x}_{L-1})=(f'_{L-1}\circ\dots \circ f'_0)(\mathbf{x}_0)=\mathsf{Bits}(f'(\ct, \mathbf{z}))=\mathsf{Bits}(\mathsf{vec}(\BV.\Eval(\ct, f_{\mathbf{z}}))).\]
Let $\hat{\ct}:=\BV.\Eval(\ct, f_{\mathbf{z}})$. We obtain that
\begin{align*}
\tilde{\mathbf{w}}_{L-1}+\mathbf{w}_{L-1}
&=\mathsf{G}^{-1}(-\Delta\cdot (\id_\ell\otimes \mathbf u))^\intercal
\mathsf{Lin}(\id_{\ell\cdot k^2\cdot \log q})
(\id_{\ell\cdot k^2\cdot \log q}\otimes \mathbf g_q^\intercal\otimes \id_k\otimes \mathbf g_q^\intercal)
\bigl(\overline{\mathbf y}^\ast_{L-1}-\mathbf{y}^\ast_{L-1}\bigr)\\
&=\mathsf{G}^{-1}(-\Delta\cdot (\id_\ell\otimes \mathbf u))^\intercal
\mathsf{Lin}(\id_{\ell\cdot k^2\cdot \log q})
(\id_{\ell\cdot k^2\cdot \log q}\otimes \mathbf g_q^\intercal\otimes \id_k\otimes \mathbf g_q^\intercal)
\bigl(\hat f_{L-1}(\mathbf x_{L-1})\otimes \mathbf a_L\bigr)\\
&=\mathsf{G}^{-1}(-\Delta\cdot (\id_\ell\otimes \mathbf u))^\intercal
(\id_{\ell\cdot k\cdot \log q}\otimes \sk_\BV^\intercal)\,\mathsf{vec}(\hat{\ct}).
\end{align*}
By the correctness of $\BV$ evaluation, we have that
\[
(\id_{\ell\cdot k\cdot \log q}\otimes \sk_\BV^\intercal)\,\mathsf{vec}(\hat{\ct})
=\Delta\cdot f(\mathbf{x})+\widehat{\mathbf e}
\]
for some error vector $\widehat{\mathbf e}$ such that $\lVert\widehat{\mathbf e}\rVert_\infty\le B\cdot D\cdot \poly(\sec)$. Notice that $f_{\mathbf{z}}(K)=f(\mathbf{z}\oplus \mathsf{PRG}(K))=f(\mathbf{x})$.
We conclude that
\begin{align}
\mathbf{w}'_{L-1}+\mathbf{w}_{L-1}
&=\Delta\cdot f(\mathbf{x})+\widehat{\mathbf e}.\label{eq45}
\end{align}
So, unless any of the entries of $\mathbf{w}_{L-1}+\mathsf{PRG}(\mathsf{seed})$ is less than $\lVert\widehat{\mathbf e}\rVert_\infty$ away from $q/4$ or $-q/4$, we have that \[f(\mathbf{x})=\lceil\mathbf{w}_{L-1}'-\mathsf{PRG}(\mathsf{seed})\rfloor_2\oplus \lceil\mathbf{w}_{L-1}+\mathsf{PRG}(\mathsf{seed})\rfloor_2.\] 
Since $\mathsf{seed}$ is sampled independently of $\mathbf{w}_{L-1}$, the probability of the bad event is at most $\poly(\sec)\cdot \lVert\widehat{\mathbf e}\rVert_\infty/q)+\negl(\sec)$. {This holds irrespectively of the inputs chosen by the adversary.} Given our choice of $q$ for $\RTDH$, this is a negligible amount. 

\paragraph{Security.} Next, we prove that our construction is encoder private.

\begin{theorem}
Assume the hardness of LWE with superpolynomial modulus-noise ratio. Then, Construction \ref{prot:R1LFEforP} is an adaptively encoder private LFE. 
\end{theorem}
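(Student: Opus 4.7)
\begin{proofsketch}
The simulator $\Sim(\pk,f,r,f(\mathbf{x}))$ first reconstructs $(\mathbf{d},\psi)\gets\Hash(\pk,f;r)$ and then builds the encoding as follows: it samples $\pk_\BV$ honestly and encrypts $0^\sec$ under it to form $\ct$; it draws $\mathbf{z}\sample\{0,1\}^m$ and fresh seeds $\mathsf{seed},\{\mathsf{seed}_i\}$ uniformly; it invokes the $\RTDH$ adaptive-privacy simulator of \thmref{RMSLFE} to produce simulated $\ek_0$ and $\overline{\ek}_i$ for every $i\in[L]$; it draws each $\mathbf{w}_{i,j}$ uniformly from $\Z_q^{n\cdot k\cdot \log q}$; it then runs the decoder $\Dec$ on this material up to the intermediate quantity $\mathbf{w}'_{L-1}$ and finally sets $\mathbf{w}:=f(\mathbf{x})\oplus\lceil\mathbf{w}'_{L-1}-\mathsf{PRG}(\mathsf{seed})\rfloor_2$ so that decoding trivially returns $f(\mathbf{x})$.

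Indistinguishability proceeds through a chain of hybrids starting from the real encoding. A first block of hybrids, indexed by the pairs $(i,j)$ with $i<L-1$ and $j\in[d]$, jointly replaces each $E_{i+1,j}$ by the OTE-encoding simulator and each $\mathbf{w}_{i,j}$ by a uniform element; each step reduces to \thmref{expansion}, exploiting that the pair $(\mathbf{h}'_{i+1,j},E_{i+1,j})$ is jointly pseudorandom because the encryption key $\mathbf{s}_{i+1,j}$ and authentication key $\mathbf{r}_{i+1,j}$ appear nowhere else in the experiment, and that $\mathbf{w}_{i,j}$ is a deterministic affine function of $\mathbf{h}'_{i+1,j}$. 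A second block applies the remaining internal steps from the proof of \thmref{RMSLFE} to replace each $C_i$ by the MOLE simulator and each $\mathbf{b}_i$ by a uniform vector, so that every $\ek_0,\overline{\ek}_i$ now comes from the $\RTDH$ simulator. We then rewrite $\mathbf{w}$ as the deterministic function of $f(\mathbf{x})$ and $\mathbf{w}'_{L-1}$ used by $\Sim$: by the correctness analysis of Construction~\ref{prot:R1LFEforP} (culminating in equation~(\ref{eq45})), combined with the freshness of $\mathsf{seed}$ and the superpolynomial modulus-noise ratio, this rewrite is statistically indistinguishable. We finally switch $\ct$ from $\BV.\Enc(\pk_\BV,K)$ to $\BV.\Enc(\pk_\BV,0^\sec)$ using IND-CPA security of $\BV$ (reducible to LWE; at this point $\sk_\BV$ no longer appears elsewhere, since $\mathbf{a}_L=\mathbf{G}^{-1}(\sk_\BV)$ was only used inside the now-simulated $\RTDH$ call) and replace $\mathbf{z}=\mathbf{x}\oplus\mathsf{PRG}(K)$ by a uniform string via pseudorandomness of $\mathsf{PRG}$, since $K$ is now used only as the plaintext of a ciphertext encrypting zero. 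The resulting distribution is precisely that produced by $\Sim$.

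The main technical obstacle is the treatment of the last-layer correction $\mathbf{w}$: it involves $\mathbf{y}_{L-1}$ computed from $\td_{L-1}$ rather than a pseudorandom $\mathbf{h}'$, so it cannot be replaced by a fresh uniform value in the way that the intermediate $\mathbf{w}_{i,j}$ are. It must instead be handled through the correctness analysis, which guarantees that $\lceil\mathbf{w}_{L-1}+\mathsf{PRG}(\mathsf{seed})\rfloor_2\oplus\lceil\mathbf{w}'_{L-1}-\mathsf{PRG}(\mathsf{seed})\rfloor_2=f(\mathbf{x})$ except with negligible probability, so that the real $\mathbf{w}$ is statistically close to the value used by $\Sim$. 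A secondary subtlety is the ordering of hybrids: the OTE-based hybrid for each pair $(E_{i+1,j},\mathbf{w}_{i,j})$ must precede the full $\RTDH$ simulator step, because it crucially exploits the honest joint distribution of $(\mathbf{h}'_{i+1,j},E_{i+1,j})$ guaranteed by \thmref{expansion}, and the $\BV$/$\mathsf{PRG}$ replacements must follow the $\RTDH$ simulator step so that $\sk_\BV$ and $K$ are genuinely unused when we invoke their security.
\end{proofsketch}
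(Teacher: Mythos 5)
Your high-level plan --- unpack the $\RTDH$ simulator into its constituent pieces, replace each $\mathbf{w}_{i,j}$ once $\mathbf{h}'_{i+1,j}$ becomes pseudorandom, and push the $\GSW$/$\mathsf{PRG}$ replacements to the end --- is in the right spirit, but the ordering you choose for the hybrids does not go through, and this is exactly where the argument has to be careful. The flaw in your first block is the assertion that, when invoking \thmref{expansion} on the pair $(\mathbf{h}'_{i+1,j},E_{i+1,j})$, the encryption key $\mathbf{s}_{i+1,j}$ and authentication key $\mathbf{r}_{i+1,j}$ ``appear nowhere else in the experiment''. This is false at that point in your chain: $\mathbf{r}_{i+1,j}$ determines a block of $\mathbf{a}_{i+1}$, and $\mathbf{a}_{i+1}$ is encoded inside $\mathbf{b}_i\subseteq\overline{\ek}_i$, which is still honest during your first block (you only touch the $\mathbf{b}_i$'s in the second block). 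Likewise, for $j>0$ the encryption key is exactly the value encoded under $E_{i+1,j-1}$, and for $j=0$ it sits inside the MOLE ciphertext $C_{i+1}$, neither of which you have handled yet. Since the $\mathsf{CompExp}$ challenger of \thmref{expansion} keeps both $\mathbf{s}_0$ and $\mathbf{r}$ secret from the reduction, a reduction that does not know them cannot recreate $\mathbf{b}_i$, $C_{i+1}$, or $E_{i+1,j-1}$, so the transition cannot be reduced to \thmref{expansion}. This is precisely why the paper's proof of \thmref{RMSLFE} interleaves the $\MOLE$ step, the per-sublevel compression steps in increasing $j$, and the $\mathbf{b}$-to-uniform step in that order, and why the outer proof sweeps over $\iota=0,1,\dots,L$, calling $\RTDH.\Sim$ on level $\iota$ only after $\ek_{\iota-1}$ has already been simulated, so that $\mathbf{a}_\iota$ (hence $\{\mathbf{r}_{\iota,j}\}_j$) is no longer leaked through $\mathbf{b}_{\iota-1}$.

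Second, your placement of the $\mathbf{w}$-rewrite is wrong for a reason you half-notice but resolve incorrectly. You invoke the correctness analysis (equation~(\ref{eq45})) to replace $\mathbf{w}=\lceil\mathbf{w}_{L-1}+\mathsf{PRG}(\mathsf{seed})\rfloor_2$ by $f(\mathbf{x})\oplus\lceil\mathbf{w}'_{L-1}-\mathsf{PRG}(\mathsf{seed})\rfloor_2$, but you do so \emph{after} your first and second blocks, i.e.\ in a hybrid where $\ek_{L-1}$, the $\mathbf{w}_{i,j}$'s, the $C_i$'s and the $\mathbf{b}_i$'s are already simulated or uniform. In that distribution the decoder's $\mathbf{w}'_{L-1}$ no longer forms a noisy secret sharing of $\Delta\cdot f(\mathbf{x})$ with the encoder's $\mathbf{w}_{L-1}$ (which is computed from $\td_{L-1}$, now uncorrelated with the simulated $\ek_{L-1}$); the correctness analysis only holds for the honest game. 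The paper therefore performs this rewrite \emph{first} (its Hybrid $\mathcal{H}_1$), while every quantity is still honest, turning $\mathbf{w}$ into a deterministic function of $f(\mathbf{x})$ and the rest of the encoding, which is then carried unchanged through all later hybrids; the analogous rewrite of each $\mathbf{w}_{i,j}$ in terms of $\mathbf{h}'_{i+1,j}$ and decoder quantities (the paper's $\mathcal{H}_2$) must also precede the pseudorandomness steps for the same reason. You should restructure the hybrid chain accordingly.
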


\begin{proof}
We prove our claim by relying on a series of indistinguishable hybrids.

\begin{itemize}
\item Hybrid ${\mathcal{H}_0}$: This hybrid corresponds to the original game: we provide the adversary with a tuple $(\ct, \mathbf{z}, \mathsf{seed}, \{\mathbf{h}_{0, j}\}_{j\in[d]}, \{\overline{\ek}_i\}_{i\in[L]}, \{\mathbf{w}_{i, j}\}_{i\in[L-1], j\in[d]}, \mathbf{w})$ generated using $\LFE.\Enc(\pk, \mathbf d, \mathbf{x})$. 

\item Hybrid ${\mathcal{H}_1}$: In this hybrid, we change the distribution of $\mathbf{w}$: using $\pk, f$, $(\ct, \mathbf{z}, \mathsf{seed}, \{\mathbf{h}_{0, j}\}_{j\in[d]}, \{\overline{\ek}_i\}_{i\in[L]}, \{\mathbf{w}_{i, j}\}_{i\in[L-1], j\in[d]})$ and following the same operations as in the decoding procedure, we compute $\mathbf{w}'_{L-1}$. Then, we set $\mathbf{w}\gets f(\mathbf{x})\oplus \lceil\mathbf{w}'_{L-1}-\mathsf{PRG}(\mathsf{seed})\rfloor_2$.  

This hybrid is statistically indistinguishable from Hybrid ${\mathcal{H}_0}$ due to the correctness of the primitive.

\item Hybrid ${\mathcal{H}_2}$: In this hybrid, we change the distribution of $\mathbf{w}_{i, j}$ for every $i\in[L-1]$ and $j\in[d]$: using $\pk, f$, $(\ct, \mathbf{z}, \{\mathbf{h_{0, \beta}}\}_{\beta\in[d]}, \{\overline{\ek}_\gamma\}_{\gamma\in[i]}, \{\mathbf{w}_{\gamma, \beta}\}_{\gamma\in[i], \beta\in[d]})$ and following the same operations as in the decoding procedure, we compute $\mathbf{y}_{i, j}'$. Then, we set 
\begin{equation}\mathbf{w}_{i,j}\gets \mathbf{h}'_{i+1,j}+(\mathbf d'_{i+1}\otimes \mathbf G^\intercal)\mathbf r_{i+1,j} -(\id_{n\cdot k}\otimes \mathbf g_q^\intercal\otimes \id_{\log q})(\tilde{\mathbf{y}}_{i,j}\otimes \mathbf g_q)\label{eqhybrid2} \end{equation}
where $(\mathbf{d}_{i+1}', \psi_{i+1}')\gets \OTE.\Hash(\tpk, \mathbf{x}_{i+1})$ and $\mathbf{x}_{i+1}\gets (f'_i\circ \dots\circ f'_0)(\mathsf{Bits}(\ct), \mathbf{z})$. 

This hybrid is statistically indistinguishable from Hybrid ${\mathcal{H}_1}$. Indeed, $\mathbf{w}_{i, j}$ satisfies the relation in \eqref{eqhybrid2} even in ${\mathcal{H}_1}$. This is highlighted in Lemma \ref{claimboh}, specifically in \eqref{eqclaimboh}.

\item Hybrid ${\mathcal{H}_3}$: In this hybrid, for every $i\in[L]\setminus\{0\}$, we compute
\[(\ek_i, \td_i)\sample\RTDH.\Gen(\hk, \mathbf{x}_{i+1}, \mathbf{a}_{i+1}; \{\mathbf{r}_{i, j}\}_{j\in[d]})\]
where $\mathbf{x}_{i+1}\gets (f'_i\circ \dots\circ f'_0)(\mathsf{Bits}(\ct), \mathbf{z})$. Let $\ek_i=(C_i, \mathbf{b}_i, \{\mathbf{h}_{i, j}, E_{i, j}\}_{j\in[d]}, \mathsf{seed}_i)$. For every $i\in[L-1]$ and $j\in[d]$, we set
\[ \mathbf{w}_{i,j}\gets \mathbf{h}_{i+1,j} -(\id_{n\cdot k}\otimes \mathbf g_q^\intercal\otimes \id_{\log q})(\tilde{\mathbf{y}}_{i,j}\otimes \mathbf g_q). \]
We also argue that ${\mathcal{H}_2}$ is perfectly indistinguishable from Hybrid ${\mathcal{H}_3}$. Indeed, the only difference between the output of $\RTDH.\Gen(\hk, \mathbf{x}_{i+1}, \mathbf{a}_{i+1}; \{\mathbf{r}_{i, j}\}_{j\in[d]})$ and $\RTDH.\Gen(\hk, \mathbf{0}, \mathbf{a}_{i+1}; \{\mathbf{r}_{i, j}\}_{j\in[d]})$
is that, in the first case, the encoding $\mathbf{h}_{i, j}$ is ``shifted'' by $(\mathbf d'_{i+1}\otimes \mathbf G^\intercal)\mathbf r_{i+1,j}$ where $(\mathbf{d}_{i+1}', \psi_{i+1}')\gets \OTE.\Hash(\tpk, \mathbf{x}_{i+1})$ (see Construction \ref{prot:R1BGG}). In the second case, no shift is essentially applied. This is because, by the linearity of our OTE hashing (see Construction \ref{prot:HCVOLE} and Construction \ref{prot:SOTE}), $\OTE.\Hash(\tpk, \mathbf{0})$ outputs $\mathbf{0}$.

\item Hybrid ${\mathcal{H}_4^\iota}$: In this hybrid, for every $i<\iota$, we generate
\[\ek_i:=(C_i, \mathbf{b}_i, \{\mathbf{h}_{i, j}, E_{i, j}\}_{j\in[d]}, \mathsf{seed}_i)\sample\RTDH.\Sim(\bbbone^\sec, \hk)\]

We observe that Hybrid ${\mathcal{H}_3}$ is identical to ${\mathcal{H}_4^0}$. Moreover, by \thmref{RMSLFE}, for every $\iota\in[L-1]$, we have that ${\mathcal{H}_4^\iota}$ is computationally indistinguishable from ${\mathcal{H}_4^{\iota+1}}$. Notice that here we are implicitly relying on the fact that, in Hybrid ${\mathcal{H}_4^\iota}$, the tuple
$\{\ek_i\}_{i<\iota}$ no longer contains information about $\{\mathbf{r}_{\iota, j}\}_{j\in[d]}$.

\item Hybrid ${\mathcal{H}_5}$: In this hybrid, we sample $\pk_\BV\sample\Z_q^{k\times (k\cdot \log q+\sec)}$.

We argue that ${\mathcal{H}_4^L}$ is computationally indistinguishable from ${\mathcal{H}_5}$. Indeed, under LWE, the term $\mathbf{M}^\intercal\mathbf r+\mathbf e$ is indistinguishable from random. Here, we are implicitly relying on the fact that, in ${\mathcal{H}_4^L}$, the tuple $\{\ek_i\}_{i\in[L]}$ no longer contains information about $\mathbf{r}$.

\item Hybrid ${\mathcal{H}_6}$: In this hybrid, we generate $\ct\sample\Z_q^{k\times (\sec\cdot k\cdot \log q)}$.

Hybrid ${\mathcal{H}_6}$ is statistically indistinguishable from ${\mathcal{H}_5}$. Indeed, now $\pk_\BV$ is a uniformly random matrix, so we can apply the leftover hash lemma to argue that $\ct$ is indistinguishable from random.

\item Hybrid ${\mathcal{H}_7}$: In this hybrid, we sample $\mathbf{z}\sample\Z_2^{m}$.

Since $\ct$ no longer contains information about the PRG seed $K$, we can argue that ${\mathcal{H}_6}$ is computationally indistinguishable from ${\mathcal{H}_7}$ thanks to the security of the PRG.
\end{itemize}

 Notice that in $\mathbf{\mathcal{H}_7}$ all the material provided to the adversary can be computed by a simulator with no information about $\mathbf{x}$ except $f(\mathbf{x})$. This ends the proof of security.
\end{proof}

{\paragraph{Parameters.} Construction~\ref{prot:R1LFEforP}
uses a single input-succinct reverse-TDH hash, and therefore its public key and digest
size are
\[
\poly(\lambda,\log m,\log \ell,\log D).
\]
Moreover, the encoding consists of the masked input, the final output share,
and $L=\lceil D/\log d\rceil$ auxiliary encodings. Hence the total encoding
size is
\[
m+\ell+D\cdot \poly(\lambda,\log m,\log \ell,\log D).
\]
Thus, we obtain the following theorem.}

{
\begin{theorem}[Rate-1 Adaptively Secure LFE]
Assume the hardness of LWE, there exists an adaptively secure LFE for
depth-$D$ functions
$f:\{0,1\}^m\to\{0,1\}^\ell$
with public key and digest size $\poly(\lambda,\log m,\log \ell,\log D)$
and encoding size
$m+\ell+D\cdot \poly(\lambda,\log m,\log \ell,\log D)$.
\end{theorem}}
\section{A Counterexample for Adaptive LWE}\label{sec:counter}

We present a counterexample to a conjecture from \cite{FOCS:QuaWeeWic18}. The conjecture (adaptive LWE) says that the probability of that any polynomial-time attacker wins the following experiment is negligibly close to $1/2$.
\begin{itemize}
    \item The challenger samples random matrices $\{\mathbf{A}_i \sample \Z_q^{n\times m} \}_{i\in[k]}$ and sends them to the attacker.
    \item The attacker chooses an $x\in\{0,1\}^{k-1}$ and sends it to the challenger. Let $\hat{x} := (x, 0)$.
    \item The challenger samples an $\mathbf{s}\sample\Z_q^n$ and a bit $b\sample\{0,1\}$.
    \item If $b =0$ it computes:
    \[
    \{\mathbf{b}_i^\intercal := \mathbf{s}^\intercal(\mathbf{A}_i - \hat{x}_i\cdot \mathbf{G}) + \mathbf{e}_i^\intercal\}_{i\in[k]}
    \]
    where $\mathbf{e}_i \sample \chi(\lambda)$.
    \item If $b =1$ it samples $\{\mathbf{b}_i \sample \Z_q^m\}_{i\in[k]}$.
    \item The attacker wins if, given $\{\mathbf{b}_i\}_{i\in[k]}$, it correctly guesses $b$.
\end{itemize}
We show that, if $k$ is allowed to grow independently of the LWE parameters $(n,q,\chi)$, the assumption is false. This roughly corresponds to the \emph{optimistic parameter} settings proposed in \cite{FOCS:QuaWeeWic18}. We sketch the attack in the following. We assume for convenience that $q$ is a power of $2$, but the attack can be adapted to other moduli.

Let $\mathbf{A}:=(\mathbf{A}_0 \mathbin{\|}\dots\mathbin{\|}\mathbf{A}_{k-1})$. By the homomorphic properties of the lattice encodings, we know that there exists a matrix $\mathbf{H}$ with $\|\mathbf{H}\|_\infty =1$ such that, for any matrix $\mathbf{M}\in\Z_q^{n\times n}$, if $\mathbf{x}:=\mathsf{Bits}(\mathbf{M})$, we have that:
\[
(\mathbf{s}^\intercal(\mathbf{A} - \mathbf{x}\otimes  \mathbf{G}) + \mathbf{e}^\intercal)\cdot \mathbf{H}= \mathbf{s}^\intercal  (\mathbf{A}\cdot \mathbf{H}-\mathbf{M}) + \mathbf{e}^\intercal \cdot \mathbf{H}
\approx \mathbf{s}^\intercal (\mathbf{A}\cdot \mathbf{H}-\mathbf{M})
\]
since $\|\mathbf{e}^\intercal  \mathbf{H}\|_\infty \approx 0$.  
To recover the $i$-th most significant bit of (each component of) $\mathbf{s}$, it is sufficient to set $\mathbf{M}_i:=\mathbf{A}\cdot \mathbf{H}- 2^i\cdot \id_n$ and $\mathbf{x}_i := \mathsf{Bits}(\mathbf{M}_i)$. Then compute the (component-wise) most significant bit of:
\[
(\mathbf{s}^\intercal(\mathbf{A} - \mathbf{x}_i\otimes  \mathbf{G}) + \mathbf{e}^\intercal)\cdot \mathbf{H} \approx \mathbf{s}^\intercal (\mathbf{A}\cdot \mathbf{H}-\mathbf{M}_i) = 2^i\cdot\mathbf{s}^\intercal.
\]
We can then recover the entire secret key $\mathbf{s}$, by setting the input $\mathbf{x}$ to be the concatenation of $(\mathbf{x}_1, \dots, \mathbf{x}_{\log q})$. This shows that the conjecture is false if $k$ is allowed to be arbitrarily bigger than $n$. Notice in the provable parameter setting (where the encodings are secure under the \emph{subexponential} hardness of LWE), our attack fails as the bound on $k$ is too small to encode $\mathbf{M}_i:=\mathbf{A}\cdot \mathbf{H}- 2^i\cdot \id_n$.

\subsection*{Acknowledgments}

D.A.\ and G.M.\ are supported by the European Research Council through an ERC Starting Grant (Grant agreement No.~101077455, ObfusQation). G.M.\ is also funded by the Deutsche Forschungsgemeinschaft (DFG, German Research Foundation) under Germany's Excellence Strategy - EXC 2092 CASA – 390781972. This work was supported by Input Output (iog.io) through their funding of the Edinburgh Blockchain Technology Lab.
L.R.\ is supported by the European Research Council (ERC) under the European Union's Horizon 2020 research and innovation programme under grant agreement number 101124977 (DECRYPSIS) and the Danish Independent Research Council under Grant-ID DFF-0165-00107B (C3PO).%

\bibliographystyle{alpha}
\bibliography{abbrev0.bib,crypto.bib,references.bib}
\end{document}